%2multibyte Version: 5.50.0.2960 CodePage: 65001
\documentclass[11pt]{article}%
\usepackage{amsmath}
\usepackage{geometry}
\usepackage{amsfonts}
\usepackage{amssymb}
\usepackage{graphicx}
\usepackage{nopageno}%
\setcounter{MaxMatrixCols}{30}
%TCIDATA{OutputFilter=latex2.dll}
%TCIDATA{Version=5.50.0.2960}
%TCIDATA{Codepage=65001}
%TCIDATA{CSTFile=40 LaTeX article.cst}
%TCIDATA{Created=Thursday, November 02, 2017 16:43:39}
%TCIDATA{LastRevised=Tuesday, May 07, 2024 19:46:42}
%TCIDATA{<META NAME="GraphicsSave" CONTENT="32">}
%TCIDATA{<META NAME="SaveForMode" CONTENT="1">}
%TCIDATA{BibliographyScheme=Manual}
%TCIDATA{<META NAME="DocumentShell" CONTENT="Standard LaTeX\Blank - Standard LaTeX Article">}
%TCIDATA{Language=American English}
%TCIDATA{ComputeDefs=
%$U_{vh}\left(  \boldsymbol{c},\gamma_{v}\right)  -U_{vh}\left(  \boldsymbol{c}%
%,\gamma_{v}\right)  $
%1$Q_{v}\left(  \boldsymbol{c}^{\ast},\gamma_{v}^{\ast}\right)  -Q_{v}\left(
%\boldsymbol{c},\gamma_{v}\right)  =\left[  \left(  \boldsymbol{c}^{\ast
%},\gamma_{v}^{\ast}\right)  -\left(  \boldsymbol{c},\gamma_{v}\right)
%\right]  \partial Q_{v}\left(  \boldsymbol{\tilde{c}}^{\ast},\gamma_{v}^{\ast
%}\right)  $
%}
%BeginMSIPreambleData
\providecommand{\U}[1]{\protect\rule{.1in}{.1in}}
%EndMSIPreambleData
\setlength{\topmargin}{-0.5in}
\setlength{\textheight}{8.6in}
\setlength{\textwidth}{6.4in}

\newtheorem{theorem}{Theorem}
\pdfoutput=1

\newtheorem{corollary}{Corollary}

\newtheorem{lemma}{Lemma}

\newtheorem{proposition}{Proposition}
\newtheorem{remark}{Remark}

\setlength{\oddsidemargin}{0in}
\geometry{margin=1.00in,top=1.00in,bottom=1.00in,nohead}

\newenvironment{proof}[1][Proof]{\noindent\textbf{#1.} }{\ \rule{0.5em}{0.5em}}
\begin{document}

\title{Demand and Welfare Analysis in Discrete Choice Models with Social
Interactions\thanks{We are grateful to three anonymous referees, the editor,
and Steven Durlauf, James Heckman, Xenia Matschke, GautamTripathi, and seminar
participants at the University of Chicago and the University of Luxembourg for
helpful feedback. Bhattacharya acknowledges financial support from the ERC
consolidator grant EDWEL; the first outline of this project appeared as part
b.3 of that research proposal of March 2015. Part of this research was
conducted while Kanaya was visiting the Kyoto Institute of Economic Research,
Kyoto University (under the Joint Research Program of the KIER), the support
and hospitality of which are gratefully acknowledged.}}
\author{DEBOPAM BHATTACHARYA\thanks{Address for correspondence: Faculty of Economics,
University of Cambridge, CB3 9DD. email: debobhatta@gmail.com}\\\textit{University of Cambridge}
\and PASCALINE DUPAS\\\textit{Stanford University}
\and SHIN KANAYA\\\textit{University of Essex and Kyoto University}}
\date{December 18, 2022}
\maketitle

\begin{abstract}
Many real-life settings of individual choice involve social interactions,
causing targeted policies to have spillover effects. This paper develops novel
empirical tools for analyzing demand and welfare effects of policy
interventions in binary choice settings with social interactions. Examples
include subsidies for health product adoption and vouchers for attending a
high-achieving school. We show that even with fully parametric specifications
and unique equilibrium, choice data, that are sufficient for counterfactual
\emph{demand} prediction under interactions, are \emph{insufficient} for
welfare calculations. This is because distinct underlying mechanisms producing
the same interaction coefficient can imply different welfare effects and
deadweight-loss from a policy intervention. Standard index restrictions imply
distribution-free bounds on welfare. We propose ways to identify and
consistently estimate the structural parameters and welfare bounds allowing
for unobserved group effects that are potentially correlated with observables
and are possibly unbounded. We illustrate our results using experimental data
on mosquito-net adoption in rural Kenya.

\end{abstract}

\section{INTRODUCTION}

Social interaction models -- where an individual's payoff from an action
depends on aggregate choice -- feature prominently in economic and
sociological research. In this paper, we address a substantively important
issue that has received limited attention within these literatures, viz., how
to conduct welfare analysis of economic policy intervention in such settings.
Examples include subsidies for adopting a health product and merit-based
vouchers for attending a high-achieving school, where the welfare gain of
beneficiaries may be accompanied by spillover-led welfare effects on those
unable to adopt or move, respectively. Ex-ante welfare analysis of policies is
ubiquitous in economic applications, and informs the practical decision of
whether to implement the policy in question. Furthermore, common public
interventions such as taxes and subsidies are often motivated by efficiency
losses resulting from externalities. Therefore, it is important to develop
empirical methods for welfare analysis in presence of such externalities,
which cannot be done using available tools in the literature. Developing such
methods and making them practically relevant also requires one to clarify and
extend some aspects of existing empirical models of social interaction.

\noindent\textbf{Literature Review and Contributions}: Seminal contributions
to the econometrics of social interactions include Manski (1993) for
continuous outcomes, and Brock and Durlauf (2001) (henceforth, BD01) for
binary outcomes. Bisin, Moro and Topa (2011) discuss some issues related to
identification and estimation of structural parameters in choice models with
social interaction and multiple equilibria but do not cover welfare analysis.
More recently, there has been a surge of research on the related theme of
network models, cf. de Paula (2017) who provides a comprehensive review of the
relevant literature. On the other hand, the econometric analysis of welfare in
standard discrete choice settings, i.e., with heterogeneous consumers but
without social spillover, started with Domencich and McFadden (1977), with
later contributions by Daly and Zachary (1978), Small and Rosen (1981), and
Bhattacharya (2018). The present paper builds on these two separate
literatures to examine how social interactions influence welfare effects of
policy interventions and the identifiability of such welfare effects from
standard choice data. In the context of a logit binary choice model with
social interactions, BD01 Sec 3.3 equations (16) and (17) discussed how to
infer the sign of the differences between expected ex ante (indirect) utility
at each possible equilibria resulting from the policy intervention being
studied. This differs from the average of individual compensating variations
that restore realized individual utilities to their pre-intervention level
which is a money metric, unlike the BD01 measure, and hence can be directly
compared with the cost of the intervention, yielding a theoretically justified
measure of deadweight loss. Consequently, this measure has received the most
attention in the recent literature on applied welfare analysis, cf. Hausman
and Newey (2016), Bhattacharya (2015), McFadden and Train (2019). However, in
settings involving spillover, we cannot use the methods of the above papers,
as they do not allow for individual utilities to be affected by aggregate
choices -- a feature that has fundamental implications for welfare analysis.
Therefore, new methods are required for welfare calculations under spillover,
which we develop in the present paper.

Our starting point is a theoretically coherent empirical model where many
individuals with some observed and some unobserved attributes interact with
each other to produce the aggregate choice in equilibrium before and after the
policy intervention. Individual choice data can be used to estimate
identifiable parameters of this model in BD01, which can then be used to
predict counterfactual \emph{demand}, i.e. equilibrium choice probabilities
resulting from a hypothetical price intervention, e.g., a price subsidy.
However, we show that unlike counterfactual \emph{demand} estimation,
\emph{welfare} effects are generically not identified from standard choice
data under interactions, even when utilities and the distribution of
unobserved heterogeneity are parametrically specified, equilibrium is unique,
and there are no endogeneity concerns. To understand the heuristics behind
under-identification, consider the empirical example of evaluating the welfare
effect of subsidizing an anti-malarial, insecticide-treated mosquito net.
Suppose, under suitable restrictions, we can model choice behavior in this
setting via a Brock-Durlauf type social interaction model, and the data can
identify the coefficient on the social interaction term. However, this
coefficient may reflect an aggregate effect of several distinct mechanisms,
viz. (a) a social preference for conforming, (b) learning from others'
experiences, (c) a health-concern led desire to protect oneself from
mosquitoes deflected from neighbors who adopt a bednet, and (d) desire to
free-ride on other users who increase herd-immunity by protecting themselves
and/or protect neighbors via the insecticide effects. These distinct
mechanisms, with different magnitudes in general, can make the social
interaction coefficient positive, but are not separately identifiable from
choice data (only their sum is). But they have different implications for
welfare if, say, a subsidy is introduced. In particular, if spillovers are all
due to preference for social conforming or learning and there is no
(perceived) health externality, then as more neighbors buy, a household's
perceived utility from buying will increase over and above the gain due to
price reduction. At the other extreme, if spillovers are solely due to
perceived negative health externality of buyers on non-buyers, then increased
purchase by neighbors would lower the utility of a household upon \emph{not}
buying via the health-route, but not affect it upon buying since the household
is then protected anyway. These different aggregate welfare effects are both
consistent with the same positive aggregate social interaction coefficient.
This conclusion continues to hold even if eligibility for the subsidy is
universal and there are no income effects or endogeneity concerns.

This feature is present in many other choice situations that economists
routinely study. For example, merit-based school vouchers for attending a
high-achieving school can potentially have a range of possible welfare
effects. Aggregate welfare change could be negative if, for example, with
high-ability children moving with the voucher the academic quality declines in
the resource-poor schools more than the improvement in the selective school
via peer effects. In the absence of such negative externalities, aggregate
welfare could be positive due to the subsidy-led price decline for voucher
users and any positive conforming effects that raise the utility of attending
the high-achieving school when more high-ability children also do so. These
contradictory welfare implications are compatible with the same positive
coefficient on the social interaction term in an individual school choice model.

For standard discrete choice without spillover, Bhattacharya (2015) showed
that the choice probability function itself contains \emph{all} the
information required for exact welfare analysis. For the special case of
quasilinear random utility models with extreme value errors, the popular
`logsum' formula of Small and Rosen (1981) yields average welfare of policy
interventions. These results fail to hold in a setting with spillovers because
here one \emph{cannot} set\ the utility from the outside option to zero -- an
innocuous normalization in standard discrete choice models -- since this
utility changes as the equilibrium choice-rate changes with the policy
intervention. This is in contrast to binary choice \textit{without} spillover,
where utility from the outside option, i.e., non-purchase, does not change due
to a price change of the inside good.

Nonetheless, under a standard, linear-index specification of utilities, one
can calculate distribution-free bounds on average welfare, based solely on
choice probability functions. The width of the bounds increases with (i) the
extent of net social spillover, i.e. how much the (belief about) average
neighborhood choice affects individual choice probabilities, and (ii) the
difference in average peer-choice corresponding to realized equilibria before
and after the price change. The index structure, which has been universal in
the empirical literature on social interactions, leads to dimension reduction
that helps identify spillovers effects. We therefore continue to use the index
structure as it simplifies our expressions, and comes \textquotedblleft for
free\textquotedblright, because social spillovers cannot in general be
identified without such structure anyway. Under stronger and untestable
restrictions on the nature of spillover, our bounds can shrink to a singleton,
implying point-identification of welfare. Two such restrictions are (a) the
effects of an increase in average peer-choice on individual utilities from
buying and not buying are exactly equal in magnitude and opposite in sign, or
(b) the effect of aggregate choice on either the purchase utility or the
non-purchase utility is zero.

A separate identification problem arises when there are, in addition to social
interaction, unobserved group-effects that are potentially correlated with
observed individual covariates. We address this problem through a novel latent
factor structure on the relevant variables, and developing a method of
asymptotic analysis where the dimension of parameters, i.e. the group-effects
whose magnitude may be unbounded, increases as the number of groups increase.

\noindent\textbf{Empirical Illustration}: We illustrate our theoretical
results with an empirical example of a hypothetical, targeted public subsidy
scheme for anti-malarial bednets. In particular, we use micro-data from a
pricing experiment in rural Kenya (Dupas, 2014) to estimate an econometric
model of demand for bednets, where spillovers can arise via different
channels, including a preference for conformity and perceived negative
externality arising from neighbors' use of a bednet. In this setting, we
calculate predicted effects of hypothetical income-contingent subsidies on
bednet demand and welfare. We perform these calculations by first accounting
for social interactions, and then compare these results with what would be
obtained if one had ignored these interactions. We find that allowing for
(positive) interaction leads to a prediction of lower demand when means-tested
eligibility is restricted to fewer households and higher demand when the
eligibility criterion is more lenient, relative to ignoring interactions. To
illustrate, consider a policymaker debating whether to expand eligibility for
the subsidy from 40\% to 60\% of the population. In the presence of conforming
effects, the increase in eligibility will spur more non-eligible to adopt,
such that the total demand with spillovers are \textit{larger} than without
spillovers. Conversely, if eligibility was cut from 40\% to 20\%, the drop in
adoption would be magnified by conforming effects, such that the total demand
with spillovers are \textit{lower} than without spillovers.\footnote{The
intuition can be understood via a simple example. Suppose the true regression
model is $y=\beta_{0}+\beta_{1}x+u$, where $\beta_{1}>0$. Suppose, to predict
$y$ at a value $x_{0}$ of $x$, we ignore the covariate and simply use $\bar
{y}$ as the prediction. If $x_{0}<\bar{x}$, then the naive prediction $\bar
{y}=$ $\beta_{0}+\beta_{1}\bar{x}$ will be larger than the true value
$\beta_{0}+\beta_{1}x_{0}$, whereas if $x_{0}>\bar{x}$, then the naive
prediction $\bar{y}$ will be smaller than the true value $\beta_{0}+\beta
_{1}x_{0}$.} As for welfare, allowing for social interactions may lead to a
welfare \emph{loss} for ineligible households, in turn implying higher
deadweight loss from the subsidy scheme, relative to estimates obtained
ignoring social spillovers where welfare effects for ineligibles are zero by
definition. The resulting net welfare effect, aggregated over both eligibles
and ineligibles, admits a large range of possible values including both
positive and negative ones, with associated large variation in the implied
deadweight loss estimates, all of which are consistent with the \emph{same}
coefficient on the social interaction term in the choice probability function.

An implication of these results for applied work is that welfare analysis
under spillovers effects requires knowledge of the different channels of
spillovers separately, possibly via conducting a `belief elicitation' survey
where subjects are asked the reasons for their actions; knowledge of only the
choice probability functions, inclusive of a social interaction term, is
insufficient.\smallskip

\noindent\textbf{Plan of the Paper}: The rest of the paper is organized as
follows. Section \ref{sec: Set-up} describes the set-up, Section
\ref{sec: Welfare} develops the tools for empirical welfare analysis of a
price intervention in such models, and associated deadweight loss
calculations. Section \ref{sec: Stochastic Environment} specifies the
stochastic environment and derives the convergence of equilibrium beliefs
under I.I.D. unobservables. Section \ref{sec: Estimators} establishes
consistency of our estimator. Section \ref{sec: Emprical-context-data},
describes the context of our empirical application and the data; Section
\ref{sec: Empirical-specification-result} describes the empirical results;
Section \ref{sec: conclusion} summarizes and concludes. Technical derivations,
formal proofs and additional results are collected in an Appendix.

\section{SET-UP\label{sec: Set-up}}

Consider a population of villages indexed by $v\in\left\{  1,\dots,\bar
{v}\right\}  $ and resident households in village $v$ indexed by $\left(
v,h\right)  $, with $h\in\left\{  1,\dots,N_{v}\right\}  $. For the purpose of
inference discussed later, we will think of these households as a random
sample drawn from an infinite superpopulation. The total number of households
we observe is $N=\sum_{v=1}^{\bar{v}}N_{v}$. Each household faces a binary
choice between buying one unit of an indivisible good (alternative $1$) or not
buying it (alternative $0$). Its utilities from the two choices are given by
$U_{1}(Y_{vh}-P_{vh},\Pi_{vh},\boldsymbol{\eta}_{vh})$ and $U_{0}(Y_{vh}%
,\Pi_{vh},\boldsymbol{\eta}_{vh})$ where the variables $Y_{vh}$, $P_{vh}$, and
$\boldsymbol{\eta}_{vh}$ denote respectively the income, price, and
heterogeneity of household $(v,h)$, and $\Pi_{vh}$ is household $\left(
v,h\right)  $'s subjective belief of what fraction of households in her
village would choose to buy. The variable $\boldsymbol{\eta}_{vh}$ is
privately observed by household $\left(  v,h\right)  $ but is unobserved by
the econometrician and other households. The dependence of utilities on
$\Pi_{vh}$ captures social interactions. Below, we will specify how $\Pi_{vh}$
is formed. Household $(v,h)$'s choice is described by%
\begin{equation}
A_{vh}=1\left\{  U_{1}\left(  Y_{vh}-P_{vh},\Pi_{vh},\boldsymbol{\eta}%
_{vh}\right)  \geq U_{0}\left(  Y_{vh},\Pi_{vh},\boldsymbol{\eta}_{vh}\right)
\right\}  , \label{eq: outcome-variable}%
\end{equation}
where $1\left\{  \cdot\right\}  $ denotes the indicator function. In the
mosquito-net example of our application, one can interpret $U_{1}$ and $U_{0}$
as expected utilities resulting from differential probabilities of contracting
malaria from using and not using the net, respectively.

The utilities, $U_{1}$ and $U_{0}$, may also depend on other covariates of
$(v,h)$. For notational simplicity, we will occasionally write $W_{vh}%
=(Y_{vh},P_{vh})$\footnote{All vectors are defined as row vectors.}, and
suppress other covariates for now; additional covariates are used in our
empirical implementation in Section \ref{sec: Empirical-specification-result}.

\section{WELFARE\ ANALYSIS\label{sec: Welfare}}

We now lay out the empirical framework for welfare analysis of policy
interventions under spillovers. We will assume spillovers are restricted to
the village where households reside, hence welfare effects of a policy
intervention can be analyzed village by village; so for economy of notation,
we drop the $\left(  v,h\right)  $ subscripts except when we account
explicitly for village-effects during estimation. Also, we use the same
notation $\pi$ to denote both individual beliefs $\Pi_{vh}$ entering
individual utilities, and the unique equilibrium belief about village take-up
rate entering the average demand function. The assumption of a constant
(within village) $\pi$ is justified via Proposition 1 and Proposition 2 in
Section \ref{sec: eqm-beliefs}.

In the welfare results derived below, all probabilities and expectations --
e.g., mean welfare loss -- are calculated with respect to the \emph{marginal}
distribution of aggregate unobservables, denoted by $\boldsymbol{\eta}$. In
this sense, they are analogous to `average structural functions' (ASF),
introduced by Blundell and Powell (2004). Later, when discussing estimation of
the ASF, together with the implied pre- and post-intervention aggregate choice
probabilities and average welfare in Section \ref{sec: Group-specific}, we
will allude to village-effects explicitly, and show how they are estimated and
incorporated in demand and welfare predictions\textbf{.}

Define $q_{1}\left(  p,y,\pi\right)  $ to be the \emph{structural} probability
(i.e. average structural function; ASF) of a household choosing option $1$
(e.g., buying mosquito-net) when it faces a price of $p$, has income $y$ and
belief $\pi$:%
\begin{equation}
q_{1}\left(  p,y,\pi\right)  =\int1\left\{  U_{1}\left(  y-p,\pi
,\boldsymbol{s}\right)  >U_{0}\left(  y,\pi,\boldsymbol{s}\right)  \right\}
dF_{\boldsymbol{\eta}}\left(  \boldsymbol{s}\right)  \text{,} \label{7}%
\end{equation}
where $F_{\boldsymbol{\eta}}$ is the cumulative distribution function (CDF) of
$\boldsymbol{\eta}$. This probability can be estimated via the conditional
probability of purchase given covariates when household level unobservables
are uncorrelated with the covariates, as will be assumed in our application.
The reason for focusing on the ASF, rather than the purchase probability
conditional on covariates is that ultimately, we will be interested in the
marginal distribution of welfare in a village resulting from a
\textit{potential} price-intervention, e.g. a means-tested subsidy, which is
counterfactual.\footnote{Expressing our results in terms of ASFs help clarify
that in general, the object of interest is one, whose identification and
consistent estimation may require non-experimental methods if the data at hand
are observational. Also, later in the paper, we will allude to village level
unobservables i.e. $\eta_{vh}=\xi_{v}+u_{vh}$, where $\xi_{v}$ is a village
specific unobservable variable (introduced in Section
\ref{sec: Stochastic Environment}). In that context, the object of interest
will be the marginal distribution of welfare in each village; thus the
relevant distribution to used to compute the ASF will be that of $u_{vh}$
given village specific variables.}

\noindent\textbf{Linear Index Structure}: We now specify the forms of the
utility functions. Given a moderate/small number of large peer groups (e.g.,
there are eleven large villages in our application dataset), it is not easy to
consistently estimate the impact of the belief $\Pi_{vh}$ on the choice
probability function \emph{nonparametrically} holding other regressors
constant.\footnote{This is because $\Pi_{vh}$ is a constant within a village
(as discussed in Section \ref{sec: eqm-beliefs}). In particular, the fixed
point constraint, which is a notable feature of the social interaction model,
does not help because of dimensionality problems. Indeed, in the fixed point
condition: $\pi=\int q_{1}\left(  p,y,\pi\right)  dF_{P,Y}\left(  p,y\right)
$, where the joint CDF\ $F_{P,Y}\left(  p,y\right)  $ of $\left(  P,Y\right)
$ is identified, the unknown function $q_{1}\left(  p,y,\pi\right)  $ has more
arguments than the identified $F_{P,Y}\left(  p,y\right)  $.} Accordingly,
following Manski (1993), and Brock and Durlauf (2001, 2007), we assume a
linear index structure with $\boldsymbol{\eta}=(\eta^{0},\eta^{1})$ viz. the
utilities are given by%
\begin{equation}%
\begin{array}
[t]{c}%
U_{0}\left(  y,\pi,\boldsymbol{\eta}\right)  =\delta_{0}+\beta_{0}y+\alpha
_{0}\pi+\eta^{0}\text{,}\\
U_{1}\left(  y-p,\pi,\boldsymbol{\eta}\right)  =\delta_{1}+\beta_{1}\left(
y-p\right)  +\alpha_{1}\pi+\eta^{1}\text{,}%
\end{array}
\label{A}%
\end{equation}
where we assume that $\beta_{0}>0$, $\beta_{1}>0$, i.e., non-satiation in
numeraire, and $\beta_{1}$ need not equal $\beta_{0}$, i.e. income effects can
be present.\footnote{We can also allow for concave income effects by
specifying, say,%
\begin{align*}
U_{0}\left(  y,\pi,\eta\right)   &  =\delta_{0}+\beta_{0}\ln y+\alpha_{0}%
\pi+\eta^{0}\text{,}\\
U_{1}\left(  y-p,\pi,\eta\right)   &  =\delta_{1}+\beta_{1}\ln\left(
y-p\right)  +\alpha_{1}\pi+\eta^{1}\text{,}%
\end{align*}
but we wish to keep the utility formulation as simple as possible to highlight
the complications in welfare calculations even in the simplest linear utility
specification.}

In our empirical setting of anti-malarial bednet (Insecticide-Treated Net;
ITN, henceforth) adoption, there are multiple potential sources of
interactions (i.e. $\alpha_{1},\alpha_{0}\neq0$). The first is a pure
preference for conforming; the second is increased awareness of the benefits
of a bednet when more villagers use it; the third is the perceived health
externality. The medical literature suggests that the \emph{technological}
health externality is positive, i.e. as more people are protected, the lower
is the malaria burden, but the \emph{perceived} health externality can be
negative if households believe that other households' bednet use deflects
mosquitoes to unprotected households, but ignore the fact that those deflected
mosquitoes are less likely to carry the parasite. Indeed, the implications for
adoption are different: under the positive health externality, one would
expect free-riding, hence a negative effect of others' adoption on own
adoption; under the negative health externality, the correlation would be positive.

In particular, let $\gamma_{p}$ denote the conforming plus learning effect,
and $\gamma_{H}$ denote the health externality. Then it is reasonable to
assume that $\alpha_{1}\equiv\gamma_{p}\geq0$, while $\alpha_{0}\equiv
\gamma_{H}-\gamma_{p}$ could be either negative or positive. It is natural
that the conforming/learning/peer effect $\gamma_{p}$ affects utilities from
buying and non-buying symmetrically, i.e. if $\Pi$ changes from $0$ to $\pi$
the resulting change in the utility (relative to when $\Pi$ was $0$) from
buying and the one from not buying are symmetric and of opposite sign, as is
also assumed in BD01, BD07. Further, if a household uses an ITN, then there is
no health externality from the neighborhood adoption rate since the household
is protected anyway,\footnote{We can allow for a smaller health externality,
say $\gamma_{h}<\gamma_{H}$ when one adopts the bednet. But this does not
change the fundamental point about the asymmetric effect of $\pi$ on the
utility from buying and from not buying. So we avoided adding this to save on
notation.} but if it does not adopt, then there is a net health externality
effect $\gamma_{H}$ from neighborhood use, which makes the overall effect
$\alpha_{0}=\gamma_{H}-\gamma_{p}$ and in general, there is no exact
relationship between $\alpha_{0}=$ $\gamma_{H}-\gamma_{p}$ \ and $\alpha
_{1}=\gamma_{p}$.\footnote{An analogous asymmetry is also likely in the school
voucher example mentioned in the introduction if the voucher-led `brain-drain'
leads to utility gains and losses of different amounts, e.g., if better
teaching resources in the high-achieving school substitute for -- or
complement -- peer effects in a way that is not possible in the resource-poor
local school.} Accordingly, we first assume that the perceived \textit{net}
\textit{health} externality is non-positive, and thus $\alpha_{1}\geq
0\geq\alpha_{0}$, and derive welfare results. In the next subsection, we
present the results under the case $\alpha_{1}\geq\alpha_{0}\geq0$. Note that
the sign of $\alpha=\alpha_{1}-\alpha_{0}$ is identified, and is positive in
our data, which rules out $\alpha_{0}>\alpha_{1}\geq0$. In the application, we
present the bounds separately for $\alpha_{1}\geq0\geq\alpha_{0}$ and
$\alpha_{1}\geq\alpha_{0}\geq0$, and then the union of these.

Given the linear index specification, the structural choice probability of
buying at $\left(  p,y,\pi\right)  $ is given by%
\begin{equation}
q_{1}\left(  p,y,\pi\right)  =F(\underset{\delta_{1}-\delta_{0}%
}{\underbrace{c_{0}}}+\underset{-\beta_{1}}{\underbrace{c_{1}}}%
p+\underset{\beta_{1}-\beta_{0}}{\underbrace{c_{2}}}y+\underset{\alpha
_{1}-\alpha_{0}}{\underbrace{\alpha}}\pi)\text{,} \label{14}%
\end{equation}
where $F\left(  \cdot\right)  $ denotes the marginal distribution function of
$\eta^{0}-\eta^{1}$. It is known from Brock and Durlauf (2007) that the
structural choice probabilities $F\left(  c_{0}+c_{1}p+c_{2}y+\alpha
\pi\right)  $ identify $c_{0},c_{1},c_{2}$ and $\alpha$, i.e. $\left(
\delta_{1}-\delta_{0}\right)  $, $\beta_{0}$, $\beta_{1}$ and $\left(
\alpha_{1}-\alpha_{0}\right)  =2\gamma_{p}-\gamma_{H}$, up to scale even
without knowledge of the probability distribution of $\eta^{0}-\eta^{1}$. In
the application, we will consider two different estimates of the choice
probabilities, first ignoring village-specific unobservables and using a
standard probit, and then allowing for village-fixed unobservables using a
variant of correlated random effects.\smallskip

The distinct presence of $\alpha_{1},\alpha_{0}$ makes the model different
from standard demand models for binary choice. In the standard case, for the
so-called \textquotedblleft outside option\textquotedblright, i.e. not buying,
the utility is normalized to zero. In a social spillovers setting, this cannot
be done because that utility depends on the aggregate purchase rate $\pi$. As
we will see below, in welfare evaluations of a subsidy, $\alpha_{1}$ and
$\alpha_{0}$ appear separately in the expressions for welfare-distributions,
but cannot be separately identified from demand data, which can only identify
$\alpha\equiv\alpha_{1}-\alpha_{0}$. As a result, point-identification of
welfare will in general not be possible. Below, we will consider some
\emph{untestable} special cases, under which one obtains point-identification,
e.g., with $\alpha_{1}\geq0\geq\alpha_{0}$, the interesting special cases are
(i) $\alpha_{1}=\alpha/2=-\alpha_{0}$ (i.e. $\gamma_{H}=0$: no health
externality and symmetric spillover), which is considered in BD01 for social
welfare analysis, (ii) $\alpha_{1}=\alpha,$ $\alpha_{0}=0$ (i.e. $\gamma
_{H}=\gamma_{p}$: technological health externality dominates deflection
channel and net health externality exactly offsets conforming effect), and
(iii) $\alpha_{1}=0$, $\alpha_{0}=-\alpha$ ($\gamma_{p}=0$ and $\gamma
_{H}=-\alpha$: no conforming effect and deflection channel dominates). Cases
(ii) and (iii) will yield respectively the upper and lower bounds on welfare
gain for the case $\alpha_{1}\geq0\geq\alpha_{0}$. Analogously for the case
$\alpha_{1}\geq\alpha_{0}\geq0$.\smallskip

\noindent\textbf{Policy Intervention and Welfare Expressions}: We start with a
situation where the price of the product is $p_{0}$ and the value of $\pi$ is
$\pi_{0}$. Now suppose a price subsidy is introduced such that individuals
with income less than a threshold $\tau$ become eligible to buy the product at
price $p_{1}<p_{0}$. This policy will alter the equilibrium adoption rate;
suppose the new equilibrium adoption rate changes to $\pi_{1}$, where $\pi
_{0}$ and $\pi_{1}$, solve the fixed point conditions:%
\begin{align}
\pi_{0}  &  =\int F\left(  c_{0}+c_{1}p_{0}+c_{2}y+\alpha\pi_{0}\right)
dF_{Y}\left(  y\right)  \text{,}\label{c}\\
\pi_{1}  &  =\int\left[  1\left\{  y\leq\tau\right\}  F\left(  c_{0}%
+c_{1}p_{1}+c_{2}y+\alpha\pi_{1}\right)  +1\left\{  y>\tau\right\}  F\left(
c_{0}+c_{1}p_{0}+c_{2}y+\alpha\pi_{1}\right)  \right]  dF_{Y}\left(  y\right)
\text{,} \label{d}%
\end{align}
where $F_{Y}$ is the CDF of $Y_{vh}$, and $F$ is defined in (\ref{14}). Since
the price coefficient $c_{1}<0$ and $p_{1}>p_{0}$, therefore if $\alpha>0$, we
have that%
\begin{align*}
&  F\left(  c_{0}+c_{1}p_{0}+c_{2}y+\alpha\pi\right) \\
&  \leq\left[  1\left\{  y\leq\tau\right\}  F\left(  c_{0}+c_{1}p_{1}%
+c_{2}y+\alpha\pi\right)  +1\left\{  y>\tau\right\}  F\left(  c_{0}+c_{1}%
p_{0}+c_{2}y+\alpha\pi\right)  \right]
\end{align*}
for each $\pi$, and therefore, the integrand of (\ref{c}) is smaller for every
$\pi$ than the integrand of (\ref{d}). If the solutions to (\ref{c}) and
(\ref{d}) are unique, then the value of $\pi$ at which (\ref{c}) holds must be
smaller than the value of $\pi$ where (\ref{d}) holds. So we shall get
$\pi_{1}>\pi_{0}$. This is borne out in our application where sufficient
conditions on $\alpha$ for a contraction are satisfied. Under multiple
solutions, we can at least say that if $p_{1}<p_{0}$, the smallest solution
$\pi_{1}$\ to (\ref{d})\ is greater than the smallest solution $\pi_{0}$\ to
(\ref{c}).

For given values of $\pi_{0}$ and $\pi_{1}$, we now derive expressions for
welfare resulting from the intervention. By \textquotedblleft
welfare\textquotedblright\ we mean the compensating variation (CV), viz. what
hypothetical income compensation would restore the post-change indirect
utility for an individual to its pre-change level. For a
subsidy-\textit{eligible} individual, for any potential value of $\pi_{1}$
corresponding to the new equilibrium, the individual compensating variation is
the solution $S$ to the equation%
\begin{equation}
\max\left\{  U_{1}\left(  y+S-p_{1},\pi_{1},\boldsymbol{\eta}\right)
,U_{0}\left(  y+S,\pi_{1},\boldsymbol{\eta}\right)  \right\}  =\max\left\{
U_{1}\left(  y-p_{0},\pi_{0},\boldsymbol{\eta}\right)  ,U_{0}\left(  y,\pi
_{0},\boldsymbol{\eta}\right)  \right\}  \text{,} \label{a}%
\end{equation}
whereas for a subsidy-\textit{ineligible} individual, it is the solution $S$
to%
\begin{equation}
\max\left\{  U_{1}\left(  y+S-p_{0},\pi_{1},\boldsymbol{\eta}\right)
,U_{0}\left(  y+S,\pi_{1},\boldsymbol{\eta}\right)  \right\}  =\max\left\{
U_{1}\left(  y-p_{0},\pi_{0},\boldsymbol{\eta}\right)  ,U_{0}\left(  y,\pi
_{0},\boldsymbol{\eta}\right)  \right\}  \text{.} \label{b}%
\end{equation}
Thus we interpret the CV as measuring utility changes via the value of
hypothetical income compensation that would restore utilities to their initial
level.\footnote{Note that we do not take account of peer effects of this
hypothetical income compensation, which might be an alternative way to define
the CV.} Now, since $S$ depends on the unobservables $\boldsymbol{\eta}$, the
same price change will produce a \textit{distribution} of welfare effects
across individuals; we are interested in calculating that distribution and its
functionals such as mean welfare.\medskip

The welfare effect of the subsidy can be calculated as described below.

\subsection{Welfare Calculation under $\alpha_{1}\geq0>\alpha_{0}$}

Recall that $\alpha_{1}=\gamma_{p}\geq0$, $\alpha_{0}=\gamma_{H}-\gamma_{p}$;
thus $\alpha_{1}\geq0>\alpha_{0}$ corresponds to the case where either
$\gamma_{H}<0$, i.e., deflection effect dominates positive health effect in
perception, or is positive but smaller than conforming/learning effect.

\noindent\textbf{Welfare for Eligibles (}$\alpha_{1}\geq0>\alpha_{0}%
$\textbf{): }The CV for a subsidy-eligible household is given by the solution
$S$ to%
\begin{align}
&  \max\left\{  \delta_{1}+\beta_{1}\left(  y+S-p_{1}\right)  +\alpha_{1}%
\pi_{1}+\eta^{1},\delta_{0}+\beta_{0}\left(  y+S\right)  +\alpha_{0}\pi
_{1}+\eta^{0}\right\} \nonumber\\
&  =\max\left\{  \delta_{1}+\beta_{1}\left(  y-p_{0}\right)  +\alpha_{1}%
\pi_{0}+\eta^{1},\delta_{0}+\beta_{0}y+\alpha_{0}\pi_{0}+\eta^{0}\right\}  .
\label{g}%
\end{align}
The resulting solution $S$ depends on the unobservable heterogeneity $\eta
^{0}$ and $\eta^{1}$ and hence we are interested in deriving its distribution
and functionals thereof such as mean welfare. Calculating the welfare
distribution requires us to compute the CDF of $S$, i.e. $\Pr\left(  S\leq
a\right)  $ for various values of $a$ (for given $(p_{0},\pi_{0},p_{1},\pi
_{1})$). Let $f_{\eta^{0}-\eta^{1}}\left(  \cdot\right)  $ denote the marginal
density function of $\eta^{0}-\eta^{1}$. Then the expression for the CDF of
welfare is as follows:

\begin{theorem}
\label{th: elig}Suppose the linear index structure described above holds with
$\beta_{0}>0$, $\beta_{1}>0$ and $\alpha_{1}\geq0\geq\alpha_{0}$ with
$\alpha=\alpha_{1}-\alpha_{0}$ satisfying $\left\vert \alpha\right\vert
\sup_{e\in\mathbb{R}}f_{\eta^{0}-\eta^{1}}\left(  e\right)  <1$. Then $\pi
_{1}>\pi_{0}$, and the distribution of compensating variation for the
eligibles, $S=S^{\mathrm{Elig}}$, is given by%
\begin{align}
&  \Pr\left(  S^{\mathrm{Elig}}\leq a\right) \nonumber\\
&  =\left\{
\begin{array}
[c]{ll}%
0\text{,} & \text{if }a<p_{1}-p_{0}-\frac{\alpha_{1}}{\beta_{1}}\left(
\pi_{1}-\pi_{0}\right)  \text{,}\\
q_{1}\left(  p_{1}-a,y,\pi_{0}+\frac{\alpha_{1}}{\alpha}\left(  \pi_{1}%
-\pi_{0}\right)  \right)  \text{,} & \text{if }p_{1}-p_{0}-\frac{\alpha_{1}%
}{\beta_{1}}\left(  \pi_{1}-\pi_{0}\right)  \leq a<\frac{\alpha-\alpha_{1}%
}{\beta_{0}}\left(  \pi_{1}-\pi_{0}\right)  \text{,}\\
1\text{,} & \text{if }a\geq\frac{\alpha-\alpha_{1}}{\beta_{0}}\left(  \pi
_{1}-\pi_{0}\right)  \text{.}%
\end{array}
\right.  \label{1}%
\end{align}

\end{theorem}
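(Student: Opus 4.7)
The plan is to exploit strict monotonicity of the LHS of (\ref{g}) in $S$ to reduce the event $\{S^{\mathrm{Elig}}\le a\}$ to a deterministic comparison of two maxima, and then --- thanks to the linear index --- to show that the differences of the two utility terms across LHS and RHS are non-random functions of $a$ that split the real line into three regimes. By Assumption \ref{as: U0-U1-1st}, both $U_{1}\!\left(y+S-p_{1},\pi_{1},\boldsymbol{\eta}\right)$ and $U_{0}\!\left(y+S,\pi_{1},\boldsymbol{\eta}\right)$ are strictly increasing in $S$, so their maximum is too; hence $\{S\le a\}$ is equivalent to the event that the LHS of (\ref{g}), evaluated at $S=a$, dominates the RHS. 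Writing $V_{1}^{L}(a),V_{0}^{L}(a)$ for the two LHS utilities at $S=a$ and $V_{1}^{R},V_{0}^{R}$ for the two RHS utilities, the linear index gives the two ``LHS$-$RHS'' differences
\[
V_{1}^{L}(a)-V_{1}^{R}=\beta_{1}(a-p_{1}+p_{0})+\alpha_{1}(\pi_{1}-\pi_{0}),\qquad V_{0}^{L}(a)-V_{0}^{R}=\beta_{0}a+\alpha_{0}(\pi_{1}-\pi_{0}),
\]
which are deterministic because $\eta^{1},\eta^{0}$ cancel. Their sign changes occur at $a_{1}:=p_{1}-p_{0}-\tfrac{\alpha_{1}}{\beta_{1}}(\pi_{1}-\pi_{0})<0$ and $a_{2}:=\tfrac{\alpha-\alpha_{1}}{\beta_{0}}(\pi_{1}-\pi_{0})\ge 0$ (using $\alpha-\alpha_{1}=-\alpha_{0}\ge 0$, $\beta_{0},\beta_{1}>0$, and $\pi_{1}>\pi_{0}$).

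The two extreme regimes are then immediate. For $a<a_{1}$, both differences are strictly negative, so $V_{i}^{L}(a)<V_{i}^{R}$ pathwise in $\boldsymbol{\eta}$ for $i=0,1$; this forces $\max\{V_{1}^{L}(a),V_{0}^{L}(a)\}<\max\{V_{1}^{R},V_{0}^{R}\}$ and hence $\Pr(S^{\mathrm{Elig}}\le a)=0$. For $a\ge a_{2}$, both differences are non-negative, and the pathwise reverse dominance yields $\Pr(S^{\mathrm{Elig}}\le a)=1$.

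The substantive step is the intermediate range $a_{1}\le a<a_{2}$, where $V_{1}^{L}(a)\ge V_{1}^{R}$ but $V_{0}^{L}(a)<V_{0}^{R}$. The key observation is that $\max\{V_{1}^{L}(a),V_{0}^{L}(a)\}\ge\max\{V_{1}^{R},V_{0}^{R}\}$ cannot be realized via the $V_{0}^{L}$ coordinate: that would force $V_{0}^{L}(a)\ge V_{0}^{R}$, contradicting the sign of the second difference. So the event reduces to $V_{1}^{L}(a)\ge\max\{V_{1}^{R},V_{0}^{R}\}$, and since $V_{1}^{L}(a)\ge V_{1}^{R}$ is automatic, it further reduces to the single linear inequality $V_{1}^{L}(a)\ge V_{0}^{R}$, which is exactly (\ref{f}). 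Substituting the linear-index forms and using that $F$ is the CDF of $\varepsilon=-(\eta^{1}-\eta^{0})$, the inequality becomes
\[
\varepsilon\le c_{0}+c_{1}(p_{1}-a)+c_{2}y+\alpha\pi_{0}+\alpha_{1}(\pi_{1}-\pi_{0}),
\]
whose probability equals $q_{1}\!\bigl(p_{1}-a,\,y,\,\pi_{0}+\tfrac{\alpha_{1}}{\alpha}(\pi_{1}-\pi_{0})\bigr)$ by definition of $q_{1}$ in (\ref{14}), giving the middle branch of (\ref{1}). The only subtle point is this case analysis in the intermediate regime --- in particular, eliminating the $V_{0}^{L}$-branch using the sign of $V_{0}^{L}(a)-V_{0}^{R}$; the rest is arithmetic on the linear indices.
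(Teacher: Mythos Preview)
Your proof is correct and follows essentially the same approach as the paper: convert $\{S\le a\}$ to the max-comparison via monotonicity in the numeraire, observe that the two coordinate-wise LHS$-$RHS differences are deterministic and change sign at the thresholds $a_{1},a_{2}$, and in the intermediate regime reduce to the single inequality $V_{1}^{L}(a)\ge V_{0}^{R}$. If anything, your treatment of the intermediate case is slightly more explicit than the paper's, which simply asserts the reduction without spelling out why the $V_{0}^{L}$-branch is ruled out.
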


The proof is provided in Appendix \ref{sec: appendix-welfare}. The condition
$\left\vert \alpha\right\vert \sup_{e\in\mathbb{R}}f_{\eta^{0}-\eta^{1}%
}\left(  e\right)  <1$ essentially says that the social interaction parameter
is not too large in magnitude, so that (\ref{c}) and (\ref{d}) have unique
solutions in $\pi_{0}$ and $\pi_{1}$ respectively, whence by the argument
following (\ref{c}) and (\ref{d}), we have that $\pi_{1}>\pi_{0}$.

Now, note that in the intermediate case in (\ref{1}), where $a\in\lbrack
p_{1}-p_{0}-\frac{\alpha_{1}}{\beta_{1}}\left(  \pi_{1}-\pi_{0}\right)  ,$
$\frac{\alpha_{0}}{\beta_{0}}\left(  \pi_{0}-\pi_{1}\right)  ]$, $\Pr\left(
S\leq a\right)  $ equals%
\begin{equation}
q_{1}(p_{1}-a,y,\pi_{0}+\frac{\alpha_{1}}{\alpha}\left(  \pi_{1}-\pi
_{0}\right)  )=F\left(  c_{0}+\alpha_{1}\left(  \pi_{1}-\pi_{0}\right)
+c_{1}\left(  p_{1}-a\right)  +c_{2}y+\alpha\pi_{0}\right)  \text{.}
\label{f3}%
\end{equation}
In (\ref{f3}), the intercept $c_{0}=\delta_{1}-\delta_{0}$, the slopes
$c_{1}=-\beta_{1},c_{2}=\beta_{1}-\beta_{0}$ and $\alpha=\alpha_{1}-\alpha
_{0}$ are all identified from conditional choice probabilities; \emph{however
}$\alpha_{1}$\emph{ is not identified} and therefore (\ref{f3}) is not
point-identified from the structural choice probabilities. However, since
$\alpha_{1}\in\left[  0,\alpha\right]  $, for each feasible value of
$\alpha_{1}\in\left[  0,\alpha\right]  $, we can compute a corresponding value
of (\ref{f3}), giving us bounds on the welfare distribution.

Note also that the thresholds of $a$ at which the CDF expression changes are
also not point-identified for the same reason. However, since $\pi_{1}-\pi
_{0}>0$ and $\beta_{0}>0$, $\beta_{1}>0$, the interval%
\[
p_{1}-p_{0}-\frac{\alpha_{1}}{\beta_{1}}\left(  \pi_{1}-\pi_{0}\right)  \leq
a<\frac{\alpha_{0}}{\beta_{0}}\left(  \pi_{0}-\pi_{1}\right)
\]
will translate to the left as $\alpha_{1}$ varies from $0$ to $\alpha
$.\smallskip

\begin{remark}
\label{Remark}Note that the above theorem continues to hold even if the
subsidy is universal; we have not used the means-tested nature of the subsidy
to derive the result.
\end{remark}

\begin{corollary}
[\textbf{Mean Welfare}]From (\ref{1}), the mean welfare for the eligible is
given by%
\begin{align}
E[S^{\mathrm{Elig}}]  &  =\underset{\text{welfare gain}}{\underbrace{-\int%
_{p_{1}-p_{0}-\frac{\alpha_{1}}{\beta_{1}}\left(  \pi_{1}-\pi_{0}\right)
}^{0}q_{1}\left(  p_{1}-a,y,\pi_{0}+\frac{\alpha_{1}}{\alpha}\left(  \pi
_{1}-\pi_{0}\right)  \right)  da}}\nonumber\\
&  +\underset{\text{welfare loss}}{\underbrace{\int_{0}^{\frac{\alpha
-\alpha_{1}}{\beta_{0}}\left(  \pi_{1}-\pi_{0}\right)  }\left[  1-q_{1}\left(
p_{1}-a,y,\pi_{0}+\frac{\alpha_{1}}{\alpha}\left(  \pi_{1}-\pi_{0}\right)
\right)  \right]  da}}, \label{9}%
\end{align}
where the following formula for a random variable $X$ that has finite mean and
the CDF, $F_{X}$, is used: $E\left[  X\right]  =\int_{0}^{\infty}%
[1-F_{X}\left(  x\right)  ]dx-\int_{-\infty}^{0}F_{X}\left(  x\right)  dx$.
\end{corollary}

\noindent The width of the bounds on (\ref{1}) and (\ref{9}), obtained by
varying $\alpha_{1}$ over $\left[  0,\alpha\right]  $, depends on the extent
to which $q_{1}\left(  \cdot,\cdot,\pi\right)  $ is affected by $\pi$, i.e.
the extent of social spillover, and also the difference in the realized values
$\pi_{1}$ and $\pi_{0}$. For our single-index model, the fixed point
restrictions imply that these counterfactual $\pi_{1}$ and $\pi_{0}$ depend on
$\alpha_{1}$ and $\alpha_{0}$ only via $\alpha=\alpha_{1}-\alpha_{0}$ (cf.
(\ref{c}) and (\ref{d}) above) which is point-identified; thus every potential
value of counterfactual demand is point-identified. But given any feasible
value of $\pi_{1}$ and $\pi_{0}$, the welfare (\ref{9}) is not
point-identified in general, since $\alpha_{1}$ is unknown.

However, given $\alpha$, the welfare gain in expression (\ref{9}) is
increasing in $\alpha_{1}$; i.e., the welfare gain is largest in absolute
value when $\alpha_{1}=\alpha$ and $\alpha_{0}=0$, and the smallest when
$\alpha_{1}=0$ and $\alpha_{0}=-\alpha$; conversely for welfare loss.
Intuitively, if there is no negative externality from increased $\pi$ on
non-purchasers, then they do not suffer any welfare loss, but purchasers have
a welfare gain from both lower price and higher $\pi$. Conversely, if all the
spillovers are negative, then purchasers still receive a welfare gain via
price reduction, but non-purchasers suffer welfare loss due to increased $\pi
$. Also, note that under quasilinear utilities (i.e., utilities with
$\beta_{0}=\beta_{1}$), where income effects are absent, the $y$ drops out of
the above expressions, but the same identification problem remains, since
$\alpha_{1}$ does not disappear. Changing variables $p=p_{1}-a$, one can
rewrite (\ref{9}) as%
\begin{align}
E[S^{\mathrm{Elig}}]  &  =\underset{\text{welfare gain}}{\underbrace{-\int%
_{p_{1}}^{p_{0}+\frac{\alpha_{1}}{\beta_{1}}\left(  \pi_{1}-\pi_{0}\right)
}q_{1}\left(  p,y,\pi_{0}+\frac{\alpha_{1}}{\alpha}\left(  \pi_{1}-\pi
_{0}\right)  \right)  dp}}\nonumber\\
&  +\underset{\text{welfare Loss}}{\underbrace{\int_{p_{1}+\frac{\alpha
_{1}-\alpha}{\beta_{0}}\left(  \pi_{1}-\pi_{0}\right)  }^{p_{1}}\left[
1-q_{1}\left(  p,y,\pi_{0}+\frac{\alpha_{1}}{\alpha}\left(  \pi_{1}-\pi
_{0}\right)  \right)  \right]  dp}}\text{.} \label{10}%
\end{align}
Note that if $\alpha_{1}=0$, then the first term is the usual consumer surplus
capturing the effect of price reduction on consumer welfare; for a positive
$\alpha_{1}$, the term $\frac{\alpha_{1}}{\beta_{1}}\left(  \pi_{1}-\pi
_{0}\right)  $ yields the additional effect arising via the conforming
channel. Also, if $\alpha_{1}=0$, then the second term, i.e. the welfare loss
from not buying, is the largest (given $\alpha$): this corresponds to the case
where all of $\alpha$ is due to the negative externality.

The second term in (\ref{10}), which represents welfare change caused solely
via spillovers and no price change, is still expressed as an integral with
respect to price. This is a consequence of the index structure which enables
us to express this welfare loss in terms of foregone utility from an
equivalent price change.\smallskip

\noindent\textbf{Special Cases}: For the special case of symmetric
interactions considered in BD01, where their social welfare is calculated with
$\alpha_{1}=-\alpha_{0}$ in (\ref{A}) (e.g., if $\gamma_{H}=0$, i.e. there is
no health externality in the health-good example), we have $\dfrac{\alpha_{1}%
}{\alpha}=\dfrac{-\alpha_{0}}{-2\alpha_{0}}=\dfrac{1}{2}$, and from (\ref{10})
mean welfare equals:
\begin{equation}
\underset{\text{welfare gain}}{\underbrace{-\int_{p_{1}}^{p_{0}+\frac{\alpha
}{2\beta_{1}}\left(  \pi_{1}-\pi_{0}\right)  }q_{1}\left(  p,y,\frac{1}%
{2}\left(  \pi_{1}+\pi_{0}\right)  \right)  dp}}+\underset{\text{welfare
loss}}{\underbrace{\int_{p_{1}-\frac{\alpha}{2\beta_{0}}\left(  \pi_{1}%
-\pi_{0}\right)  }^{p_{1}}\left[  1-q_{1}\left(  p,y,\frac{1}{2}\left(
\pi_{1}+\pi_{0}\right)  \right)  \right]  dp\text{.}}} \label{11}%
\end{equation}
If $\alpha_{0}=0$, and $\alpha=\alpha_{1}$, i.e. all spillovers are via
conforming, mean welfare is given by%
\begin{equation}
\underset{\text{welfare gain}}{\underbrace{-\int_{p_{1}}^{p_{0}+\frac{\alpha
}{\beta_{1}}\left(  \pi_{1}-\pi_{0}\right)  }q_{1}\left(  p,y,\pi_{1}\right)
dp}}\text{;} \label{12}%
\end{equation}
and if, on the other hand, any spillovers are due to perceived health risk,
i.e. $\alpha=-\alpha_{0}$ and $\alpha_{1}=0$, then mean welfare is given by
\begin{equation}
\underset{\text{welfare gain}}{\underbrace{-\int_{p_{1}}^{p_{0}}q_{1}\left(
p,y,\pi_{0}\right)  dp}}+\underset{\text{welfare loss}}{\underbrace{\int%
_{p_{1}-\frac{\alpha}{\beta_{0}}\left(  \pi_{1}-\pi_{0}\right)  }^{p_{1}%
}\left[  1-q_{1}\left(  p,y,\pi_{0}\right)  \right]  dp}}\text{.} \label{13}%
\end{equation}

Expressions (\ref{12}) and (\ref{13}) correspond to the upper and lower
bounds, respectively, of the overall welfare gain for
eligibles.\footnote{Gautam (2018) obtained point-identified estimates of
welfare in parametric discrete choice models with social interactions,
purportedly using Dagsvik and Karlstrom's (2005) results for the setting
without spillover. Gautam's paper contains no explicit expression for average
welfare, but we conjecture that her derivation had implicitly assumed one of
the normalizations (\ref{11}), (\ref{12}) or (\ref{13}) under which average
welfare is point-identified.}\smallskip

\noindent\textbf{Welfare for Ineligibles (}$\alpha_{1}\geq0>\alpha_{0}%
$\textbf{): }Welfare change for \textit{ineligibles} is measured by the CV
defined as the solution $S$ to the equation:%
\begin{align}
&  \max\left\{  \delta_{1}+\beta_{1}\left(  y+S-p_{0}\right)  +\alpha_{1}%
\pi_{1}+\eta^{1},\delta_{0}+\beta_{0}\left(  y+S\right)  +\alpha_{0}\pi
_{1}+\eta^{0}\right\} \nonumber\\
&  =\max\left\{  \delta_{1}+\beta_{1}\left(  y-p_{0}\right)  +\alpha_{1}%
\pi_{0}+\eta^{1},\delta_{0}+\beta_{0}y+\alpha_{0}\pi_{0}+\eta^{0}\right\}
\text{,} \label{S_inelig}%
\end{align}
which is simply (\ref{g}) with $p_{1}$ replaced by $p_{0}$. Therefore, the
mean CV is simply (\ref{g}) with $p_{1}$ replaced by $p_{0}$.

\begin{corollary}
\label{corollary: inelig}Suppose the linear index structure described above
holds with $\beta_{0}>0$, $\beta_{1}>0$, and $\alpha_{1}\geq0\geq\alpha_{0}$.
Then for each $\alpha_{1}\in\left[  0,\alpha\right]  $, the mean welfare for
the ineligible, $S=S^{\mathrm{Inelig}}$, is given by
\begin{align}
E[S^{\mathrm{Inelig}}]  &  =-\int_{p_{0}}^{p_{0}+\frac{\alpha_{1}}{\beta_{1}%
}\left(  \pi_{1}-\pi_{0}\right)  }q_{1}\left(  p,y,\pi_{0}+\frac{\alpha_{1}%
}{\alpha}\left(  \pi_{1}-\pi_{0}\right)  \right)  dp\nonumber\\
&  +\int_{p_{0}+\frac{\alpha_{1}-\alpha}{\beta_{0}}\left(  \pi_{1}-\pi
_{0}\right)  }^{p_{0}}\left[  1-q_{1}\left(  p,y,\pi_{0}+\frac{\alpha_{1}%
}{\alpha}\left(  \pi_{1}-\pi_{0}\right)  \right)  \right]  dp\text{.}
\label{mean_noneligibles}%
\end{align}

\end{corollary}

\noindent For \textit{ineligibles}, all of the welfare effects come from
spillovers, since they experience no price change. In particular, for
ineligibles who buy, there is a welfare gain from positive spillovers due to a
higher $\pi$. For ineligibles who do not buy, there is, however, a potential
welfare loss due to increased $\pi$. This is why the CV distribution has the
support that includes both positive and negative values. The first term in
(\ref{mean_noneligibles}) captures the welfare gain resulting from a positive
$\alpha_{1}$ and higher $\pi$; this term would be zero if $\alpha_{1}=0$. The
second term in (\ref{mean_noneligibles}) captures the welfare loss also
resulting from higher $\pi$; this loss would be zero if there are no negative
impacts, i.e. $\alpha_{0}=0$. Of course, both would be zero if $\alpha
=0=\alpha_{1}=\alpha_{0}$, reflecting the fact that welfare effect on
ineligibles would be zero if there is no spillover.

In the three special cases where we have point-identification, viz. (i)
$\alpha_{1}=-\alpha_{0}=\frac{\alpha}{2}$; (ii) $\alpha=\alpha_{1}$,
$\alpha_{0}=0$; and (iii) $\alpha=-\alpha_{0}$, $\alpha_{1}=0$, mean CV
(\ref{mean_noneligibles}) reduces respectively to:
\begin{align}
&  \mathrm{(i)}\text{ \ }\underset{\text{welfare gain}}{\underbrace{-\int%
_{p_{0}}^{p_{0}+\frac{\alpha}{2\beta_{1}}\left(  \pi_{1}-\pi_{0}\right)
}q_{1}\left(  p,y,\frac{\pi_{0}+\pi_{1}}{2}\right)  dp}}%
+\underset{\text{welfare loss}}{\underbrace{\int_{p_{0}-\frac{\alpha}%
{2\beta_{0}}\left(  \pi_{1}-\pi_{0}\right)  }^{p_{0}}\left[  1-q_{1}\left(
p,y,\frac{\pi_{0}+\pi_{1}}{2}\right)  \right]  dp}}\text{;}\nonumber\\
&  \mathrm{(ii)}\text{ \ }\underset{\text{welfare gain}}{\underbrace{-\int%
_{p_{0}}^{p_{0}+\frac{\alpha}{\beta_{1}}\left(  \pi_{1}-\pi_{0}\right)  }%
q_{1}\left(  p,y,\pi_{1}\right)  dp}}\text{;}\label{15}\\
&  \mathrm{(iii)}\text{ \ }\underset{\text{welfare loss}}{\underbrace{\int%
_{p_{0}-\frac{\alpha}{\beta_{0}}\left(  \pi_{1}-\pi_{0}\right)  }^{p_{0}%
}\left[  1-q_{1}\left(  p,y,\pi_{0}\right)  \right]  dp}}\text{.} \label{16}%
\end{align}

Expressions (\ref{15}) and (\ref{16}) correspond to the upper and lower
bounds, respectively, of the overall welfare gain for ineligibles, and
therefore, the overall bounds generically contain both positive and negative
values, since $\alpha\neq0$.\smallskip

\noindent\textbf{Deadweight Loss (}$\alpha_{1}\geq0>\alpha_{0}$\textbf{): }The
mean deadweight loss (DWL) can be calculated as the expected subsidy spending
less the net welfare gain:
\begin{align*}
&  DWL(y)=\underset{\text{subsidy spending}}{\underbrace{1\left\{  y\leq
\tau\right\}  \times\left(  p_{0}-p_{1}\right)  q_{1}\left(  p_{1},y,\pi
_{1}\right)  }}\\
&  \underset{\text{welfare gain of eligibles}}{\underbrace{-1\left\{
y\leq\tau\right\}  \times\left(
\begin{array}
[c]{c}%
%TCIMACRO{\dint _{p_{1}-p_{0}-\frac{\alpha_{1}}{\beta_{1}}\left(  \pi_{1}%
%-\pi_{0}\right)  }^{0}}%
%BeginExpansion
{\displaystyle\int_{p_{1}-p_{0}-\frac{\alpha_{1}}{\beta_{1}}\left(  \pi
_{1}-\pi_{0}\right)  }^{0}}
%EndExpansion
q_{1}\left(  p_{1}-a,y,\pi_{0}+\dfrac{\alpha_{1}}{\alpha}\left(  \pi_{1}%
-\pi_{0}\right)  \right)  da\\
-%
%TCIMACRO{\dint _{0}^{\frac{\alpha-\alpha_{1}}{\beta_{0}}\left(  \pi_{1}%
%-\pi_{0}\right)  }}%
%BeginExpansion
{\displaystyle\int_{0}^{\frac{\alpha-\alpha_{1}}{\beta_{0}}\left(  \pi_{1}%
-\pi_{0}\right)  }}
%EndExpansion
\left[  1-q_{1}\left(  p_{1}-a,y,\pi_{0}+\dfrac{\alpha_{1}}{\alpha}\left(
\pi_{1}-\pi_{0}\right)  \right)  \right]  da
\end{array}
\right)  }}\\
&  \underset{\text{welfare gain of ineligibles}}{\underbrace{-1\left\{
y>\tau\right\}  \times\left[
\begin{array}
[c]{c}%
%TCIMACRO{\dint _{p_{0}}^{p_{0}+\frac{\alpha_{1}}{\beta_{1}}\left(  \pi_{1}%
%-\pi_{0}\right)  }}%
%BeginExpansion
{\displaystyle\int_{p_{0}}^{p_{0}+\frac{\alpha_{1}}{\beta_{1}}\left(  \pi
_{1}-\pi_{0}\right)  }}
%EndExpansion
q_{1}\left(  p,y,\pi_{0}+\dfrac{\alpha_{1}}{\alpha}\left(  \pi_{1}-\pi
_{0}\right)  \right)  dp\\
-%
%TCIMACRO{\dint _{p_{0}+\frac{\alpha_{0}}{\beta_{0}}\left(  \pi_{1}-\pi
%_{0}\right)  }^{p_{0}}}%
%BeginExpansion
{\displaystyle\int_{p_{0}+\frac{\alpha_{0}}{\beta_{0}}\left(  \pi_{1}-\pi
_{0}\right)  }^{p_{0}}}
%EndExpansion
\left[  1-q_{1}\left(  p,y,\pi_{0}+\dfrac{\alpha_{1}}{\alpha}\left(  \pi
_{1}-\pi_{0}\right)  \right)  \right]  dp
\end{array}
\right]  }}\text{..}%
\end{align*}
The bounds on $\alpha_{1}$ then translate into bounds for mean DWL. In
particular, if $\alpha_{0}=0$ (so that $\alpha=\alpha_{1}$), then%
\begin{align*}
&  DWL(y)=1\left\{  y\leq\tau\right\}  \times\left(  p_{0}-p_{1}\right)
q_{1}\left(  p_{1},y,\pi_{1}\right) \\
&  -1\left\{  y\leq\tau\right\}  \times\int_{p_{1}}^{p_{0}+\frac{\alpha}%
{\beta_{1}}\left(  \pi_{1}-\pi_{0}\right)  }q_{1}\left(  p,y,\pi_{1}\right)
dp-1\left\{  y>\tau\right\}  \times\int_{p_{0}}^{p_{0}+\frac{\alpha}{\beta
_{1}}\left(  \pi_{1}-\pi_{0}\right)  }q_{1}\left(  p,y,\pi_{1}\right)
dp\text{.}%
\end{align*}
Therefore, if $\frac{\alpha}{\beta_{1}}\left(  \pi_{1}-\pi_{0}\right)  $ is
sufficiently large, then the mean DWL will be \emph{negative}, i.e. the
subsidy will \emph{increase} economic efficiency under positive spillover, as
in the standard textbook case. This happens because there is no subsidy
expenditure on ineligibles, and yet those ineligibles who buy enjoy a
subsidy-induced welfare gain due to positive spillover. Subsidy-eligibles
receive an additional welfare gain via positive spillover, over and above the
welfare-gain due to reduced price, and it is only the latter that is financed
by the subsidy expenditure. In general, the deadweight loss will be lower
(more negative) when (i) the positive spillovers ($\alpha_{1}$) is larger,
(ii) the change in equilibrium adoption ($\pi_{1}-\pi_{0}$) due to the subsidy
is greater, and (iii) the price elasticity of demand ($-\beta_{1}$) is lower
-- the last effect lowers deadweight loss simply by reducing the substitution
effect, even in absence of spillover.

\subsection{Mean Welfare under $\alpha_{1}\geq\alpha_{0}\geq0$}

Recall that $\alpha_{1}=\gamma_{p}\geq0$, $\alpha_{0}=\gamma_{H}-\gamma_{p}$;
in our application, it holds that $\alpha=\alpha_{1}-\alpha_{0}>0$; thus
$\alpha_{1}>\alpha_{0}\geq0$ corresponds to the case where $\gamma_{H}>0$ i.e.
insecticide effect dominates deflection effect, is also larger than
conforming/learning but less than twice the conforming effect. Note that under
this assumption, we must also have $\alpha\leq\alpha_{1}$.\smallskip

\noindent\textbf{Welfare for Eligibles (}$\alpha_{1}\geq\alpha_{0}\geq
0$\textbf{): }For subsidy-eligibles, the mean welfare (for given $(p_{0}%
,\pi_{0},p_{1},\pi_{1})$) is presented in the following theorem:

\begin{theorem}
Suppose the linear index structure described above holds with $\beta_{1}%
\geq\beta_{0}>0$, and $\alpha_{1}\geq\alpha_{0}\geq0$. Let $\beta=\beta
_{1}-\beta_{0}$ and $\alpha=\alpha_{1}-\alpha_{0}$, which are estimable from
the choice probability function, and define
\begin{align*}
C_{1}\left(  \alpha_{1}\right)   &  :=-%
%TCIMACRO{\dint _{p_{1}-\frac{\alpha-\alpha_{1}}{\beta_{0}}\left(  \pi_{1}%
%-\pi_{0}\right)  }^{p_{0}+\frac{\alpha_{1}}{\beta_{1}}\left(  \pi_{1}-\pi
%_{0}\right)  }}%
%BeginExpansion
{\displaystyle\int_{p_{1}-\frac{\alpha-\alpha_{1}}{\beta_{0}}\left(  \pi
_{1}-\pi_{0}\right)  }^{p_{0}+\frac{\alpha_{1}}{\beta_{1}}\left(  \pi_{1}%
-\pi_{0}\right)  }}
%EndExpansion
q_{1}\left(  p,y,\pi_{0}+\frac{\alpha_{1}}{\alpha}\left(  \pi_{1}-\pi
_{0}\right)  \right)  dp\text{,}\\
C_{2}\left(  \alpha_{1}\right)   &  :=-%
%TCIMACRO{\dint _{p_{0}+\frac{\alpha-\alpha_{1}}{\beta_{0}}\left(  \pi_{1}%
%-\pi_{0}\right)  }^{p_{1}-\frac{\alpha_{1}}{\beta_{1}}\left(  \pi_{1}-\pi
%_{0}\right)  }}%
%BeginExpansion
{\displaystyle\int_{p_{0}+\frac{\alpha-\alpha_{1}}{\beta_{0}}\left(  \pi
_{1}-\pi_{0}\right)  }^{p_{1}-\frac{\alpha_{1}}{\beta_{1}}\left(  \pi_{1}%
-\pi_{0}\right)  }}
%EndExpansion
\left[  1-q_{1}\left(  p,y+p-p_{0},\pi_{1}-\frac{\alpha_{1}}{\alpha}\left(
\pi_{1}-\pi_{0}\right)  \right)  \right]  dp\text{.}%
\end{align*}
Then mean welfare for the eligible is given by%
\[
E\left[  S^{\mathrm{Elig}}\right]  =\left\{
\begin{array}
[c]{ll}%
C_{1}\left(  \alpha_{1}\right)  \text{,} & \text{if }\alpha\leq\alpha_{1}%
\leq\frac{\beta_{1}}{\beta}\left(  \beta_{1}-\beta\right)  \frac{p_{0}-p_{1}%
}{\pi_{1}-\pi_{0}}+\frac{\beta_{1}}{\beta}\alpha\text{,}\\
C_{2}\left(  \alpha_{1}\right)  \text{,} & \text{if }\alpha_{1}>\frac
{\beta_{1}}{\beta}\left(  \beta_{1}-\beta\right)  \frac{p_{0}-p_{1}}{\pi
_{1}-\pi_{0}}+\frac{\beta_{1}}{\beta}\alpha\text{.}%
\end{array}
\right.
\]

\end{theorem}

The proof is provided in Appendix \ref{sec: appendix-welfare}. Given that
$\alpha_{1}$ is unknown, this result implies the lower and upper bounds of the
mean welfare:%
\begin{align}
LB_{\alpha_{1}\geq\alpha_{0}\geq0}^{\mathrm{Elig}}  &  =\min\left\{
\inf_{\alpha_{1}\in\left[  \alpha,\frac{\beta_{1}}{\beta}\left(  \beta
_{1}-\beta\right)  \frac{p_{0}-p_{1}}{\pi_{1}-\pi_{0}}+\frac{\beta_{1}}{\beta
}\alpha\right]  }C_{1}\left(  \alpha_{1}\right)  ,\text{ }\inf_{\alpha_{1}%
\in\left[  \frac{\beta_{1}}{\beta}\left(  \beta_{1}-\beta\right)  \frac
{p_{0}-p_{1}}{\pi_{1}-\pi_{0}}+\frac{\beta_{1}}{\beta}\alpha,\infty\right)
}C_{2}\left(  \alpha_{1}\right)  \right\}  ,\label{5}\\
UB_{\alpha_{1}\geq\alpha_{0}\geq0}^{\mathrm{Elig}}  &  =\max\left\{
\sup_{\alpha_{1}\in\left[  \alpha,\frac{\beta_{1}}{\beta}\left(  \beta
_{1}-\beta\right)  \frac{p_{0}-p_{1}}{\pi_{1}-\pi_{0}}+\frac{\beta_{1}}{\beta
}\alpha\right]  }C_{1}\left(  \alpha_{1}\right)  ,\text{ }\sup_{\alpha_{1}%
\in\left[  \frac{\beta_{1}}{\beta}\left(  \beta_{1}-\beta\right)  \frac
{p_{0}-p_{1}}{\pi_{1}-\pi_{0}}+\frac{\beta_{1}}{\beta}\alpha,\infty\right)
}C_{2}\left(  \alpha_{1}\right)  \right\}  . \label{8}%
\end{align}
Thus, allowing for both $\alpha_{1}\geq\alpha_{0}\geq0$ and $\alpha_{1}%
\geq0\geq\alpha_{0}$ yields the wider bounds on mean welfare for the eligible
to:%
\begin{align}
LB^{\mathrm{Elig}}  &  =\min\left\{  LB_{\alpha_{1}\geq\alpha_{0}\geq
0}^{\mathrm{Elig}},\text{ }LB_{\alpha_{1}\geq0\geq\alpha_{0}}^{\mathrm{Elig}%
}\right\}  \text{,}\label{21}\\
UB^{\mathrm{Elig}}  &  =\max\left\{  UB_{\alpha_{1}\geq\alpha_{0}\geq
0}^{\mathrm{Elig}},\text{ }UB_{\alpha_{1}\geq0\geq\alpha_{0}}^{\mathrm{Elig}%
}\right\}  \text{.} \label{22}%
\end{align}
where $LB_{\alpha_{1}\geq0\geq\alpha_{0}}^{\mathrm{Elig}}$ and $UB_{\alpha
_{1}\geq0\geq\alpha_{0}}^{\mathrm{Elig}}$ are defined as expressions
(\ref{16}) and (\ref{15}), respectively. Since we expect $\beta_{1}>\beta_{0}$
(also borne out by the empirical results), $C_{2}\left(  \alpha_{1}\right)  $
will tend to $-\infty$ as $\alpha_{1},\alpha_{0}\rightarrow\infty$. Therefore,
as $\alpha_{1},\alpha_{0}\rightarrow\infty$, the integrand in $C_{2}\left(
\alpha_{1}\right)  $ will tend to 1 and $LB_{\alpha_{1}\geq\alpha_{0}\geq
0}^{\mathrm{Elig}}$ in (\ref{5}) will tend to $-\infty$ whereas the
$UB_{\alpha_{1}\geq\alpha_{0}\geq0}^{\mathrm{Elig}}$ in (\ref{8}) will remain
bounded. Therefore, the lower bound on welfare gain will be finite and the
upper bound infinite under $\alpha_{1}\geq\alpha_{0}\geq0$.\smallskip

\noindent\textbf{Welfare for Ineligibles (}$\alpha_{1}\geq\alpha_{0}\geq
0$\textbf{): }For subsidy-ineligibles, the mean welfare is obtained simply by
replacing $p_{1}$ by $p_{0}$ in the expressions for eligibles:

\begin{corollary}
Suppose the linear index structure described above holds with $\beta_{1}%
\geq\beta_{0}>0$, and $\alpha_{1}\geq\alpha_{0}\geq0$. Define%
\begin{align*}
D_{1}\left(  a_{1}\right)   &  :=-\int_{p_{0}+\frac{\alpha_{1}-\alpha}%
{\beta_{0}}\left(  \pi_{1}-\pi_{0}\right)  }^{p_{0}+\frac{\alpha_{1}}%
{\beta_{1}}\left(  \pi_{1}-\pi_{0}\right)  }q_{1}\left(  p,y,\pi_{0}%
+\frac{\alpha_{1}}{\alpha}\left(  \pi_{1}-\pi_{0}\right)  \right)
dp\text{,}\\
D_{2}\left(  \alpha_{1}\right)   &  :=-\int_{p_{0}+\frac{\alpha-\alpha_{1}%
}{\beta_{0}}\left(  \pi_{1}-\pi_{0}\right)  }^{p_{0}-\frac{\alpha_{1}}%
{\beta_{1}}\left(  \pi_{1}-\pi_{0}\right)  }\left[  1-q_{1}\left(
p,y+p-p_{0},\pi_{1}-\frac{\alpha_{1}}{\alpha}\left(  \pi_{1}-\pi_{0}\right)
\right)  \right]  dp\text{.}%
\end{align*}
Then, the mean welfare for the ineligible is given by%
\[
E\left[  S^{\mathrm{Inelig}}\right]  =\left\{
\begin{array}
[c]{cc}%
D_{1}\left(  \alpha_{1}\right)  \text{, } & \text{if }\alpha\leq\alpha_{1}%
\leq\frac{\beta_{1}}{\beta_{1}-\beta_{0}}\alpha\text{,}\\
D_{2}\left(  \alpha_{1}\right)  \text{,} & \text{if }\frac{\beta_{1}}%
{\beta_{1}-\beta_{0}}\alpha<\alpha_{1}<\infty\text{.}%
\end{array}
\right.
\]

\end{corollary}

From the above results, it follows that allowing for $\alpha_{1}>\alpha
_{0}\geq0$ in addition to the possibility $\alpha_{1}\geq\alpha_{0}\geq0$
widens the overall bounds for mean welfare of the ineligible from (\ref{15})
and (\ref{16}) to%
\begin{align}
LB^{\mathrm{Inelig}}  &  =\min\left\{  \int_{p_{0}-\frac{\alpha}{\beta_{0}%
}\left(  \pi_{1}-\pi_{0}\right)  }^{p_{0}}\left\{  1-q_{1}\left(  p,y,\pi
_{0}\right)  \right\}  dp,\text{ }\inf_{\alpha_{1}>\frac{\beta_{1}}{\beta
_{1}-\beta_{0}}\alpha}D_{2}\left(  \alpha_{1}\right)  ,\text{ }\inf
_{\alpha\leq\alpha_{1}\leq\frac{\beta_{1}}{\beta_{1}-\beta_{0}}\alpha}%
D_{1}\left(  \alpha_{1}\right)  \right\}  \text{,}\label{2}\\
UB^{\mathrm{Inelig}}  &  =\max\left\{  -\int_{p_{0}}^{p_{0}+\frac{\alpha
}{\beta_{1}}\left(  \pi_{1}-\pi_{0}\right)  }q_{1}\left(  p,y,\pi_{1}\right)
dp\text{, }\sup_{\alpha_{1}>\frac{\beta_{1}}{\beta_{1}-\beta_{0}}\alpha}%
D_{2}\left(  \alpha_{1}\right)  ,\text{ }\sup_{\alpha\leq\alpha_{1}\leq
\frac{\beta_{1}}{\beta_{1}-\beta_{0}}\alpha}D_{1}\left(  \alpha_{1}\right)
\right\}  \text{.} \label{19}%
\end{align}

The deadweight loss expressions are analogous to those for the case with
$\alpha_{1}\geq0\geq\alpha_{0}$ and not repeated here.

\section{STOCHASTIC ENVIRONMENT AND\ EQUILIBRIUM
BELIEFS\label{sec: Stochastic Environment}}

\noindent\textbf{Incomplete-Information Setting: }In this section, we
formulate interactions of households as an incomplete-information Bayesian
game, whose stochastic structure will be laid out below. In each village $v$,
each of the $N_{v}$ households is provided the opportunity to buy the product
at a researcher-specified price $P_{vh}$ randomly varied across households.
They have incomplete information in that each household $(v,h)$ knows her own
variables $(A_{vh},W_{vh},\boldsymbol{\eta}_{vh})$ but does not know the
values of all the variables $W_{\tilde{v}k},\boldsymbol{\eta}_{\tilde{v}%
k},A_{\tilde{v}k}$ for every household $k\neq h$ selected in the experiment.

We assume households have `consistent beliefs' in accordance with the standard
Bayes-Nash setting, i.e., each $(v,h)$'s belief is formed as%
\begin{equation}
\Pi_{vh}=\dfrac{1}{N_{v}-1}%
%TCIMACRO{\dsum \nolimits_{1\leq k\leq N_{v}\text{; }k\neq h}}%
%BeginExpansion
{\displaystyle\sum\nolimits_{1\leq k\leq N_{v}\text{; }k\neq h}}
%EndExpansion
E[A_{vk}|\mathcal{I}_{vh}], \label{eq: Belief-general}%
\end{equation}
where $A_{vk}$ is given in (\ref{eq: outcome-variable}) and $E\left[
\cdot\text{ }|\mathcal{I}_{vh}\right]  $ is the conditional expectation
computed through the probability law that governs all the relevant variables
given $(v,h)$'s information set $\mathcal{I}_{vh}$ that includes
$(W_{vh},\boldsymbol{\eta}_{vh})$. The explicit form of
(\ref{eq: Belief-general}) \emph{in equilibrium} is investigated in the next subsection.

Each household $(v,h)$ is solely concerned with behavior of other households
in the same village $v$. Thus the econometrician observes $\bar{v}$ games
($\bar{v}=11$ in our application), each with `many' households. To formalize
our model as a Bayesian game, given the form of (\ref{eq: Belief-general}),
$U_{1}$ and $U_{0}$ are to be interpreted as expected utilities. This is
possible when the underlying von Neumann-Morgenstern utility indices $u_{1}$
and $u_{0}$ satisfy%
\[
U_{1}\left(  Y_{vh}-P_{vh},\Pi_{vh},\boldsymbol{\eta}_{vh}\right)
=E[u_{1}(Y_{vh}-P_{vh},\tfrac{1}{N_{v}-1}%
%TCIMACRO{\dsum \nolimits_{1\leq k\leq N_{v}\text{; }k\neq h}}%
%BeginExpansion
{\displaystyle\sum\nolimits_{1\leq k\leq N_{v}\text{; }k\neq h}}
%EndExpansion
A_{vk},\boldsymbol{\eta}_{vh})|\mathcal{I}_{vh}]\text{,}%
\]
i.e., $u_{1}$ is linear in the second argument; $U_{0}$ and $u_{0}$ satisfy an
analogous relationship. This will hold in particular when utilities have a
linear index structure as in Manski (1993) and Brock and Durlauf (2001, 2007).
We have already presented our linear specifications of $U_{1}$ and $U_{0}$ in
(\ref{A}), but these are further elaborated below in this section and in
Section \ref{sec: Estimators}.\smallskip

\noindent\textbf{Unobserved Heterogeneity: }We assume that unobserved
heterogeneity $\{\boldsymbol{\eta}_{vh}\}_{v=1}^{N_{v}}$ ($v=1,\dots\bar{v}$)
takes the following form:
\begin{equation}
\boldsymbol{\eta}_{vh}=\boldsymbol{\xi}_{v}+\boldsymbol{u}_{vh}\text{,}
\label{eq: fixed-effect}%
\end{equation}
where $\boldsymbol{\xi}_{v}$ stands for a village-specific vector of variables
that are common to all members in the $v$th village and $\boldsymbol{u}_{vh}$
represents an individual specific variable. Let $\left(  d_{v},e_{v}\right)  $
be an underlying vector of village-specific variables such that $d_{v}$ is a
common factor affecting both the unobservable $\boldsymbol{\xi}_{v}$ and the
observable covariates $W_{vh}$, and $e_{v}$ affects only $\boldsymbol{\xi}%
_{v}$ with $\boldsymbol{\xi}_{v}$ fully determined by $\left(  d_{v}%
,e_{v}\right)  $, i.e., $\boldsymbol{\xi}_{v}=\boldsymbol{\xi}\left(
d_{v},e_{v}\right)  $.\footnote{The need to separate $d_{v}$ and $e_{v}$ will
become clear below in the context of identification of model parameters in
presence of unobserved group-effects.} Each household in village $v$ is
assumed to know $\left(  d_{v},e_{v}\right)  $, the functional form
$\boldsymbol{\xi}\left(  \cdot\right)  $, and thus $\boldsymbol{\xi}_{v}$,
while $\boldsymbol{u}_{vh}$ is a purely private variable known only to
individual $(v,h)$. None of $\left\{  \left(  d_{v},e_{v}\right)  \right\}  $,
$\left\{  \boldsymbol{\xi}_{v}\right\}  $, and $\left\{  \boldsymbol{u}%
_{vh}\right\}  $ is observable to the econometrician. Denote household
$(v,h)$'s information set by%
\begin{equation}
\mathcal{I}_{vh}=(W_{vh},\boldsymbol{u}_{vh},d_{v},e_{v})\text{.}
\label{eq: info-set}%
\end{equation}

We now impose the following conditions on the probabilistic law for these variables:

\begin{description}
\item[C1] $\{(W_{vh},\boldsymbol{u}_{vh},d_{v},e_{v})\}_{h=1}^{N_{v}}$,
$v=1,\dots,\bar{v}$, are independent across $v$.
\end{description}

Assumption \textbf{C1} says that variables in village $v$ are independent of
those in village $\tilde{v}(\neq v)$.

\begin{description}
\item[C2] (i)\ For each $v$, the sequence $\left\{  (W_{vh},\boldsymbol{u}%
_{vh})\right\}  _{h=1}^{N_{v}}$ is I.I.D. conditionally on $\left(
d_{v},e_{v}\right)  $. (ii) $\left\{  \boldsymbol{u}_{vh}\right\}
_{h=1}^{N_{v}}$ is independent of $\left\{  W_{vh}\right\}  _{h=1}^{N_{v}}$
conditionally on $\left(  d_{v},e_{v}\right)  $.
\end{description}

The conditional I.I.D.-ness imposed in \textbf{C2 (i)} leads to
equi-dependence within each village, i.e., $\mathrm{Cov}\left[
\boldsymbol{\eta}_{vh},\boldsymbol{\eta}_{vk}\right]  =\mathrm{Cov}\left[
\boldsymbol{\eta}_{v\tilde{h}},\boldsymbol{\eta}_{v\tilde{k}}\right]  (\neq0)$
for any $h\neq k$ and $\tilde{h}\neq\tilde{k}$. Further, each household
$(v,h)$'s unobservable $\boldsymbol{u}_{vh}$ is not useful for predicting
another household $(v,k)$'s variables and behavior, and therefore her belief
$\Pi_{vh}$ (\ref{eq: Belief-general}) is reduced to the average of the
\textit{unconditional }expectations (as formally shown in Proposition
\ref{Prop: belief-constancy}) below. This condition rules out spatial
correlation in unobservables which, if present, would complicate the analysis
in a non-trivial way by making a household's belief a function of its
privately known variables.

\textbf{C2 (ii)} is the exogeneity condition. This allows for identification
and consistent estimation of model parameters. In the context of the field
experiment in our empirical exercise, this exogeneity condition can be
interpreted as saying that realization of unobserved heterogeneity is
independent of how researchers have selected the sample. Note that the
exogeneity condition is \textit{conditional} on $\left(  d_{v},e_{v}\right)
$, and it does \textit{not} exclude correlation of $\boldsymbol{u}_{vh}$ and
$W_{vh}=\left(  P_{vh},Y_{vh}\right)  $ in the \textit{unconditional} sense.
In our application, prices $P_{vh}$ are randomly assigned to individuals by
researchers and thus $P_{vh}$ and $\boldsymbol{u}_{vh}$ are independent both
unconditionally and conditionally.

Note that under the (\ref{eq: fixed-effect}) introduced later, we compute the
ASF (\ref{7}) using the marginal distribution of $\boldsymbol{u}_{vh}$
conditionally on $\boldsymbol{\xi}_{v}$ in later sections (see also Footnote 3).

\subsection{Equilibrium Beliefs\label{sec: eqm-beliefs}}

We now investigate the forms of households' beliefs defined in
(\ref{eq: Belief-general}). We show that under \textbf{C2}, the high-level
assumption in BD01 that beliefs, corresponding to our $\Pi_{vh}$, are constant
and symmetric across all households in the same village can be formalized in
our incomplete-information game setting via the specification of a Bayes-Nash equilibrium.

\begin{proposition}
\label{Prop: belief-constancy}Suppose that Conditions \textbf{C1} and
\textbf{C2} are common knowledge in the Bayesian game described above. Then,
for any $k\neq h$ in village $v$ with $\left(  d_{v},e_{v}\right)  $,
\[
E[A_{vk}|\mathcal{I}_{vh}]=E[A_{vk}|d_{v},e_{v}],
\]
where the information set $\mathcal{I}_{vh}$ is defined in (\ref{eq: info-set}).
\end{proposition}

The proof of Proposition \ref{Prop: belief-constancy} is provided in Appendix
\ref{sec: Proof-IID}. Note that this proposition does not utilize any
\textit{equilibrium} condition. It simply confirms, formally, the intuitive
statement that $(v,h)$'s own variables are not useful to predict other
$(v,k)$'s behavior $A_{vk}$. Given this result, we can write the belief
$\Pi_{vh}$ (defined in (\ref{eq: Belief-general})) as%
\begin{equation}
\Pi_{vh}=\bar{\Pi}_{vh}\text{,} \label{eq: pi-bar-average}%
\end{equation}
where
\[
\bar{\Pi}_{vh}=\bar{\Pi}_{vh}(d_{v},e_{v}):=\tfrac{1}{N_{v}-1}%
%TCIMACRO{\dsum \nolimits_{1\leq k\leq N_{v}\text{; }k\neq h}}%
%BeginExpansion
{\displaystyle\sum\nolimits_{1\leq k\leq N_{v}\text{; }k\neq h}}
%EndExpansion
E[A_{vk}|d_{v},e_{v}],
\]
and $\bar{\Pi}_{vh}$ is a function of $\left(  d_{v},e_{v}\right)  $ and
independent of $(v,h)$-specific variables, $(W_{vh},\boldsymbol{u}_{vh})$; for
notational simplicity, we suppress the dependence of $\bar{\Pi}_{vh}$ on
$\left(  d_{v},e_{v}\right)  $ from now on.

Beliefs \textit{in equilibrium} solve the system of $N_{v}$ equations:%
\begin{equation}
\bar{\Pi}_{vh}=\tfrac{1}{N_{v}-1}%
%TCIMACRO{\dsum \nolimits_{1\leq k\leq N_{v}\text{; }k\neq h}}%
%BeginExpansion
{\displaystyle\sum\nolimits_{1\leq k\leq N_{v}\text{; }k\neq h}}
%EndExpansion
E_{v}\left[  1\left\{
\begin{array}
[c]{c}%
U_{1}(Y_{vk}-P_{vk},\bar{\Pi}_{vk},\boldsymbol{\eta}_{vk})\\
\geq U_{0}(y,\bar{\Pi}_{vk},\boldsymbol{\eta}_{vk})
\end{array}
\right\}  \right]  ,\text{ \ }h=1,\dots,N_{v}, \label{eq: solution-BigPi-IID}%
\end{equation}
where $E_{v}\left[  \cdot\right]  $ denotes the conditional expectation
operator given $\left(  d_{v},e_{v}\right)  $ (i.e., $E\left[  \cdot
|d_{v},e_{v}\right]  $). BD01 focus on equilibria with constant and symmetric
beliefs. Using our notation above, we say that (constant) beliefs are
symmetric when $\bar{\Pi}_{vh}=\bar{\Pi}_{vk}$ for any $h,k\in\{1,\dots
,N_{v}\}$ (for each $v$). When Brock and Durlauf's framework is interpreted as
a Bayesian game, one can justify their focus on constant and symmetric beliefs
under conditions laid out in Proposition \ref{prop: belief-symmetricity} below.

To establish this proposition, define for each $v$, given $\left(  d_{v}%
,e_{v}\right)  $, a function $m_{v}:\left[  0,1\right]  \rightarrow
\lbrack0,1]$ as
\begin{equation}
m_{v}\left(  r\right)  :=E_{v}\left[  1\left\{  U_{1}(Y_{vh}-P_{vh}%
,r,\boldsymbol{\xi}_{v}+\boldsymbol{u}_{vh})\geq U_{0}\left(  Y_{vh}%
,r,\boldsymbol{\xi}_{v}+\boldsymbol{u}_{vh}\right)  \right\}  \right]  ;
\label{eq: m-function}%
\end{equation}
note that $m_{v}\left(  r\right)  $ is independent of individual index $h$
under the conditional I.I.D. assumption given $\left(  d_{v},e_{v}\right)  $.
Then the following characterization of beliefs holds:

\begin{proposition}
\label{prop: belief-symmetricity}Suppose that the same conditions hold as in
Proposition \ref{Prop: belief-constancy} and the function $m_{v}\left(
\cdot\right)  $ defined in (\ref{eq: m-function}) is a contraction, i.e., for
some $\rho\in\left(  0,1\right)  $,
\begin{equation}
|m_{v}\left(  r\right)  -m_{v}\left(  \tilde{r}\right)  |\leq\rho|r-\tilde
{r}|\text{ \ for any }r,\tilde{r}\in\left[  0,1\right]  .
\label{eq: contraction-scalar}%
\end{equation}
Then, a solution $(\bar{\Pi}_{v1},\dots,\bar{\Pi}_{vN_{v}})$ of the system of
$N_{v}$ equations in (\ref{eq: solution-BigPi-IID}) uniquely exists and is
given by symmetric beliefs, i.e.,%
\[
\bar{\Pi}_{vh}=\bar{\Pi}_{vk}\text{ \ for any }h,k\in\{1,\dots,N_{v}\}.
\]

\end{proposition}

The proof is given in Appendix \ref{sec: Proof-IID}. Propositions
\ref{Prop: belief-constancy}-\ref{prop: belief-symmetricity} show that, given
the (conditional) I.I.D. and contraction conditions, the equilibrium is
characterized through%
\[
\Pi_{vh}=\bar{\pi}_{v}\text{ \ for any }h=1,\dots,N_{v},
\]
for some constant $\bar{\pi}_{v}:=\bar{\pi}_{v}(d_{v},e_{v})\in\left[
0,1\right]  $ within each village (given $\left(  d_{v},e_{v}\right)  $). This
implies that the \textit{beliefs can be consistently estimated by the sample
average of }$A_{vk}$\textit{ over village }$v$, which is exploited in our
empirical study.

The contraction condition (\ref{eq: contraction-scalar}) holds when the social
interactions coefficient $\alpha$ is not large (in our linear index
specification). In Section \ref{sec: unique-multi} below, we will provide
sufficient conditions for the contraction and equilibrium uniqueness, as well
as explain additional procedures that are needed for estimation and
counterfactual analysis when multiplicity of equilibria may arise.

\section{ECONOMETRIC SPECIFICATION, IDENTIFICATION AND
ESTIMATION\label{sec: Estimators}}

Taking the belief variable $\Pi_{vh}$ in the linear-index choice probability
function $q_{1}\left(  \cdot\right)  $ to be the (limit of) observed fraction
of usage in each village i.e. $\bar{\pi}_{v}=E_{v}\left[  A_{vh}\right]
(=\lim_{N_{v}\rightarrow\infty}\sum_{h=1}^{N_{v}}A_{vh}/N_{v})$ as justified
in Propositions \ref{Prop: belief-constancy}-\ref{prop: belief-symmetricity},
the index coefficients can be estimated semiparametrically using say,
Bhattacharya (2008). However, unobserved village-effects may confound the
consistency of these estimates; we overcome this by using a correlated random
effects (CRE, henceforth) probit approach to estimate $q_{1}\left(
\cdot\right)  $, which is derived from a factor structure on the covariates
and the village-effects, as follows.

\subsection{Village Effects Specification}

Our data for the application come from eleven different villages with an
average of $195$ households per village. It is plausible that utilities from
using and from not using an ITN\ are affected by village-specific unobservable
characteristics (i.e., $\xi_{v}=\xi_{v}^{1}-\xi_{v}^{0}$ introduced in
(\ref{eq: linear-epsilon})), such as the chance of contracting malaria when
not using an ITN. Recall the linear utility structure (\ref{A}) from Section
\ref{sec: Welfare}. Given this, together with the unobserved heterogeneity
specification in (\ref{eq: fixed-effect}),\text{ }$\boldsymbol{\eta}$%
$_{vh}=\boldsymbol{\xi}_{vh}+\boldsymbol{u}_{vh}$, we model%
\begin{equation}%
\begin{array}
[c]{c}%
U_{0}(Y_{vh},\Pi_{vh},\boldsymbol{\eta}_{vh})=\delta_{0}+\beta_{0}%
Y_{vh}+\alpha_{0}\Pi_{vh}+\underset{\eta_{vh}^{0}}{\underbrace{\xi_{v}%
^{0}+u_{vh}^{0}}}\text{,}\\
U_{1}(Y_{vh}-P_{vh},\Pi_{vh},\boldsymbol{\eta}_{vh}\boldsymbol{)}=\delta
_{1}+\beta_{1}\left(  Y_{vh}-P_{vh}\right)  +\alpha_{1}\Pi_{vh}+\underset{\eta
_{vh}^{1}}{\underbrace{\xi_{v}^{1}+u_{vh}^{1}}}\text{,}%
\end{array}
\label{eq: A2-linear}%
\end{equation}
where $\boldsymbol{\xi}_{v}=\left(  \xi_{v}^{0},\xi_{v}^{1}\right)  $ and
$\boldsymbol{u}_{vh}=\left(  u_{vh}^{0},u_{vh}^{1}\right)  $ denote village
and individual specific characteristics, respectively, both of which are
unobservable. Therefore,
\begin{align}
&  U_{1}\left(  Y_{vh}-P_{vh},\Pi_{vh},\boldsymbol{\eta}_{vh}\right)
-U_{0}\left(  Y_{vh},\Pi_{vh},\boldsymbol{\eta}_{vh}\right) \nonumber\\
&  =\left(  \delta_{1}-\delta_{0}\right)  -\beta_{1}P_{vh}+\left(  \beta
_{1}-\beta_{0}\right)  Y_{vh}+\left(  \alpha_{1}-\alpha_{0}\right)  \Pi
_{vh}+\underset{\xi_{v}}{\underbrace{\xi_{v}^{1}-\xi_{v}^{0}}}%
+\underset{\varepsilon_{vh}}{\underbrace{\left(  u_{vh}^{1}-u_{vh}^{0}\right)
}}\nonumber\\
&  \equiv c_{0}+c_{1}P_{vh}+c_{2}Y_{vh}+\alpha\Pi_{vh}+\xi_{v}+\varepsilon
_{vh}\text{,} \label{eq: linear-epsilon}%
\end{align}
where $\varepsilon_{vh}$ is assumed to have zero mean and unit variance for
scale and location normalization.\smallskip

\noindent\textbf{Non-identification of the village effects }$\xi_{v}%
$\textbf{:} Brock and Durlauf (2007) discussed difficulties of estimating
social interactions models in presence of group-specific unobservables and
presented a non-identification result (their Proposition 2). To see this in
our context, consider constant beliefs, $\Pi_{vh}=\bar{\pi}_{v}$ (justified in
Propositions \ref{Prop: belief-constancy}-\ref{prop: belief-symmetricity}).
Since $\xi_{v}$ is village specific and many observations per village are
available, we can estimate village specific intercepts $\gamma_{v}$ by
regression of take-up $A_{vh}$ on price and income $W_{vh}=\left(
P_{vh},Y_{vh}\right)  $ that vary across households $h$ within village $v$,
together with village dummies, i.e.,%
\begin{equation}
\Pr\left(  A_{vh}=1|W_{vh}=w;d_{v},e_{v}\right)  =F_{\varepsilon
}(w\boldsymbol{c}^{\prime}+\underset{\gamma_{v}}{\underbrace{c_{0}+\alpha
\bar{\pi}_{v}+\xi_{v}}})\text{,} \label{without}%
\end{equation}
where the left-hand side (LHS) is computed under the conditional law given
$\left(  d_{v},e_{v}\right)  $, and $F_{\varepsilon}\left(  \cdot\right)  $ is
the CDF of $-\varepsilon_{vh}$.\footnote{Recall that $\varepsilon_{vh}\left(
=u_{vh}^{1}-u_{vh}^{0}\right)  $ is assumed to be independent of $W_{vh}$
conditionally on $\left(  d_{v},e_{v}\right)  $ in \textbf{C2}; and $\xi_{v}$
is determined by $\left(  d_{v},e_{v}\right)  $. Below, it is further assumed
that $\varepsilon_{vh}$ is jointly independent of $W_{vh}$ and $\left(
d_{v},e_{v}\right)  $.}

The realized $\xi_{v}$ is a constant within each village; thus, $\xi_{1}%
,\dots,\xi_{\bar{v}}$ and the universal constant $c_{0}$ cannot be separately
identified and we reparametrize $\bar{\xi}_{v}:=c_{0}+\xi_{v}$. For each $v$
and each realized $\xi_{v}$, the LHS of (\ref{without}) is identifiable as a
function of $w$; thus, under a parametric specification of $F_{\varepsilon
}\left(  \cdot\right)  $ together with the exogeneity condition \textbf{C2}
(ii) and a rank condition for covariates (stated below), $\left(
\boldsymbol{c},\gamma_{1},\dots,r_{\bar{v}}\right)  $ is also identified. The
identified coefficients $\gamma_{1},\dots,\gamma_{\bar{v}}$ on the village
dummies therefore satisfy the equations $\gamma_{v}=\alpha\pi_{v}+\bar{\xi
}_{v}\equiv c_{0}+\xi_{v}$ ($v=1,\dots,\bar{v}$).\footnote{In the application,
we a run a probit of $A_{vh}$ on covariates $W_{vh}$ ($P_{vh}$, $Y_{vh}$, and
other variables) and a dummy for each village which corresponds to converting
these conditional moments to a set of unconditional ones.} However, even in
the reparametrized equations, there are as many $\bar{\xi}_{v}$ as there are
$\gamma_{v}$, so that we have $\bar{v}$ equations with $\bar{v}+1$ unknowns
$\bar{\xi}_{1},\dots,\bar{\xi}_{\bar{v}}$, and $\alpha$, which are needed for
policy and counterfactual analysis but cannot be separately identified.

\subsection{Factor Structure and Correlated Random Effects Modelling}

We surmount non-identification of $\xi_{v}$ by an approximate version of the
Mundlak-Chamberlain correlated random effects (CRE) structure, cf. Section
15.8.2 of Wooldridge (2010), which is routinely used as a reasonable middle
ground between fixed and random effects in the panel econometrics literature.
While the CRE device is typically intended for short panels, our setting here
may be seen like a \textquotedblleft long panel\textquotedblright\ in that
each village is supposed to have its own effect that is shared by a large
number of households (note that our dataset does not have a panel structure
but consists of several cross sectional datasets). To have our specification
consistent with the long-panel-like\ setting and Section
\ref{sec: Stochastic Environment}\ (in particular, \textbf{C2}), we consider
the following factor structure for the observable covariate $W_{vh}$ and the
village specific variable $\xi_{v}$,%
\begin{equation}
W_{vh}=d_{v}+\tau_{vh}\text{ \ and \ }\xi_{v}=d_{v}\boldsymbol{\delta}%
^{\prime}+e_{v}\text{,} \label{eq: factor}%
\end{equation}
where $d_{v}$ is a vector of \textquotedblleft factor\textquotedblright%
\ variables (with the same dimension as $W_{vh}$) that are common in $W_{vh}$
and $\xi_{v}$, $\tau_{vh}$ is the covariate specific, idiosyncratic component
that is assumed to have zero mean (for location normalization) and is defined
through $\tau_{vh}:=W_{vh}-d_{v}$\textbf{, }$\boldsymbol{\delta}$ is a (row)
vector of constant coefficients on the factor, and $e_{v}$ is a village
specific variable that affects only $\xi_{v}$.\footnote{Note that one
component $W_{vh}$ is the price $P_{vh}$ faced by the household, which is
randomized across households. The corresponding component of $d_{v}$ in
(\ref{eq: factor}) is the average price within the village and its coefficient
in $\boldsymbol{\delta}$\ is set as zero in our application (as the randomized
price does not capture village specific features).}

We assume each household in village $v$ knows the realization of $\left(
d_{v},e_{v}\right)  $, while all the right-hand-side components in
(\ref{eq: factor}) are unobservable to researchers. Let $\bar{W}_{v}:=\left(
1/N_{v}\right)  \sum_{v=1}^{N_{v}}W_{vh}$. Then, we can write $d_{v}=\bar
{W}_{v}-\left(  1/N_{v}\right)  \sum\nolimits_{v=1}^{N_{v}}\tau_{vh}$.
Plugging this into the second equation in (\ref{eq: factor}), we can write%
\begin{equation}
\xi_{v}=\bar{W}_{v}\boldsymbol{\delta}^{\prime}+e_{v}+o_{p}\left(  1\right)
\text{,} \label{eq: xi-CR}%
\end{equation}
for each $N_{v}$, which follows from $\left(  1/N_{v}\right)  \sum
\nolimits_{v=1}^{N_{v}}\tau_{vh}=O_{p}(1/\sqrt{N_{v}})=o_{p}(1)$ by a standard
central limit theorem. We note that (\ref{eq: xi-CR}) is a reduced-form
representation for each (sufficiently large) $N_{v}$ derived from the
structural assumption (\ref{eq: factor}). We further assume that the error
term satisfies
\begin{equation}
e_{v}\perp\left(  \{(W_{vh},\varepsilon_{vh})\}_{h=1}^{N_{v}},d_{v}\right)
\text{ \ and \ }e_{v}\sim N\left(  0,(\sigma_{e}^{\ast})^{2}\right)  \text{,}
\label{eq: xi-CR-normal}%
\end{equation}
for each $v$, where we note that $\left\{  e_{v}\right\}  _{v=1}^{\bar{v}}$ is
I.I.D. under \textbf{C1} and (\ref{eq: xi-CR-normal}) (we denote by
$\sigma_{e}^{\ast}$ the \emph{true} standard deviation parameter; and
subsequently, $\ast$ is often used to denote true parameters).\footnote{Brock
and Durlauf (2007) have also considered a (linear) restriction on the group
effects similar to (\ref{eq: xi-CR}) (see their Section 4.1.2 and Assumption
L.1) and argue that it may help partial identification.} In standard
short-panel cases, a distributional assumption is directly imposed on group
effects, say, $\xi_{v}|\left\{  W_{vh}\right\}  _{h=1}^{N_{v}}\sim N(\bar
{W}_{v}\boldsymbol{\delta}^{\prime},(\sigma_{e}^{\ast})^{2})$ (see Wooldridge,
2010, p. 615); in our setting, this conditional normality of $\xi_{v}$ holds
in an approximate sense with a small order $o_{p}\left(  1\right)  $ term in
(\ref{eq: xi-CR}).\footnote{The original CRE model, the so-called
Mundlak-Chamberlain device, is not derived from a factor structure as in
(\ref{eq: factor}); we do not know of any other paper that considers a CRE
model as a reduced form derived from some factor structure, which can be
thought of as a separate contribution of the present paper. Our derivation of
the approximate CRE model makes the households' information structure
transparent which is required for constructing an econometric framework
consistent with the game structure and \textbf{C2}. If we directly imposed
(\ref{eq: xi-CR}) as is done in standard short panel contexts, it would be
difficult to see which parts of $\xi_{v}$ should be known to households and to
interpret the conditional i.i.d.-ness in \textbf{C2 }given the village
specific variables.} We further assume that
\begin{equation}
\varepsilon_{vh}\left\vert \left(  \{W_{vh}\}_{h=1}^{N_{v}},d_{v}\right)
\right.  \sim N\left(  0,1\right)  , \label{eq: normal-epsilon}%
\end{equation}
which is analogous to specifications in Chamberlain (1980) and Wooldridge
(2010). Putting all of this together, we can write%
\[
A_{vh}=1\{W_{vh}\boldsymbol{c}^{\prime}+c_{0}+\alpha\bar{\pi}_{v}%
+d_{v}\boldsymbol{\delta}^{\prime}+e_{v}+\varepsilon_{vh}\geq0\}\text{ \ for
each }(v,h)
\]
and compute the conditional probability as%
\begin{align}
\Pr\left(  A_{vh}=1|W_{vh}=w;d_{v}\right)   &  =\int F_{\varepsilon}\left(
w\boldsymbol{c}^{\prime}+c_{0}+\alpha\bar{\pi}_{v}+d_{v}\boldsymbol{\delta
}^{\prime}+e\right)  \phi_{\sigma_{e}^{2}}\left(  e\right)  de\nonumber\\
&  =F_{\varepsilon+e}\left(  w\boldsymbol{c}^{\prime}+c_{0}+\alpha\bar{\pi
}_{v}+d_{v}\boldsymbol{\delta}^{\prime}\right) \nonumber\\
&  =F_{\varepsilon+e}\left(  w\boldsymbol{c}^{\prime}+c_{0}+\alpha\bar{\pi
}_{v}+\bar{W}_{v}\boldsymbol{\delta}^{\prime}\right)  +o_{p}\left(  1\right)
\text{,} \label{with}%
\end{align}
where the probability on the LHS is computed under the conditional law given
$d_{v}$ (i.e., it is with respect to the distribution of $-\left(
\varepsilon_{vh}+e_{v}\right)  $); $F_{\varepsilon+e}$ is the CDF of
$-(\varepsilon_{vh}+e_{v})\sim N(0,1+\sigma_{e}^{2})$, and last equality holds
since $d_{v}=\bar{W}_{v}+o_{p}\left(  1\right)  $; and (\ref{with}) can be
shown to hold uniformly over $\left(  v,h\right)  $, $w$, and $\left(
\boldsymbol{c},c_{0},\alpha,\boldsymbol{\delta}\right)  $, under compactness
of the parameter space and Condition \textbf{CR2 (ii)}, imposed in the next
subsection. Denote by $\Phi$ the CDF of $N\left(  0,1\right)  $. Then, the
leading term on the right-hand side (RHS) of (\ref{with}) can be written as%
\begin{equation}
\Phi\left(  \frac{w\boldsymbol{c}^{\prime}+c_{0}+\alpha\bar{\pi}_{v}+\bar
{W}_{v}\boldsymbol{\delta}^{\prime}}{\sqrt{1+(\sigma_{e}^{\ast})^{2}}}\right)
\equiv\Phi(w\boldsymbol{\bar{c}}^{\prime}+\bar{c}_{0}+\bar{\alpha}\bar{\pi
}_{v}+\bar{W}_{v}\boldsymbol{\bar{\delta}}^{\prime})\text{.}
\label{with-approx}%
\end{equation}

For calculating the LHS of (\ref{without}), $e_{v}$ is treated as a part of
the parameter $\gamma_{v}$; in contrast, the LHS of (\ref{with}) is calculated
with respect to the distribution of the unobservable $\varepsilon_{vh}+e_{v}$
across all households over all villages. Both the probabilities in
(\ref{without}) and (\ref{with}) concern the same outcome variable $A_{vh}$
but they differ in conditioning variables. The former probability can be
consistently estimated within each village as $N_{v}\rightarrow\infty$ for
each $v$, while consistent estimation \ of the latter requires $\bar
{v}\rightarrow\infty$ (in addition to $N_{v}\rightarrow\infty$) since village
specific effects $e_{v}$ have to be averaged out to match the probability
(\ref{with}) computed as the integral of $e_{v}$ via its approximation
(\ref{with-approx}).

Putting all this together, our estimation steps are as follows:

\begin{enumerate}
\item First run a probit of $A_{vh}$ on $W_{vh},$ $\bar{W}_{v}$ and $\hat{\pi
}_{v}\equiv\left(  1/N_{v}\right)  \sum_{v=1}^{N_{v}}A_{vh}$ corresponding to
(\ref{with-approx}) to obtain estimates of $\boldsymbol{\bar{c}},\bar{c}%
_{0},\bar{\alpha},\boldsymbol{\bar{\delta}}$ ;

\item Then run a probit of $A_{vh}$ on $W_{vh}$ and village dummies
corresponding to (\ref{without}) and obtain estimates $\boldsymbol{c}^{\prime
},\gamma_{1},\gamma_{2},...\gamma_{\bar{v}}$;

\item Estimate $\sigma_{e}^{\ast}$ by the ratio of the price coefficient in
the former to that in the latter probit;

\item Estimate $c_{0}$ via $\bar{c}_{0}\times\sqrt{1+(\sigma_{e}^{\ast})^{2}}$
and $\alpha$ via $\bar{\alpha}\times\sqrt{1+(\sigma_{e}^{\ast})^{2}}$;

\item From (\ref{without}), estimate $\xi_{v}=\gamma_{v}-$ $c_{0}-\alpha
\hat{\pi}_{v}$.
\end{enumerate}

These are all the quantities we need for empirical calculation of welfare
expressions outlined in Section 3. In the empirical application below, the
parameters $\boldsymbol{\bar{c}},\bar{c}_{0},\bar{\alpha},\boldsymbol{\bar
{\delta}}$ are estimated via pseudo-MLE by running an ordinary probit
regression of $A_{vh}$ on $W_{vh},\bar{\pi}_{v}$ and $\bar{W}_{v}$.

Thus to summarize, it follows from Brock and Durlauf's (2007) arguments,
outlined above, that identification of village specific parameters, $\xi_{v}$,
is in general impossible in the presence of social interaction effects. We
overcome this through our CRE condition (\ref{eq: xi-CR}) which imposes more
structure on $\xi_{v}$ and letting the number of groups, i.e. $\bar
{v}\rightarrow\infty$, as formally stated in the next subsection and the proof
of consistency in Appendix \ref{sec: consistency-proof}. As such, this is a
new finding for social-interactions models. Note that if $e_{v}$ is
non-stochastic (i.e., $\sigma_{e}^{\ast}=0$ and $\xi_{v}=\bar{W}%
_{v}\boldsymbol{\delta}^{\prime}+o_{p}\left(  1\right)  $, instead of
(\ref{eq: xi-CR})), the above scheme using two probit regressions leads to
identification and consistent estimation without the many-village assumption
of $\bar{v}\rightarrow\infty$.\footnote{This identification/estimation scheme
of CRE models using two probit regressions appears new, which allows us to
recover the standard deviation $\sigma_{e}^{\ast}$ (which is not typically
identified in standard short-panel cases; see e.g. p. 617 of Wooldridge, 2010)
and further all the realized values of $e_{1},\dots,e_{\bar{v}}$.}

\subsection{Estimation and Consistency\label{sec: consistency}}

Now we discuss consistency of the estimation procedure outlined in the
previous subsection. We focus on the consistency of the first probit
(\ref{without}), the setting of which is non-standard under the CRE structure
and the many-village asymptotics $\bar{v}\rightarrow\infty$; in contrast, the
setting of the second probit (\ref{with}) or (\ref{with-approx}) can be
analyzed in the same way as in Hahn and Kuersteiner (2011), and a detailed
discussion of its consistency is omitted.\footnote{Our second probit setting
is even simpler than Hahn and Kuersteiner's in that the number of parameters
do not increase $N_{v}\rightarrow\infty$ or $\bar{v}\rightarrow\infty$. A
notable difference is that the objective function $\hat{R}$ incurs some
approximation error $o_{p}\left(  1\right)  $ by using (\ref{with-approx})
instead of (\ref{with}); but given the uniformity of the $o_{p}\left(
1\right)  $ as stated, this error can be negligible for the consistency
discussion.}

For verification of consistency, we assume that the number of households in
each village can be written as%
\begin{equation}
N_{v}=r_{v}N_{0}, \label{eq: Nv-assumption}%
\end{equation}
where $r_{v}\in\left(  \underline{r},\bar{r}\right)  $ is a constant that is
independent of $N_{0}$ and $\bar{v}$ with $0<\underline{r}\leq\bar{r}<\infty$
(i.e., $r_{v}$ is uniformly bounded from below and above), and let
$N=\sum_{v=1}^{\bar{v}}N_{v}$ is the total number of households in all
villages combined. This assumption means that all $N_{1},\dots,N_{\bar{v}}$
grow at the same rate, so that none of villages is asymptotically negligible.

Comparing the two probabilities in (\ref{with-approx}), we can
identify/estimate all the parameters $\gamma_{v}^{\ast}$, $\left(
\boldsymbol{c}^{\ast},c_{0}^{\ast},\alpha^{\ast},\boldsymbol{\delta}^{\ast
}\right)  $, and $\sigma_{e}^{\ast}$, which allows us to obtain estimates of
$\xi_{1},\dots\xi_{\bar{v}}$. Consistent estimation of these parameters can be
achieved through the following two probit regressions.\footnote{Note that our
practical estimation procedure exploits the fixed point restriction by an
iteration process (discussed in Section \ref{sec: unique-multi}). It is
slightly more complicated than the procedure outlined here; but the substance
of our identification arguments does not change between the two procedures;
our exposition here is based on the simpler procedure.} First, a probit of
$A_{vh}$ \ on $W_{vh}$ and village dummies allows us to obtain estimates
\[
\left(  \boldsymbol{\hat{c}},\hat{\gamma}_{1},\dots.,\hat{\gamma}_{\bar{v}%
}\right)  =\underset{\boldsymbol{c}\in \Upsilon_{1};\text{ }\left(  \gamma
_{1},\dots,\gamma_{\bar{v}}\right)  \in \Upsilon\left(  \bar{v}\right)
\times\dots\times \Upsilon\left(  \bar{v}\right)  }{\operatorname{argmax}}%
\hat{Q}\left(  \boldsymbol{c},\gamma_{1},\dots.,\gamma_{\bar{v}}\right)  ,
\]
where the objective function $\hat{Q}$ is%
\begin{equation}
\hat{Q}\left(  \boldsymbol{c},\gamma_{1},\dots.,\gamma_{\bar{v}}\right)
=\dfrac{1}{N}%
%TCIMACRO{\dsum \nolimits_{v=1}^{\bar{v}}}%
%BeginExpansion
{\displaystyle\sum\nolimits_{v=1}^{\bar{v}}}
%EndExpansion%
%TCIMACRO{\dsum \nolimits_{h=1}^{N_{v}}}%
%BeginExpansion
{\displaystyle\sum\nolimits_{h=1}^{N_{v}}}
%EndExpansion
\mathcal{L}_{vh}\left(  \boldsymbol{c},\gamma_{v}\right)  , \label{def: Qhat}%
\end{equation}%
\begin{equation}
\mathcal{L}_{vh}\left(  \boldsymbol{c},\gamma_{v}\right)  :=A_{vh}\log
\Phi\left(  W_{vh}\boldsymbol{c}^{\prime}+\gamma_{v}\right)  +\left(
1-A_{vh}\right)  \log\left(  1-\Phi\left(  W_{vh}\boldsymbol{c}^{\prime
}+\gamma_{v}\right)  \right)  , \label{def: Uvh}%
\end{equation}
$\Upsilon_{1}$ is a compact set in $\mathbb{R}^{d_{W}}$, and $\Upsilon\left(
\bar{v}\right)  $ is a compact interval on $\mathbb{R}$ that may grow as
$\bar{v}\rightarrow\infty$ (specified in (\ref{eq: gamma-space}) in Appendix
\ref{sec: consistency-proof}). Second, via a second probit of $A_{vh}$ on
$(W_{vh},1,\hat{\pi}_{v},\bar{W}_{v})$, we can estimate the coefficients
$(\boldsymbol{\bar{c}}^{\ast},\bar{c}_{0}^{\ast},\bar{\alpha}^{\ast
},\boldsymbol{\bar{\delta}}^{\ast})$ through%
\[
(\widehat{\boldsymbol{\bar{c}}},\widehat{\bar{c}}_{0},\widehat{\bar{\alpha}%
},\widehat{\boldsymbol{\bar{\delta}}})=\underset{(\boldsymbol{\bar{c}},\bar
{c}_{0},\bar{\alpha},\boldsymbol{\bar{\delta}})\in \Upsilon_{2}%
}{\operatorname{argmax}}\hat{R}(\boldsymbol{\bar{c}},\bar{c}_{0},\bar{\alpha
},\boldsymbol{\bar{\delta}}),
\]
where the objective function $\hat{R}$ is defined as%
\begin{align*}
\hat{R}(\boldsymbol{\bar{c}},\bar{c}_{0},\bar{\alpha},\boldsymbol{\bar{\delta
}})  &  =\frac{1}{N}\sum\nolimits_{v=1}^{\bar{v}}\sum\nolimits_{h=1}^{N_{v}%
}\left[  A_{vh}\log\Phi(W_{vh}\boldsymbol{\bar{c}}^{\prime}+\bar{c}_{0}%
+\bar{\alpha}\hat{\pi}_{v}+\bar{W}_{v}\boldsymbol{\bar{\delta}}^{\prime
})\right. \\
&  \left.  +\left(  1-A_{vh}\right)  \log(1-\Phi(W_{vh}\boldsymbol{\bar{c}%
}^{\prime}+\bar{c}_{0}+\bar{\alpha}\hat{\pi}_{v}+\bar{W}_{v}\boldsymbol{\bar
{\delta}}^{\prime}))\right]  ,
\end{align*}
and $\Upsilon_{2}$ is a compact set in $\mathbb{R}^{2d_{W}+1}$.\footnote{By
the results in Section \ref{sec: eqm-beliefs}, we have $\bar{\pi}_{v}=E\left[
A_{vh}|d_{v},e_{v}\right]  $ in the equilibrium, which can be consistently
estimated by an average within each village, $\hat{\pi}_{v}=\frac{1}{N_{v}%
}\sum_{h=1}^{N_{v}}A_{vh}$.} Then, we can recover an estimate of $\hat{\sigma
}_{e}^{2}$ through a ratio of the first (or any other) components of
$\boldsymbol{\hat{c}}$ and $\widehat{\boldsymbol{\bar{c}}}$, which yields
$\sqrt{1+\hat{\sigma}_{e}^{2}}$ and further $(\hat{c}_{0},\hat{\alpha
},\boldsymbol{\hat{\delta}})=(\widehat{\bar{c}}_{0},\widehat{\bar{\alpha}%
},\widehat{\boldsymbol{\bar{\delta}}})\times\sqrt{1+\hat{\sigma}_{e}^{2}}$.
Finally, given these estimates, we can compute
\[
\hat{\xi}_{v}=\hat{\gamma}_{v}-\hat{c}_{0}-\hat{\alpha}\hat{\pi}_{v}%
\]
for each $v$, which then allows us to calculates the welfare estimates of
Section \ref{sec: Welfare}.

The proof of consistency for the first probit is involved due to the CRE
structure and the formal steps are provided in Appendix
\ref{sec: consistency-proof}. The key substantive assumption delivering
consistency is as follows:

\begin{description}
\item[CR1] (i) For each $v$, let $\lambda_{v}^{\min}$ be the minimum of the
eigenvalues of $E_{\boldsymbol{\omega}_{v}}[(W_{vh},1)^{\prime}(W_{vh},1)]$,
which is a square (real symmetric) matrix of order $d_{W}+1$, where $d_{W}$ is
the dimension of $W_{vh}$. Then, $\inf_{v\geq1}$ $\lambda_{v}^{\min}>0$. (ii)
The covariates and unobservables $(W_{vh},\xi_{v},\varepsilon_{vh})\ $satisfy
(\ref{eq: factor}), (\ref{eq: xi-CR-normal}), and (\ref{eq: normal-epsilon}).
(iii) Let $\bar{W}_{v}^{\ast}:=\operatorname*{plim}\limits_{N_{v}%
\rightarrow\infty}\frac{1}{N_{v}}\sum_{h=1}^{N_{v}}W_{vh}%
(=E_{\boldsymbol{\omega}_{v}}\left[  W_{vh}\right]  )$ for each $v$ (the
existence of $\bar{W}_{v}^{\ast}$ is supposed), and%
\[
\mathbb{\bar{W}}:=\left[
\begin{array}
[c]{ccc}%
1 & \bar{\pi}_{1} & \bar{W}_{1}^{\ast}\\
\vdots &  & \vdots\\
1 & \bar{\pi}_{\bar{v}} & \bar{W}_{\bar{v}}^{\ast}%
\end{array}
\right]  ,
\]
which is a $\bar{v}\times\left(  2+d_{W}\right)  $ matrix. Then, suppose that
$\mathbb{\bar{W}}$ is of rank $2+d_{W}$.
\end{description}

Condition (i) of \textbf{CR1} allows us to identify $(\boldsymbol{c}^{\ast
},\gamma_{v}^{\ast})$ for each $v$ as the maximizer of $Q_{v}\left(
\boldsymbol{c},\gamma_{v}\right)  =E_{\boldsymbol{\omega}_{v}}\left[
\mathcal{L}_{vh}\left(  \boldsymbol{c},\gamma_{v}\right)  \right]  $ whose
empirical analogue $\hat{Q}_{v}$ defined in (\ref{def: Qvhat-Qv}) is a
constituent of the objective $\hat{Q}$ defined in (\ref{def: Qhat}); see the
expression (\ref{eq: def-Qhat}) in Appendix \ref{sec: consistency-proof}. The
condition on the uniform lower bound of $\lambda_{v}^{\min}$ together with
(\ref{eq: Nv-assumption}) guarantees the identification/consistency of all
$\gamma_{v}^{\ast}$ when $\bar{v}\rightarrow\infty$. (iii) of \textbf{CR1 }is
used to verify identification of the parameters in the second probit,
$(\boldsymbol{\bar{c}}^{\ast},\bar{c}_{0}^{\ast},\bar{\alpha}^{\ast
},\boldsymbol{\bar{\delta}}^{\ast})$, where we note that by the definition in
(\ref{without}) and the reduced form expression of $\xi_{v}$ in
(\ref{eq: xi-CR}), we can write the true village-specific effect $\gamma
_{v}^{\ast}=c_{0}^{\ast}+\alpha^{\ast}\bar{\pi}_{v}+\bar{W}_{v}^{\ast
}\boldsymbol{\delta}^{\prime}+e_{v}$.

The formal consistency statement is expressed in the following proposition:

\begin{proposition}
\label{prop: consistency-many}Suppose that Conditions \textbf{C1},
\textbf{C2}, \textbf{CR1 (i)-(ii)}, and the technical condition \textbf{CR2}
(stated in Appendix \ref{sec: consistency-proof}), Specifications
(\ref{eq: xi-CR-normal}) and (\ref{eq: normal-epsilon}) hold and that $\bar
{v}$ and $N_{0}$ satisfy Assumption (\ref{eq: Nv-assumption}) with%
\begin{equation}
\left.  \bar{v}^{4(\sigma_{e}^{\ast})^{2}}\left(  \log\bar{v}\right)
^{3+4(\sigma_{e}^{\ast})^{2}}\left(  \log N_{0}\right)  \right/
N_{0}\rightarrow0\text{ \ (as }N_{0}\rightarrow\infty\text{).}
\label{eq: rate}%
\end{equation}
Then, as $N_{0}\rightarrow\infty$ and $\bar{v}\rightarrow\infty$,%
\[
\left\Vert \boldsymbol{\hat{c}}-\boldsymbol{c}^{\ast}\right\Vert
\overset{p}{\rightarrow}0\text{ \ and \ }\max_{v\in\left\{  1,\dots,\bar
{v}\right\}  }\left\vert \hat{\gamma}_{v}-\gamma_{v}^{\ast}\right\vert
\overset{p}{\rightarrow}0.
\]

\end{proposition}

The proof is provided in Appendix \ref{sec: consistency-proof}.

Verification of this proposition for the first probit is not trivial. This is
because (I) given the asymptotic assumption $\bar{v}\rightarrow\infty$,
required for the consistency in the second probit, the number of parameters
tends to infinity; and (II) each parameter $\gamma_{v}^{\ast}=c_{0}^{\ast
}+\alpha^{\ast}\bar{\pi}_{v}+\bar{W}_{v}^{\ast}\boldsymbol{\delta}^{\ast
\prime}+e_{v}$ includes a realization of $e_{v}\sim N\left(  0,\sigma_{e}%
^{2}\right)  $ and thus the maximum of realized $\left\vert \gamma_{1}^{\ast
}\right\vert ,\dots,\left\vert \gamma_{\bar{v}}^{\ast}\right\vert $ grows with
positive probability as $\bar{v}\rightarrow\infty$ since $e_{v}$ has unbounded
support $(-\infty,\infty)$.

Several previous papers on panel models have considered a setting like (I),
such as Hahn and Kuersteiner (2011) and Fern\'{a}ndez-Val and Weidner (2016).
However, in these papers, the \textquotedblleft growing \textit{magnitude} of
parameters\textquotedblright\ as (II) is not allowed for, i.e., typically, all
parameters are supposed to be in a fixed compact set.\footnote{This is
explicitly assumed in Hahn and Kuersteiner's Condition 4, while cases like
(II) have to be typically excluded by Fern\'{a}ndez-Val and Weidner's
Assumption 4.1 (v) (the presence of uniform bounds $b_{\min}$ and $b_{\max}$
for the derivative of their objective function).}

These problems, in particular (II), make it hard to establish the
identification of $\left(  \boldsymbol{c}^{\ast},\gamma_{v}^{\ast}\right)  $.
However, we can overcome this by showing that given the I.I.D. $\left\{
e_{v}\right\}  _{v\geq1}$, the maximum of $\left\vert e_{1}\right\vert
,\dots,\left\vert e_{\bar{v}}\right\vert $ is bounded by $\sqrt{4(\sigma
_{e}^{\ast})^{2}\log[\bar{v}\left(  \log\bar{v}\right)  ^{t}]}$ almost surely
for any $t>1/2$ (Lemma \ref{lem: realized-gamma}). This result allows us to
restrict possible support of each $\gamma_{v}^{\ast}$ as a compact set that
grows (as $\bar{v}\rightarrow\infty$); and within this support, if $||\left(
\boldsymbol{c},\gamma_{v}\right)  -\left(  \boldsymbol{c}^{\ast},\gamma
_{v}^{\ast}\right)  ||>\epsilon_{1}$, we can always find some constant
$C_{Q}>0$ such that $Q_{v}\left(  \boldsymbol{c}^{\ast},\gamma_{v}^{\ast
}\right)  -Q_{v}\left(  \boldsymbol{c},\gamma_{v}\right)  \geq C_{Q}[\bar
{v}(\log\bar{v})]^{-2(\sigma_{e}^{\ast})^{2}}$ almost surely (shown in Lemma
\ref{lem: identification-lower}), which means the identification of $\left(
\boldsymbol{c}^{\ast},\gamma_{v}^{\ast}\right)  $ as the unique maximizer of
$Q_{v}$. If the support were not restricted, the lower bound of the difference
between $Q_{v}\left(  \boldsymbol{c}^{\ast},\gamma_{v}^{\ast}\right)  $ and
$Q_{v}\left(  \boldsymbol{c},\gamma_{v}\right)  $ would be zero as $\bar
{v}\rightarrow\infty$, which would complicate identification and
consistency.\footnote{To see this, let $||\boldsymbol{c}-\boldsymbol{c}^{\ast
}||>\epsilon_{1}$ and $\gamma_{v}=\gamma_{v}^{\ast}$, for example. Then, for
any $\boldsymbol{c}\neq\boldsymbol{c}^{\ast}$, $[Q_{v}\left(  \boldsymbol{c}%
^{\ast},\gamma_{v}^{\ast}\right)  -Q_{v}\left(  \boldsymbol{c},\gamma
_{v}^{\ast}\right)  ]\rightarrow0$ as $\left\vert \gamma_{v}^{\ast}\right\vert
\rightarrow\infty$, which holds for some $v$ as $\bar{v}\rightarrow\infty$
since $\gamma_{v}^{\ast}=c_{0}^{\ast}+\alpha^{\ast}\bar{\pi}_{v}+\bar{W}%
_{v}^{\ast}\boldsymbol{\delta}^{\ast}{}^{\prime}+e_{v}$ includes the normally
distributed variable $e_{v}$. Note that for large $\left\vert \gamma_{v}%
^{\ast}\right\vert $, both $\Phi\left(  W_{vh}(\boldsymbol{c}^{\ast})^{\prime
}+\gamma_{v}^{\ast}\right)  $ and $\Phi\left(  W_{vh}\boldsymbol{c}^{\prime
}+\gamma_{v}^{\ast}\right)  $ (the normal CDF's) are very close to $1$ or $0$
regardless of $\boldsymbol{c}\neq\boldsymbol{c}^{\ast}$ (i.e., variation of
$\Phi\left(  \cdot\right)  $ is tiny in the tail region); thus, the difference
between $Q_{v}\left(  \boldsymbol{c}^{\ast},\gamma_{v}^{\ast}\right)  $ and
$Q_{v}\left(  \boldsymbol{c},\gamma_{v}^{\ast}\right)  $ is very small, which
are computed through these CDF's.}

The rate condition (\ref{eq: rate}) requires $\bar{v}$ to grow slower than
$N_{0}$ in particular when the variance of $e_{v}$ is large, which is
reasonable in the context of our empirical application, where $\bar{v}=11$ may
be regarded as small relative to $N_{v}=r_{v}N_{0}$ which is $195$ on
average.\footnote{Note that regardless of the rate condition (\ref{eq: rate}),
for the first probit, the magnitude of $\bar{v}$ does not directly affect
estimation precision of $\left(  \boldsymbol{\hat{c}},\hat{\gamma}_{v}\right)
$ (up to first order), whose convergence rate is $1/\sqrt{N_{v}}$. In
contrast, the rate condition matters for the second probit, for which the
integration with respect to $e_{v}$ has to be approximated by the sum over
$e_{1},\dots,e_{\bar{v}}$.} It guarantees that the difference between
$Q_{v}\left(  \boldsymbol{c}^{\ast},\gamma_{v}^{\ast}\right)  $ and
$Q_{v}\left(  \boldsymbol{c},\gamma_{v}\right)  $ is larger than that between
$\hat{Q}_{v}$ and $Q_{v}$, justifying the maximizer of $\hat{Q}_{v}$ as an
estimator, implying the consistency.

To see how our two-step probit performs in finite samples, we implemented a
small Monte Carlo exercise that is reported in Appendix \ref{sec: simulation}%
\ and shows reassuring results for magnitudes of sample size resembling ours.

\subsection{Calculation of Predicted Demand and
Welfare\label{sec: Group-specific}}

In order to calculate our welfare-related quantities, we need to estimate the
structural choice probabilities $q_{1}\left(  p,y,\pi\right)  $ and the
equilibrium values of the choice probabilities, $\pi_{0}$ and $\pi_{1}$, in
the pre and post intervention situations. To do this we will assume that the
unobservables $\varepsilon_{vh}\left(  =u_{vh}^{1}-u_{vh}^{0}\right)  $ are
independent of price and income, conditional on unobserved
village-effects.\footnote{$q_{1}\left(  p,y,\pi\right)  $ defined in
(\ref{14}) as the probability computed with respect to the distribution of
$\eta^{0}-\eta^{1}\left(  =\eta_{vh}^{0}-\eta_{vh}^{1}\right)  $. But given
the specification of $\boldsymbol{\eta}_{vh}=\left(  \eta_{vh}^{0},\eta
_{vh}^{1}\right)  $ in Sections \ref{sec: eqm-beliefs}-\ref{sec: Estimators},
it should be now interpreted as the one with respect to the distribution of
$\varepsilon_{vh}$ (conditionally on the village-fixed effects $\left(
d_{v},e_{v}\right)  $ or $\xi_{v}$), i.e., the probability (\ref{with}) as a
function of $w=\left(  p,y\right)  $ and $\bar{\pi}_{v}=\pi$.} Note that
prices in our data are randomly assigned, so the endogeneity concern is solely
regarding income. Under income endogeneity, Bhattacharya (2018) had discussed
interpretation of welfare distributions as conditional on income (see our
discussion at the end of Section \ref{sec: Empirical-specification-result}%
).\smallskip

\noindent\textbf{Welfare\ Calculations}: Once we have estimates of the
structural choice probabilities from the parametric model above, we can
proceed with welfare calculation in presence of social spillovers and
unobserved group-effects, as follows. Consider an initial situation where
everyone faces the unsubsidized price $p_{0}$, so that the predicted take-up
rate $\pi_{0v}$ in village $v$ solves:%
\begin{equation}
\pi_{0v}=\int\Phi\left(  c_{0}+c_{1}p_{0}+c_{2}y+\alpha\pi_{0v}+\xi
_{v}\right)  dF_{Y}^{v}\left(  y\right)  \text{,} \label{3}%
\end{equation}
where $F_{Y}^{v}\left(  y\right)  $ is the CDF of income $Y_{vh}$ in village
$v$. Section 5.2 above outlines the calculations of all parameters including
the $\xi_{v}$'s appearing in (\ref{3}).

Now consider a policy induced price regime $p_{0}$ for ineligibles (with
wealth larger than $a$) and $p_{1}$ for the eligible (with wealth less than or
equal to $a$). Then the resulting usage $\pi_{1}=\pi_{1v}$ in village $v$ is
obtained via solving the fixed point $\pi_{1v}$ in the equation:%
\begin{equation}
\pi_{1v}=\int\left[
\begin{array}
[c]{c}%
1\left\{  y\leq\tau\right\}  \Phi\left(  c_{0}+c_{1}p_{1}+c_{2}y+\alpha
\pi_{1v}+\xi_{v}\right) \\
+1\left\{  y>\tau\right\}  \Phi\left(  c_{0}+c_{1}p_{0}+c_{2}y+\alpha\pi
_{1v}+\xi_{v}\right)
\end{array}
\right]  dF_{Y}^{v}\left(  y\right)  \text{.} \label{4}%
\end{equation}
For fixed $\left(  p_{0},p_{1}\right)  $, the right-hand sides of the above
fixed point equations (\ref{3}) and (\ref{4}), viewed as functions of
$\pi_{0v}$ and $\pi_{1v}$ respectively, are a map from $\left[  0,1\right]  $
to $\left[  0,1\right]  $ ($\pi_{0v}$ and $\pi_{1v}$ are probabilities taking
values in $\left[  0,1\right]  $). By the continuity of $\Phi$ and Brouwer's
fixed point theorem, there is at least one solution in $\pi_{0v}$ and
$\pi_{1v}$, respectively, implying \textquotedblleft
coherence\textquotedblright. However, there may be multiple solutions, and
then our welfare expressions would have to be applied separately for each
feasible pair of values $\left(  \pi_{0v},\pi_{1v}\right)  $ (see our
discussion on the equilibrium multiplicity in Section \ref{sec: unique-multi}%
). Note that even if the solutions to (\ref{3}) and (\ref{4}) are unique, our
expressions in Theorem \ref{th: elig} and Corollary \ref{corollary: inelig}
imply that welfare distributions are still not point-identified.

Finally, mean welfare effect of the policy change in village $v$ can be
calculated as%
\begin{equation}
\mathcal{W}_{v}=\int\left[  1\left\{  y\leq\tau\right\}  \mathcal{W}%
_{v}^{\mathrm{Elig}}\left(  y\right)  +1\left\{  y>\tau\right\}
\mathcal{W}_{v}^{\mathrm{Inelig}}\left(  y\right)  \right]  dF_{Y}^{v}\left(
y\right)  \text{,} \label{6}%
\end{equation}
where $\mathcal{W}_{v}^{\mathrm{Elig}}\left(  y\right)  $ and $\mathcal{W}%
_{v}^{\mathrm{Inelig}}(y)$ are mean welfare at income $y$ in village $v$,
calculated from (\ref{1}) for the eligible and (\ref{2}) for the ineligible,
respectively, using $\pi_{0v}$ and $\pi_{1v}$ as the predicted take-up
probabilities in village $v$ (analogous to $\pi_{0}$ and $\pi_{1}$ in
(\ref{1}) and (\ref{2})), $\alpha_{1}\in\left[  0,\alpha\right]  $ as above).

\subsection{Equilibrium Existence and Uniqueness\label{sec: unique-multi}}

In this section, we present sufficient conditions of unique equilibrium and
then discuss multiplicity of equilibria as well as its implication for our
demand and welfare estimation. Note that given the parametric model, our
equilibrium condition (or fixed point restriction) takes the form:%
\begin{equation}
\pi_{v}=\int\Phi\left(  w\boldsymbol{c}^{\prime}+c_{0}+\alpha\pi_{v}+\xi
_{v}\right)  d\tilde{F}_{W}\left(  w\right)  \text{ \ (}v=1,\dots,\bar
{v}\text{)}, \label{eq: FP-integral}%
\end{equation}
where $\tilde{F}_{W}\left(  \cdot\right)  $ is a CDF of $W_{vh}=\left(
P_{vh},Y_{vh}\right)  $. Some different $\tilde{F}_{W}$ has to be used,
depending on the context. For example, on the RHS of (\ref{3}), $\tilde{F}%
_{W}$ corresponds to the distribution that gives a point mass for
$P_{vh}=p_{0}$ (when considering a counterfactual analysis with this $p_{0}$)
and $Y_{vh}\sim F_{Y}^{v}\left(  y\right)  $ (the marginal distribution of the
observable variable $Y_{vh}$); and on the RHS of (\ref{4}), a different
$\tilde{F}_{W}$ representing the new subsidy scheme is used.

As stated in the previous subsection, existence of a solution to
(\ref{eq: FP-integral}) follows from Brouwer's fixed point theorem. It is also
clear that if $\alpha\leq0$, the solution is unique. On the other hand, if
$\alpha>0$, a contraction condition is sufficient for uniqueness. The
contraction condition (\ref{eq: contraction-scalar}) in Proposition
\ref{prop: belief-symmetricity} can be verified on a case by case basis. In
particular, for the linear index model, it is easy to see that the condition
for contraction is%
\[
\left\vert \alpha\right\vert \sup_{e\in\mathbb{R}}f_{\varepsilon}\left(
e\right)  <1,
\]
where $\alpha$ denotes the social interaction term, and $f_{\varepsilon
}\left(  \cdot\right)  $ denotes the probability density of $-\varepsilon
_{vh}$. In a probit specification in which $\varepsilon_{vh}$ is the standard
normal variable, $\sup_{e\in\mathbb{R}}f_{\varepsilon}\left(  e\right)
=1/\sqrt{2\pi}$ and thus we require $\left\vert \alpha\right\vert <\sqrt{2\pi
}(\simeq2.506)$ and for a logit specification, $\sup_{e\in\mathbb{R}%
}f_{\varepsilon}\left(  e\right)  =1/4$, and thus $\left\vert \alpha
\right\vert <4$. We check that our probit estimate satisfies this condition in
our application.

Note that the contraction condition (\ref{eq: contraction-scalar}) is not
necessary for uniqueness. That is, if a solution $(\bar{\Pi}_{v1},\dots
,\bar{\Pi}_{vN_{v}})$ to the system of equations (\ref{eq: solution-BigPi-IID}%
) is unique and $m_{v}\left(  \cdot\right)  $ (defined in
(\ref{eq: m-function})), which also depends on the distribution of covariates,
has a unique fixed point (i.e., a solution to $r=m_{v}\left(  r\right)  $ is
unique), the uniqueness for the equilibrium solution holds. We have imposed
(\ref{eq: contraction-scalar}) as it is a convenient condition that guarantees
uniqueness equilibrium solution; it is also typically easy to verify in
applications.\medskip

\noindent\textbf{The Maximum Number of Equilibrium Solutions:} The variable
$w$ in (\ref{eq: FP-integral}) is multivariate but its RHS can be written as
$\int\Phi\left(  q+c_{0}+\alpha\pi_{v}+\xi_{v}\right)  dF_{W\boldsymbol{c}%
^{\prime}}^{v}\left(  q\right)  $ in terms of the integral with respect to the
univariate variable $W_{vh}\boldsymbol{c}^{\prime}$, using its CDF
$F_{W\boldsymbol{c}^{\prime}}^{v}$ and support $[\underline{q}_{v},\bar{q}%
_{v}]$, where existence of the finite endpoints of the support of
$W_{vh}\boldsymbol{c}^{\prime}$ is guaranteed under Condition \textbf{CR2}
(ii) (provided in Appendix A.3). Then, applying the mean value theorem for
Stieltjes integrals, we can find some $q_{v}\in\lbrack\underline{q}_{v}%
,\bar{q}_{v}]$ such that%
\begin{align*}
\int_{\underline{q}_{v}}^{\bar{q}_{v}}\Phi\left(  q+c_{0}+\alpha\pi_{v}%
+\xi_{v}\right)  dF_{W\boldsymbol{c}^{\prime}}^{v}\left(  q\right)   &
=\Phi\left(  q_{v}+c_{0}+\alpha\pi_{v}+\xi_{v}\right)  \int_{\underline{q}%
_{v}}^{\bar{q}_{v}}dF_{W\boldsymbol{c}^{\prime}}^{v}\left(  q\right) \\
&  =\Phi\left(  q_{v}+c_{0}+\alpha\pi_{v}+\xi_{v}\right)  \text{.}%
\end{align*}
Therefore, for each $v$, the fixed point restriction can be re-written as
\begin{equation}
\pi_{v}=\Phi\left(  q_{v}+c_{0}+\alpha\pi_{v}+\xi_{v}\right)  \text{ \ for
each }v\text{.} \label{eq: mean-value-FP}%
\end{equation}
Here, by shape properties of the standard normal CDF $\Phi\left(  x\right)  $
(e.g., its derivative is the normal density $\phi\left(  x\right)  $ with the
two inflection points, $-1$ and $1$), we can see that (\ref{eq: mean-value-FP}%
) has at most three solutions ($\pi_{v}=0$ or $1$ cannot be a solution since
each value of $\Phi$ is on $\left(  0,1\right)  $). In particular, a continuum
of solutions cannot exist since $\Phi\left(  x\right)  $ does not have a
linear part on any interval in the real line. This is summarized via the
following proposition whose proof is also evident from the above discussion.

\begin{proposition}
\label{prop: three-eqm}For each $v$, the maximum number of (equilibrium)
solutions to (\ref{eq: FP-integral}) is three.
\end{proposition}

This is analogous to Proposition 2 of BD01 for the logit distribution case
without covariates $W_{vh}$. The number of equilibria is determined by the
value of $\alpha$ as well as those of $q_{v}$, $c_{0}$, and (in particular)
the unobserved group effects $\xi_{v}$. We now discuss implications of
equilibrium multiplicity in estimation.\medskip

\noindent\textbf{Preference and Demand Estimation under Multiple Equilibria:
}Our estimation involves maximum likelihood in nonlinear models with strategic
interactions; thus, it is useful to recall Hahn and Moon (2010) who consider
estimation of game theoretic models possibly with multiple equilibria under a
panel setting, i.e., observations from many markets are obtained repeatedly
over several time periods. These authors interpret unobservables affecting
equilibrium selection as an unobserved fixed effect, assuming that which
equilibrium is selected in one group is fully characterized by each unobserved
fixed effect, which may be correlated with observed characteristics. Then they
show that equilibrium multiplicity is unlikely to be a problem in panel
settings when the number of equilibria is finite which, in the panel
terminology, is equivalent to the fixed effect having finitely many support
points. However, this result requires that the number of equilibria be
constant across parameters and covariates, which is a strong restriction.

In contrast, in our setting, each village/group effect can be interpreted as a
part of players' preference parameters and it does not fully determine which
equilibrium arises. That is, under the same value of group/village effect
$\xi_{v}$, we may observe different equilibria in village $v$. Given the
knowledge of all the preference parameters and the distribution of all the
covariates and error variables, one can determine the number of possible
equilibria and possible values of beliefs by investigating all solutions of
the fixed point equation (\ref{3}), but cannot in advance see which
equilibrium would be realized. We note that our model setting is not equipped
with any equilibrium selection mechanism (just like thus in Brock and
Durlauf's). In our setup, given $\bar{\pi}_{v}$ and $\xi_{v}$, we do not need
to solve the equilibrium system to predict each player's behavior, which is
determined by $A_{vh}=1\{W_{vh}\boldsymbol{c}^{\prime}+c_{0}+\alpha\bar{\pi
}_{v}+\xi_{v}+\varepsilon_{vh}\geq0\}$, and preference parameters can be
estimated without exploiting the equilibrium fixed point
condition.\footnote{This is quite different from the so-called two step
estimation approach (typically used in the empirical industrial organization
game literature) as in Hotz and Miller (1993) and Pesendorfer and
Schmidt-Dengler (2008), in which equilibrium conditions provide the basis of
identification.} This is possible since (A)\ the only objects that are
endogenously determined in equilibrium are $\bar{\pi}_{v}$ ($v=1,\dots,\bar
{v}$), which can be identifiable as $E_{v}\left[  A_{vh}\right]  $ and thus
consistently estimated in our \textquotedblleft large market\textquotedblright%
\ setting with a large number of players in each village (as discussed in
Section \ref{sec: eqm-beliefs}); and (B) the group effects parameters $\xi
_{v}$ can also be consistently estimated under the correlated random effects structure.

These features of our (and Brock and Durlauf's) modeling allow us to avoid
intrinsically difficult problems caused by the equilibrium multiplicity. In
particular, in any equilibrium realization, the same preference parameters
(that are invariant under different equilibria) can be identified and thus
consistently estimated. As a further illustration, consider a case in which
there are three equilibria, i.e., the fixed point equation (\ref{3}) has three
solutions, $\bar{\pi}_{v}^{H}$, $\bar{\pi}_{v}^{M}$, and $\bar{\pi}_{v}^{L}$,
where we let $\bar{\pi}_{v}^{H}>\bar{\pi}_{v}^{M}>\bar{\pi}_{v}^{L}$ and call
each of equilibria as $H$, $M$, or $L$. Then, depending on $t\in\{H,M,L\}$, we
have a different discrete choice model:%
\[
A_{vh}=1\{W_{vh}\boldsymbol{c}^{\prime}+c_{0}+\alpha\pi_{v}^{t}+\xi
_{v}+\varepsilon_{vh}\geq0\}.
\]
Note that the outcome variable changes depending on which equilibrium arises
(i.e., one can write $A_{vh}=A_{vh}^{t}$); and thus, for each equilibrium $t$,
$\pi_{v}^{t}$ can be consistently estimated by $\frac{1}{N_{v}}\sum
_{h=1}^{N_{v}}A_{vh}$. By plugging in the estimated version of $\pi_{v}%
=\pi_{v}^{t}$, we can construct objective functions (i.e., likelihood
functions) to be maximized, based on which we can consistently estimate the
preference parameters regardless of the realized equilibrium $t\in\{H,M,L\}$.
As for consistency, the preference parameters can be identified as the unique
maximizers of the limits of the objective functions. In particular, for our
\emph{first} probit regression (cf. Section \ref{sec: consistency}), the
choice probability is
\[
\Pr\left(  A_{vh}^{t}=1|W_{vh}=w;d_{v},e_{v}\right)  =\Phi(w(\boldsymbol{c}%
^{\ast})^{\prime}+\gamma_{v}^{\ast t}),
\]
under equilibrium $t$, where $\gamma_{v}^{\ast t}(=c_{0}^{\ast}+\alpha^{\ast
}\bar{\pi}_{v}^{t}+\xi_{v}$) depends on which equilibrium has occurred, and
the (limit) objective function (under equilibrium $t$) is
\[
Q_{v}\left(  \boldsymbol{c},\gamma_{v}\right)  =E_{v}\left[  A_{vh}^{t}%
\log\Phi\left(  W_{vh}\boldsymbol{c}^{\prime}+\gamma_{v}\right)  +\left(
1-A_{vh}^{t}\right)  \log\left(  1-\Phi\left(  W_{vh}\boldsymbol{c}^{\prime
}+\gamma_{v}\right)  \right)  \right]  ,
\]
where $E_{v}\left[  \cdot\right]  =E\left[  \cdot|d_{v},e_{v}\right]  $. Then,
given the true parameter $(\boldsymbol{c}^{\ast},\gamma_{v}^{t\ast}%
)(\neq\left(  \boldsymbol{c},\gamma_{v}\right)  )$%
\begin{align*}
&  Q_{v}\left(  \boldsymbol{c}^{\ast},\gamma_{v}^{\ast t}\right)
-Q_{v}\left(  \boldsymbol{c},\gamma_{v}\right) \\
&  =-E_{v}\left[  \Phi(W_{vh}(\boldsymbol{c}^{\ast})^{\prime}+\gamma
_{v}^{t\ast})\log\left(  \frac{\Phi\left(  W_{vh}\boldsymbol{c}^{\prime
}+\gamma_{v}\right)  }{\Phi\left(  W_{vh}\boldsymbol{c}^{\ast\prime}%
+\gamma_{v}^{t\ast}\right)  }\right)  \right. \\
&  \left.  +\left(  1-\Phi(W_{vh}(\boldsymbol{c}^{\ast})^{\prime}+\gamma
_{v}^{t\ast})\right)  \log\left(  \frac{1-\Phi\left(  W_{vh}\boldsymbol{c}%
^{\prime}+\gamma_{v}\right)  }{1-\Phi\left(  W_{vh}\boldsymbol{c}^{\ast\prime
}+\gamma_{v}^{t\ast}\right)  }\right)  \right] \\
&  >-E_{v}\log\left\{  \Phi\left(  W_{vh}\boldsymbol{c}^{\prime}+\gamma
_{v}\right)  +1-\Phi\left(  W_{vh}\boldsymbol{c}^{\prime}+\gamma_{v}\right)
\right\}  =-\log\left\{  1\right\}  =0,
\end{align*}
where the equality has used the law of iterated expectation and correct
specification assumption (i.e., $E_{v}\left[  A_{vh}^{t}|W_{vh}\right]
=\Phi(W_{vh}(\boldsymbol{c}^{\ast})^{\prime}+\gamma_{v}^{t\ast})$), and the
strict inequality follows from Jensen's inequality, the strict convexity of
$-\log\left(  \cdot\right)  $, and the rank condition on $(W_{vh},1)$
(\textbf{CR1} (ii)).\footnote{This identification argument is standard (as in
Newey and McFadden, 1994, Example 1.2 on page 2125). \newline While we believe
that this inequality for each $v$ is useful for illustrating identification
under the equilibrium multiplicity, it is not sufficient for consistency when
$\bar{v}\rightarrow\infty$ i.e., Proposition \ref{prop: consistency-many}. For
verification of the proposition, we have derived uniform lower bound of
$Q_{v}\left(  \boldsymbol{c}^{\ast},\gamma_{v}^{\ast t}\right)  -Q_{v}\left(
\boldsymbol{c},\gamma_{v}\right)  $ (Lemma \ref{lem: identification-lower} in
the Appendix).} That is, we have%
\[
Q_{v}\left(  \boldsymbol{c}^{\ast},\gamma_{v}^{t\ast}\right)  >Q_{v}\left(
\boldsymbol{c},\gamma_{v}\right)  \text{ for any }\left(  \boldsymbol{c}%
,\gamma_{v}\right)  \neq\left(  \boldsymbol{c}^{\ast},\gamma_{v}^{t\ast
}\right)  .
\]
Thus, $\left(  \boldsymbol{c}^{\ast},\gamma_{v}^{t\ast}\right)  $ is
identified as the unique maximizer of $Q_{v}\left(  \cdot,\cdot\right)  $; in
particular, the same $\boldsymbol{c}^{\ast}$ is always identified under any of
equilibrium $t$, while the identified $\gamma_{v}^{t\ast}$ depends on $t$,
which corresponds to our specification of $\gamma_{v}^{t\ast}=c_{0}^{\ast
}+\alpha^{\ast}\bar{\pi}_{v}^{t}+\xi_{v}$ (including the equilibrium object
$\bar{\pi}_{v}^{t}$). The identification argument for the \emph{second} probit
under the CRE structure is analogous (so details are omitted here): under any
equilibrium $t$, the same $(\boldsymbol{\bar{c}}^{\ast},\bar{c}^{\ast}%
,\bar{\alpha}^{\ast}\boldsymbol{\bar{\delta}}^{\ast})$ is obtained as the
unique maximizer of the limit of $\hat{R}(\boldsymbol{\bar{c}},\bar{c}%
_{0},\bar{\alpha},\boldsymbol{\bar{\delta}})$. That is, given the CRE
structure, the same group effects $\xi_{v}$ can be identified in any
equilibrium $t$ through the procedure outlined in Section
\ref{sec: consistency}. Thus our estimation procedure need not to use the
equilibrium fixed point restriction and is robust to equilibrium multiplicity.

In our empirical application, we use an iterative estimator that exploits the
equilibrium fixed point restriction as in Pastorello, Patilea, and Renault
(2003), and Dominitz and Sherman (2005). This estimator is more efficient
under correct specification than the estimator that does not use
(\ref{eq: FP-integral}). For this iterative estimation, the contraction
property of the fixed point mapping (implying unique equilibrium) is key, the
sufficient condition for which is a \textquotedblleft small $\alpha
$\textquotedblright. Through preliminary investigation i.e. checking estimates
obtained without exploiting the fixed-point condition (\ref{eq: FP-integral}),
we have confirmed that the estimate of $\alpha$ is small, so that the
contraction condition is satisfied.\medskip

\noindent\textbf{Counterfactual Welfare Estimation under Equilibrium
Multiplicity: }As discussed above, we do not need to solve the equilibrium
condition (\ref{eq: FP-integral}) for estimation of preference parameters and
the $\xi_{v}$'s (we do use the equilibrium conditions to predict the
counterfactual $\pi$ resulting from the policy experiment), and are therefore
not affected by the multiplicity. However, when predicting counterfactual
outcomes, we need to solve the equilibrium fixed point condition, and find
solutions $\pi_{v}$'s in the counterfactual scenario, e.g., a hypothetical
subsidy rule to buy an ITN. Given the already estimated structural parameter
values, the solutions $\pi_{v}$ of the fixed point equation
(\ref{eq: FP-integral}) in the counterfactual scenario can be computed for
each $v$. If equilibrium multiplicity is anticipated, e.g., when the estimated
$\alpha$ is larger than the threshold for contraction, we can compute these
multiple solutions for each $v\in\left\{  1,\dots,\bar{v}\right\}  $ through
eye-balling\ since the number of solutions is at most three. This can be done
by drawing a graph of the LHS of (\ref{eq: FP-integral}) and checking points
on the graph that intersect the 45 degree line (or the graph for a least
squares objective function as in (\ref{eq: LS-problem}), presented in Figure
2). The number of villages is eleven in our application and eye-balling is not
difficult. Then, given multiple solutions, a bound for average welfare can be
computed for each solution, and one can report multiple (at most three) bounds
of them or a single union of the multiple intervals.

\section{EMPIRICAL\ CONTEXT AND\ DATA\label{sec: Emprical-context-data}}

Our empirical application concerns the provision of anti-malarial bednets.
Malaria is a life-threatening parasitic disease transmitted from human to
human through mosquitoes. In 2019, an estimated 229 million cases of malaria
occurred worldwide, with 90\% of the cases in sub-Saharan Africa (WHO, 2017).
The main tool for malaria control in sub-Saharan Africa is the use of
insecticide treated bednets. Regular use of a bednet reduces overall child
mortality by around 18 percent and reduces morbidity for the entire population
(Lengeler, 2004). However, at \$6 or more a piece, bednets are unaffordable
for many households, and to palliate the very low coverage levels observed in
the mid-2000s, public subsidy schemes were introduced in numerous countries in
the last 15 years. Our empirical exercise is designed to evaluate such subsidy
schemes not just in terms of their effectiveness in promoting bednet adoption,
but also their impact on individual welfare and deadweight loss. Based on our
discussion in Section 4, we focus on two main sources of spillover, viz. (a) a
preference for conformity, and (b) a concern that mosquitoes will be deflected
to oneself when neighbors protect themselves. Both will generate a positive
effect of the aggregate adoption rate on one's own adoption decision, but they
have different implications for the welfare impact of a price subsidy
policy.\smallskip

\noindent\textbf{Experimental Design: }We exploit data from a 2007 randomized
bednet\ pricing experiment conducted in eleven villages of Western Kenya,
where malaria is transmitted year-round. In each village, a list of $150$ to
$200$ households was compiled from school registers, and households on the
list were randomly assigned to a price at which they could purchase a
long-lasting ITN, a new, highly effective type of antimalarial bednet. After
the random assignment had been performed in office, trained enumerators
visited each sampled household to administer a baseline survey. At the end of
the interview, the household was given a voucher for one long-lasting ITN at
the randomly assigned price level. The amount of subsidy (for those who
received any) varied from $40$\% to $100$\% of the market price in two
villages, and from $40$\% to $90$\% in the remaining $9$ villages; there were
$22$ corresponding final prices faced by households, ranging from $0$ to $300$
Ksh (US \$$5.50$), where $300$ Ksh would be the non-subsidized sale price.
Vouchers could be redeemed within three months at participating local
retailers.\smallskip

\noindent\textbf{Data: }We use data on bednet adoption as observed from coupon
redemption and verified acquisition through a follow-up survey. We also use
data on baseline household characteristics measured during the baseline
survey. The three main baseline characteristics we consider are wealth (the
combined value of all durable and animal assets owned by the household); the
number of children under ten years old; and the education level of the female
head of household.\smallskip

\noindent\textbf{Nonparticipating Households}: While all households in a given
village were potentially interacting, our sample does not cover all village
members. This can potentially cause a problem since selected households might
interact with non-selected ones. However, at the time of the experiment,
non-selected households did not have the opportunity to buy a long-lasting
ITN, so the outcome variable $A$ for such households is zero, whose
conditional expectations are zero as well. Thus, in our specification, even if
we allow for interactions among all the village members, it is easy to do the
necessary adjustments in the empirics, viz. replace $\Pi_{vh}$ in
(\ref{eq: Belief-general}) by%
\begin{equation}
\check{\Pi}_{vh}=(\tfrac{N_{v}-1}{\check{N}_{v}-1})\tfrac{1}{N_{v}-1}%
%TCIMACRO{\dsum \nolimits_{1\leq k\leq N_{v}\text{; }k\neq h}}%
%BeginExpansion
{\displaystyle\sum\nolimits_{1\leq k\leq N_{v}\text{; }k\neq h}}
%EndExpansion
E[A_{vk}|\mathcal{I}_{vh}]=(\tfrac{N_{v}-1}{\check{N}_{v}-1})\Pi_{vh},
\label{eq: non-selected}%
\end{equation}
where $\check{N}_{v}$ equals the total number of households in the village,
and $N_{v}$ those participating in the game. In our empirical setting, this
ratio is about $0.8$ for each village, and we apply this adjustment throughout
the empirical analysis.

\section{EMPIRICAL SPECIFICATION AND
RESULTS\label{sec: Empirical-specification-result}}

We work with the linear index structure (\ref{eq: A2-linear}), where $Y_{vh}$
is taken to be the household wealth, $P_{vh}$ is the experimentally set price
faced by the household, $\Pi_{vh}$ is the observed average adoption in the
village. We also use additional controls, denoted by $Z_{vh}$ below, that can
potentially affect preferences and therefore the take-up of bednet, viz.
presence of children under the age of ten, and years of education of the
oldest female member of the household. A village-specific variable that could
affect adoption is the extent of malaria exposure risk in the village. We
measure this in our data from the response to the question: ``Did anyone in
your household have malaria in the past month?''. Summary statistics are
reported in Table 1, and their village averages are shown in Table 2, for each
of the eleven villages in the data.

Our first set of results correspond to taking $F\left(  \cdot\right)  $ to be
the probit CDF of $\eta_{vh}=\eta_{vh}^{0}-\eta_{vh}^{1}$ (as in (\ref{14})),
i.e. with no village-effects), and then our main results use the correlated
random effects probit model that accounts for village-level unobservables. The
marginal effects at mean are presented in Table 3, corresponding to both the
probit model without village-effects and the CRE probit model that accounts
for village-level unobservables. The fact that the price elasticity is very
similar in the two specifications is expected, since price was exogenously
assigned in the experiment, so accounting for village-fixed unobservables has
no impact on the marginal effect.

It is evident from the table that the demand coefficient is negative and
significantly different from zero (the averaged price elasticity is $-0.12)$,
and that bednet adoption in the village has a significant positive association
with private adoption, conditional on price and other household
characteristics, i.e. $\alpha>0$ in our notation above. The social interaction
coefficient $\alpha$ is $2.2$ for the probit, which is less than the upper
bound for the fixed point map to be a contraction (see discussion in Section
\ref{sec: unique-multi}). The effect of children is negative, likely
reflecting that households with children had already invested in other
anti-malarial steps, e.g., had bought a less effective traditional bednet
prior to the experiment.

Next, we consider a hypothetical subsidy rule, where those with wealth less
than $\tau$ are eligible to get the bednet for $50$ KSh ($83$\% subsidy),
whereas those with wealth larger than $\tau$ get it for the price of $250$ KSh
($17$\% subsidy). Based on our preferred CRE probit model (which is used for
all subsequent results, unless mentioned otherwise), we plot the predicted
aggregate take-up of bednets corresponding to different income thresholds
$\tau$. In Figure 1, for each threshold $\tau$, we plot the fraction of
households eligible for a subsidy on the horizontal axis, and the predicted
fraction choosing the bednet on the vertical axis, based on coefficients
obtained by including (solid) and excluding (small dash) the spillover effect.
The 45 degree line (large dash), showing the fraction eligible for the
subsidy, is also plotted in the same figure for comparison.

It is evident from Figure 1 that ignoring spillovers leads to over-estimation
of adoption at lower thresholds and underestimation at higher thresholds of
eligibility. This happens because ignoring a covariate (here $\pi$) with
positive impact on the outcome in prediction amounts to \textquotedblleft
smoothing\textquotedblright\ over values of $\pi$.\smallskip

Having obtained these (uncompensated) effects, we now turn to calculating the
demand and the mean compensating variation for a hypothetical subsidy scheme.
We consider an initial situation where everyone faces a price of $250$ KSh for
the bednet, and a final situation where a bednet is offered for $50$ KSh to
households with wealth less than $\tau=8000$ KSh (about the $27$th percentile
of the wealth distribution), and for the price of $250$ KSh to those with
wealth above that. The demand results are reported in Table 4, and the welfare
results in Table 5. We perform these calculations village-by-village, and then
aggregate across villages. To calculate these numbers, we first predict the
bednet adoption when everyone is facing a price of $250$ KSh, and then when
eligibles face a price of $50$ KSh and the rest stay at $250$ KSh, giving us
the equilibrium values of $\pi_{0}$ and $\pi_{1}$, respectively. In all such
calculations with our data, we always detected a single solution to the fixed
point $\pi$ (i.e. a unique equilibrium) as can be seen from Figure 2, where we
plot the squared difference between the RHS and the LHS of an empirical
version of the fixed point equation (\ref{d}) (with the additional covariate
$Z_{vh}$), i.e.%
\begin{equation}
\left[  \pi_{1}-\int\left[  1\left\{  y\leq\tau\right\}  \hat{q}_{1}\left(
p_{1},y,z,\pi_{1}\right)  +1\left\{  y>\tau\right\}  \hat{q}_{1}\left(
p_{0},y,z,\pi_{1}\right)  \right]  d\hat{F}_{Y,Z}\left(  y,z\right)  \right]
^{2} \label{eq: LS-problem}%
\end{equation}
on the vertical axis, and $\pi_{1}$ on the horizontal axis, separately for
each of the eleven villages, where $\hat{q}_{1}\left(  p,y,z,\pi\right)  $ is
the predicted demand (choice probability) function at $\left(  p,y,z,\pi
\right)  $. The globally convex nature of each objective function is evident
from Figure 1. The minima are relatively close to each other around $0.15$,
except village 7 and 10, where it is larger. As for $\pi_{0}$, which minimizes
$\left[  \pi_{0}-\int\hat{q}_{1}\left(  p_{1},y,z,\pi_{0}\right)  d\hat
{F}_{Y,Z}\left(  y,z\right)  \right]  ^{2}$, the objective function is also
convex with a minimizing value close to zero in every village, reflecting that
very few households would buy at this high price. These predicted values of
$\pi_{0}$ and $\pi_{1}$ are used as inputs into the prediction of demand as
the structural choice probability (\ref{7}) and welfare as per Theorem
\ref{th: elig} and Corollary \ref{corollary: inelig}.\smallskip

The first row of Table 4 shows the pre-subsidy predicted demand by subsidy
eligibility. In the second row, we calculate the predicted effect of the
subsidy on demand, and break that up by the own price effect (Row 2) and the
spillover effect (Row 3). The own effect is obtained by changing the price in
accordance with the subsidy but keeping the village demand equal to the
pre-subsidy value; the spillover effect is the difference between the overall
effect and the own effect. It is clear that spillover effects on both
eligibles and ineligibles are large in magnitude. In particular, the
spillovers effect raises demand for ineligibles by an amount that nearly
equals its pre-subsidy level.

In Table 5, we report welfare calculations with standard errors computed via
the simple nonparametric bootstrap where households were resampled within each
village in each bootstrap replication. In the first row, we report the welfare
gain of the subsidy rule for eligibles, first assuming no spillovers and using
a probit model without village-effects. In this case, we simply use the
results of Bhattacharya (2015) to calculate the (point-identified) CV for
eligibles as the price changes from $250$ KSh to $50$ KSh. This yields the
value of welfare gain to be $52.589$ KSh. As there is no spillover, the
welfare change of ineligibles is zero by definition, and therefore the net
welfare gain is simply the fraction eligible ($0.27$) times the CV for
eligibles. This is reported in the third column of Table 5. The case with
spillovers under probit and assuming $\alpha_{1}\geq0\geq\alpha_{0}$ are
reported in the second panel of Row 1 using (the negatives of) (\ref{10}),
(\ref{12}) and (\ref{13}) for eligibles, and using (\ref{mean_noneligibles}),
(\ref{15}) and (\ref{16}) for ineligibles.

The 2nd-4th row present analogous results using CRE probit to control for
fixed effects; the 2nd row does this for $\alpha_{1}\geq0\geq\alpha_{0}$ ; the
third row for $\alpha_{1}\geq\alpha_{0}\geq0$, using a large upper limit of
$\alpha_{1}$ (and concurrently $\alpha_{0}=\alpha_{1}-\alpha>0$) to proxy
$\alpha_{1}\nearrow\infty$, (cf. (\ref{5}) and (\ref{8}) above). Finally, the
4th row presents the overall bounds by taking union of the previous two cases.

Under $\alpha_{1}\geq0\geq\alpha_{0}$, both specifications imply that
ineligibles can suffer a large welfare \textit{loss} due to the subsidy. This
is because the subsidy facilitates usage for solely the eligibles, raising the
equilibrium usage $\pi$ in the village, but the ineligibles keep facing the
high price, and thus a lower utility from not buying because $\pi$ is now
higher and $\alpha_{0}\leq0$\textbf{.} However, the few ineligibles who buy,
despite the high price, get some welfare increase from a rise in the adoption
rate, that explains the small upper bound corresponding to the case
$\alpha_{0}=0$. The overall welfare gain aggregated over eligibles and
ineligibles is reported in the column headed \textquotedblleft Net Welfare
Gain\textquotedblright.\smallskip

\noindent\textbf{Deadweight Loss}: To compute the deadweight loss, we subtract
the net welfare from the predicted subsidy expenditure. The latter equals the
amount of subsidy ($200$ KSh) times the demand at the subsidized price 50 KSh
of the eligibles. Thus the expression for DWL is given by%
\[
D=\int\left[
\begin{array}
[c]{l}%
200\times1\left\{  y\leq\tau\right\}  q_{1}\left(  50,y,z,\pi_{1}\right) \\
-1\left\{  y\leq\tau\right\}  \mu^{\mathrm{Elig}}\left(  y,z,\pi_{1},\pi
_{0}\right)  -1\left\{  y>\tau\right\}  \mu^{\mathrm{Inelig}}\left(
y,z,\pi_{1},\pi_{0}\right)
\end{array}
\right]  dF\left(  y,z\right)  \text{,}%
\]
where $y$ denotes wealth, $z$ denotes other covariates, $q_{1}\left(
50,y,z,\pi_{1}\right)  $ denotes predicted demand at price $50$ KSh including
the effect of spillover, and $\mu^{\mathrm{Elig}}$ and $\mu^{\mathrm{Inelig}}$
refer to welfare \textit{gain} for eligibles and ineligibles, respectively.
Ignoring spillovers leads to the point-identified deadweight loss%
\[
D=\int\left[  200\times1\left\{  y\leq\tau\right\}  \times q_{1}%
^{\mathrm{No}\text{\textrm{-}}\mathrm{spillover}}\left(  50,y,z\right)
-1\left\{  y\leq\tau\right\}  \mu^{\mathrm{No}\text{\textrm{-}}%
\mathrm{spillover}}\left(  y,z\right)  \right]  dF\left(  y,z\right)  \text{.}%
\]
\smallskip\smallskip

For the case $\alpha_{1}\geq\alpha_{0}\geq0$, in the last-but-one row of Table
5, there is no welfare loss for anyone, since all spillover is positive, which
explains the negative deadweight loss lower bound, i.e. an efficiency from
subsidizing a positive externality.\medskip

These welfare and DWL numbers support the overall conclusion that accounting
for spillovers can lead to much lower estimates of net welfare gain from the
subsidy program and higher deadweight loss. Some of this difference arises
from potential welfare loss suffered by ineligibles that is missed upon
assuming no spillover, and some from the impact of including spillovers terms
on the prediction of counterfactual purchase-rates (cf. Fig 1). Furthermore,
the two cases $\alpha_{1}\geq0\geq\alpha_{0}$ and $\alpha_{1}\geq\alpha
_{0}\geq0$, which are both consistent with the observed $\alpha=\alpha
_{1}-\alpha_{0}>0$, yield vastly different bounds on welfare, resulting in
wide \emph{overall} bounds on net welfare gain and deadweight loss that
include zero (cf. last row of Table 5), which is the key substantive point of
this paper.\smallskip

\noindent\textbf{Endogeneity}: Price variation is exogenous in our
application, since price was varied randomly by the experimenter. Indeed, it
is still possible that wealth $Y$ is correlated with $\boldsymbol{\eta}$, the
unobserved determinants of bednet purchase (even conditionally on the village
specific effects). However, experimental variation in price $P$ implies also
that $P$ is independent of $\boldsymbol{\eta}$, given $Y$. Consequently, one
can invoke the argument presented in Bhattacharya (2018, Section 3.1), and
interpret the estimated choice probabilities and the corresponding welfare
numbers as conditional on $y$, and then integrating with respect to the
marginal distribution of $y$. This overcomes the problem posed by potentially
endogenous income.

\section{CONCLUSION\label{sec: conclusion}}

This paper develops tools for economic demand and welfare analysis in binary
choice models with social interactions. The key finding is that under
interactions, \emph{welfare} distributions resulting from policy changes such
as a price subsidy are generically not point-identified for given values of
counterfactual aggregate demand, unlike the case \textit{without} spillovers.
This is true even when utility functions and distribution of unobserved
heterogeneity are fully parametrized and there is a unique equilibrium.
Non-identification results from the inability of standard choice data to
distinguish between \emph{different} underlying latent mechanisms, e.g.
conforming motives, consumer learning, negative externalities etc., which
produce the same aggregate social interaction coefficient, but have different
welfare implications depending on which mechanism dominates. This feature is
endemic to many practical settings that economists study, including the
health-product adoption case examined here. Another prominent example is
school-choice, where merit-based vouchers to attend a fee-paying selective
school can create negative externalities by lowering the academic quality of
the free local school via increased departure of high-achieving students. The
resulting welfare implications cannot be calculated based solely on a
Brock-Durlauf style empirical model of individual school-choice inclusive of a
social interaction term. This is in contrast to models \emph{without} social
interaction, where choice probability functions have been shown to contain all
the information required for welfare analysis. Nonetheless, we show that under
standard linear index restrictions, welfare distributions can be bounded.
Under some special and empirically untestable cases e.g. exactly symmetric
spillovers effects or absence of negative externalities, these bounds shrink
to point-identified values. Next, we develop methods of identification and
consistent estimation for the structural utility parameters, required for
prediction of counterfactual outcome and welfare bounds, when there is
unobserved group-level heterogeneity possibly correlated with observable
covariates. This is achieved via a novel latent factor modelling of unobserved
group-effects and observed covariates, and developing a method of asymptotic
analysis where the dimension of nuisance parameters, i.e. the group-effects
whose magnitude may be unbounded, increases as the number of groups increase.

We apply our methods to an empirical setting of adoption of anti-malarial
bednets, using data from a pricing experiment by Dupas (2014) in rural Kenya.
We find that accounting for spillovers provides different predictions for
demand and welfare resulting from hypothetical, means-tested subsidy rules. In
particular, with \emph{positive} interaction effects, predicted demand when
including spillovers are \emph{lower} for less generous eligibility criteria,
compared to demand predicted by ignoring spillovers. At more generous
eligibility thresholds, the conclusion reverses. As for welfare, if negative
health externalities are present, then subsidy-ineligibles can suffer welfare
loss due to increased use by subsidized buyers in the neighborhood; if solely
conforming effects are present and there is no health-related externality,
then welfare can improve.

The implication of these results for applied work is that under social
interactions, welfare analysis of potential interventions requires more
information regarding individual channels of spillovers than knowledge of
solely the choice probability functions inclusive of a social interaction
term. Belief-eliciting surveys, recording the reasons behind the subjects'
actions, can provide a potential solution.\pagebreak

%

%TCIMACRO{\TeXButton{appendix}{\appendix}}%
%BeginExpansion
\appendix
%EndExpansion

\section{Appendix}

\subsection{Proofs of Welfare Results\label{sec: appendix-welfare}}

\begin{proof}
[\textbf{Proof of Theorem 1}]The condition $\left\vert \alpha\right\vert
\sup_{e\in\mathbb{R}}f_{\eta^{0}-\eta^{1}}\left(  e\right)  <1$ guarantees
that the maps on the RHS of (\ref{c}) and (\ref{d}) are contractions by
exactly the same argument as in Subsection \ref{sec: unique-multi}, and
therefore by Bruower's fixed point theorem, the solutions to (\ref{c}) and
(\ref{d}) are unique. Hence it follows by the argument following (\ref{c}) and
(\ref{d}) that $\pi_{1}>\pi_{0}$.

Next, note that since $\beta_{0},\beta_{1}>0$, the LHS of (\ref{g}) is
strictly increasing in $S$, so the condition $S\leq a$ is equivalent to%
\begin{align}
&  \max\left\{  \delta_{1}+\beta_{1}\left(  y+a-p_{1}\right)  +\alpha_{1}%
\pi_{1}+\eta^{1},\delta_{0}+\beta_{0}\left(  y+a\right)  +\alpha_{0}\pi
_{1}+\eta^{0}\right\} \nonumber\\
&  \geq\max\left\{  \delta_{1}+\beta_{1}\left(  y-p_{0}\right)  +\alpha_{1}%
\pi_{0}+\eta^{1},\delta_{0}+\beta_{0}y+\alpha_{0}\pi_{0}+\eta^{0}\right\}
\text{.} \label{d2}%
\end{align}
If $a<p_{1}-p_{0}-\frac{\alpha_{1}}{\beta_{1}}\left(  \pi_{1}-\pi_{0}\right)
<0$, then each term on the LHS of (\ref{d2}) is smaller than the corresponding
term on the RHS. If $a\geq\frac{\alpha_{0}}{\beta_{0}}\left(  \pi_{0}-\pi
_{1}\right)  >0$, then each term on the LHS is larger than the corresponding
term on the RHS.\footnote{Note that the above reasoning also helps establish
existence of a solution to (\ref{g}). We know from above that for
$S<p_{1}-p_{0}-\frac{\alpha_{1}}{\beta_{1}}\left(  \pi_{1}-\pi_{0}\right)  $,
the LHS of (\ref{g}) is strictly smaller than the RHS, and for $S\geq
\frac{\alpha_{0}}{\beta_{0}}\left(  \pi_{0}-\pi_{1}\right)  $, the LHS of
(\ref{g}) is strictly larger than the RHS. By continuity, and the intermediate
value theorem, it follows that there must be at least one $S$ where (\ref{g})
holds with equality.} This gives us the support of $S$:%
\[
\Pr\left(  S\leq a\right)  =\left\{
\begin{array}
[c]{ll}%
0\text{,} & \text{if }a<p_{1}-p_{0}-\frac{\alpha_{1}}{\beta_{1}}\left(
\pi_{1}-\pi_{0}\right)  \text{,}\\
1\text{,} & \text{if }a\geq\frac{\alpha_{0}}{\beta_{0}}\left(  \pi_{0}-\pi
_{1}\right)  \text{.}%
\end{array}
\right.
\]

Now consider the intermediate case where%
\[
a\in\lbrack\underset{<0}{\underbrace{p_{1}-p_{0}-\frac{\alpha_{1}}{\beta_{1}%
}\left(  \pi_{1}-\pi_{0}\right)  }},\text{ }\underset{>0}{\underbrace{\frac
{\alpha_{0}}{\beta_{0}}\left(  \pi_{0}-\pi_{1}\right)  }}).
\]
In this case, the first term on LHS of (\ref{d2}) is larger than first term on
RHS for all $\eta_{1}$, and the second term on LHS of (\ref{d2}) is smaller
than the second term on the RHS for all $\eta_{0}$, and thus (\ref{d2}) is
equivalent to%
\begin{equation}%
\begin{array}
[c]{ll}
& \delta_{1}+\beta_{1}\left(  y+a-p_{1}\right)  +\alpha_{1}\pi_{1}+\eta
^{1}\geq\delta_{0}+\beta_{0}y+\alpha_{0}\pi_{0}+\eta^{0}\\
\Leftrightarrow & \delta_{1}+\beta_{1}\left(  y+a-p_{1}\right)  +\alpha_{1}%
\pi_{0}+\alpha_{1}\left(  \pi_{1}-\pi_{0}\right)  +\eta^{1}\geq\delta
_{0}+\beta_{0}y+\alpha_{0}\pi_{0}+\eta^{0}\text{.}%
\end{array}
\label{f2}%
\end{equation}
Thus, for any given $\alpha_{1}$, we have that the probability of (\ref{f2})
reduces to%
\begin{align}
&  F\left(  c_{0}+\alpha_{1}\left(  \pi_{1}-\pi_{0}\right)  +c_{1}\left(
p_{1}-a\right)  +c_{2}y+\alpha\pi_{0}\right) \nonumber\\
&  =q_{1}(p_{1}-a,y,\pi_{0}+\frac{\alpha_{1}}{\alpha}\left(  \pi_{1}-\pi
_{0}\right)  )\text{.}%
\end{align}

\end{proof}

\begin{proof}
[Proof of Theorem 2]We have from (\ref{d2}) that $\Pr\left(  S\leq a\right)  $
equals%
\begin{align*}
&  \max\left\{  \delta_{1}+\beta_{1}\left(  y+a-p_{1}\right)  +\alpha_{1}%
\pi_{1}+\eta^{1},\delta_{0}+\beta_{0}\left(  y+a\right)  +\alpha_{0}\pi
_{1}+\eta^{0}\right\} \\
&  \geq\max\left\{  \delta_{1}+\beta_{1}\left(  y-p_{0}\right)  +\alpha_{1}%
\pi_{0}+\eta^{1},\delta_{0}+\beta_{0}y+\alpha_{0}\pi_{0}+\eta^{0}\right\}
\text{.}%
\end{align*}
Now, there are 2 cases to consider. If $p_{1}-p_{0}-\frac{\alpha_{1}}%
{\beta_{1}}\left(  \pi_{1}-\pi_{0}\right)  <-\frac{\alpha_{0}}{\beta_{0}%
}\left(  \pi_{1}-\pi_{0}\right)  $, then $Pr\left(  S\leq a\right)  $ reduces
to (\ref{1}). However, because the support is entirely negative now (since
$-\frac{\alpha_{0}}{\beta_{0}}\left(  \pi_{1}-\pi_{0}\right)  \leq0$), mean
welfare is given by $E\left(  S\right)  =-\int_{p_{1}-p_{0}-\frac{\alpha_{1}%
}{\beta_{1}}\left(  \pi_{1}-\pi_{0}\right)  }^{\frac{\alpha-\alpha_{1}}%
{\beta_{0}}\left(  \pi_{1}-\pi_{0}\right)  }F_{S}\left(  a\right)  da$ which
equals%
\begin{align*}
&  -\int_{p_{1}-p_{0}-\frac{\alpha_{1}}{\beta_{1}}\left(  \pi_{1}-\pi
_{0}\right)  }^{\frac{\alpha-\alpha_{1}}{\beta_{0}}\left(  \pi_{1}-\pi
_{0}\right)  }q_{1}\left(  p_{1}-a,y,\pi_{0}+\frac{\alpha_{1}}{\alpha}\left(
\pi_{1}-\pi_{0}\right)  \right)  da\\
&  =-\int_{p_{1}-\frac{\alpha-\alpha_{1}}{\beta_{0}}\left(  \pi_{1}-\pi
_{0}\right)  }^{p_{0}+\frac{\alpha_{1}}{\beta_{1}}\left(  \pi_{1}-\pi
_{0}\right)  }q_{1}\left(  p,y,\pi_{0}+\frac{\alpha_{1}}{\alpha}\left(
\pi_{1}-\pi_{0}\right)  \right)  dp\text{.}%
\end{align*}

The 2nd case is where $p_{1}-p_{0}-\frac{\alpha_{1}}{\beta_{1}}\left(  \pi
_{1}-\pi_{0}\right)  >-\frac{\alpha_{0}}{\beta_{0}}\left(  \pi_{1}-\pi
_{0}\right)  $, then for $a\in\lbrack-\frac{\alpha_{0}}{\beta_{0}}\left(
\pi_{1}-\pi_{0}\right)  ,p_{1}-p_{0}-\frac{\alpha_{1}}{\beta_{1}}\left(
\pi_{1}-\pi_{0}\right)  ]$, we have that $S^{\mathrm{Elig}}\leq a$ is
equivalent to%
\[
\delta_{0}+\beta_{0}\left(  y+a\right)  +\alpha_{0}\pi_{1}+\eta^{0}\geq
\delta_{1}+\beta_{1}\left(  y-p_{0}\right)  +\alpha_{1}\pi_{0}+\eta
^{1}\text{,}%
\]
whose probability equals%
\begin{align*}
\eta^{0}-\eta^{1}  &  \geq\delta_{1}-\delta_{0}+\left(  \beta_{1}-\beta
_{0}\right)  y-\beta_{1}p_{0}-\beta_{0}a+\alpha_{1}\pi_{0}-\alpha_{0}\pi_{1}\\
&  =c_{0}+c_{1}\left(  p_{0}+a\right)  +c_{2}\left(  y+a\right)  +\alpha
_{1}\pi_{0}+\left(  \alpha-\alpha_{1}\right)  \pi_{1}\\
&  =1-q_{1}\left(  p_{0}+a,y+a,\pi_{1}-\frac{\alpha_{1}}{\alpha}\left(
\pi_{1}-\pi_{0}\right)  \right)  \text{.}%
\end{align*}
Thus we get%
\begin{align}
&  \Pr\left(  S^{\mathrm{Elig}}\leq a\right) \nonumber\\
&  =\left\{
\begin{array}
[c]{ll}%
0\text{,} & \text{if }a<-\frac{\alpha_{0}}{\beta_{0}}\left(  \pi_{1}-\pi
_{0}\right)  \text{,}\\
1-q_{1}\left(  p_{0}+a,y+a,\pi_{1}-\frac{\alpha_{1}}{\alpha}\left(  \pi
_{1}-\pi_{0}\right)  \right)  \text{,} & \text{if }-\frac{\alpha_{0}}%
{\beta_{0}}\left(  \pi_{1}-\pi_{0}\right)  \leq a<p_{1}-p_{0}-\frac{\alpha
_{1}}{\beta_{1}}\left(  \pi_{1}-\pi_{0}\right)  \text{,}\\
1\text{,} & \text{if }a\geq p_{1}-p_{0}-\frac{\alpha_{1}}{\beta_{1}}\left(
\pi_{1}-\pi_{0}\right)  \text{.}%
\end{array}
\right.  \label{17}%
\end{align}
Using that $E\left(  S\right)  =-\int_{-\frac{\alpha_{0}}{\beta_{0}}\left(
\pi_{1}-\pi_{0}\right)  }^{p_{1}-p_{0}-\frac{\alpha_{1}}{\beta_{1}}\left(
\pi_{1}-\pi_{0}\right)  }F_{S}\left(  a\right)  da$, we get from (\ref{17})
that%
\begin{align}
E\left(  S^{\mathrm{Elig}}\right)   &  =-\int_{-\frac{\alpha_{0}}{\beta_{0}%
}\left(  \pi_{1}-\pi_{0}\right)  }^{p_{1}-p_{0}-\frac{\alpha_{1}}{\beta_{1}%
}\left(  \pi_{1}-\pi_{0}\right)  }\left(  1-q_{1}\left(  p_{0}+a,y+a,\pi
_{1}-\frac{\alpha_{1}}{\alpha}\left(  \pi_{1}-\pi_{0}\right)  \right)
\right)  da\nonumber\\
&  =-\int_{p_{0}+\frac{\alpha-\alpha_{1}}{\beta_{0}}\left(  \pi_{1}-\pi
_{0}\right)  }^{p_{1}-\frac{\alpha_{1}}{\beta_{1}}\left(  \pi_{1}-\pi
_{0}\right)  }\left(  1-q_{1}\left(  p,y+p-p_{0},\pi_{1}-\frac{\alpha_{1}%
}{\alpha}\left(  \pi_{1}-\pi_{0}\right)  \right)  \right)  dp\text{.}
\label{18}%
\end{align}

\end{proof}

\subsection{Proofs of Equilibrium Results\label{sec: Proof-IID}}

\begin{description}
\item[C2] (i)\ For each $v$, the sequence $\left\{  (W_{vh},\boldsymbol{u}%
_{vh})\right\}  _{h=1}^{N_{v}}$ is I.I.D. conditionally on $\left(
d_{v},e_{v}\right)  $. (ii) $\left\{  \boldsymbol{u}_{vh}\right\}
_{h=1}^{N_{v}}$ is independent of $\left\{  W_{vh}\right\}  _{h=1}^{N_{v}}$
conditionally on $\left(  d_{v},e_{v}\right)  $.
\end{description}

\begin{proof}
[Proof of Proposition \ref{Prop: belief-constancy}]For notational simplicity,
we write $\boldsymbol{\omega}_{v}:=\left(  d_{v},e_{v}\right)  $ in this
proof. By the definition in (\ref{eq: Belief-general}), $\Pi_{vk}=\tfrac
{1}{N_{v}-1}%
%TCIMACRO{\tsum \nolimits_{1\leq j\leq N_{v}\text{; }j\neq k}}%
%BeginExpansion
{\textstyle\sum\nolimits_{1\leq j\leq N_{v}\text{; }j\neq k}}
%EndExpansion
E[A_{vj}|\mathcal{I}_{vk}]$. Since this is the conditional expectations given
$\mathcal{I}_{vk}=(W_{vk},\boldsymbol{u}_{vk},\boldsymbol{\omega}_{v})$, we
can write $(v,k)$'s belief as
\[
\Pi_{vk}=g_{vk}(W_{vk},\boldsymbol{u}_{vk},\boldsymbol{\omega}_{v}),
\]
using a function $g_{vk}(\cdot)$ which may depend on each index $\left(
v,k\right)  $ but is non-random. Thus, plugging this expression of $\Pi_{vk}$
into $A_{vk}=1\{U_{1}(Y_{vk}-P_{vk},\Pi_{vk},\boldsymbol{\eta}_{vk})\geq
U_{0}(Y_{vk},\Pi_{vk},\boldsymbol{\eta}_{vk})\}$, we can write%
\begin{equation}
A_{vk}=f_{vk}(W_{vk},\boldsymbol{u}_{vk},\boldsymbol{\omega}_{v}),
\label{eq: Avk}%
\end{equation}
for some non-random function $f_{vk}(\cdot)$, where $W_{vk}=(Y_{vk},P_{vk})$.

By \textbf{C2}, we have the two conditional independence restrictions:
$\left(  \boldsymbol{u}_{vh},\boldsymbol{u}_{vk}\right)  \perp W_{vh}%
|\boldsymbol{\omega}_{v}$and $\boldsymbol{u}_{vh}\perp\boldsymbol{u}%
_{vk}|\boldsymbol{\omega}_{v}$. These imply that%
\begin{equation}
\text{\textquotedblleft}\boldsymbol{u}_{vk}\perp W_{vh}|\left(  \boldsymbol{u}%
_{vh},\boldsymbol{\omega}_{v}\right)  \ \text{\&}\ \boldsymbol{u}_{vk}%
\perp\boldsymbol{u}_{vh}|\boldsymbol{\omega}_{v}\text{\textquotedblright%
\ }\Leftrightarrow\text{ }\boldsymbol{u}_{vk}\perp(W_{vh},\boldsymbol{u}%
_{vh})|\boldsymbol{\omega}_{v}, \label{eq: ind-uvk}%
\end{equation}
where we have used the following conditional independence relation: for random
objects $Q$, $R$, and $S$,
\begin{equation}
\text{\textquotedblleft}Q\perp R|\left(  S,\boldsymbol{\omega}_{v}\right)
\text{ \& }Q\perp S|\boldsymbol{\omega}_{v}\text{\textquotedblright\ is
equivalent to \textquotedblleft}Q\perp(R,S)|\boldsymbol{\omega}_{v}%
\text{\textquotedblright,} \label{eq: equivalent-ind}%
\end{equation}
which is applied with $Q=\boldsymbol{u}_{vk}$, $R=W_{vh}$, and
$S=\boldsymbol{u}_{vh}$. By the same argument, \textbf{C2 }implies that%
\[%
\begin{tabular}
[c]{ll}
& $(W_{vk},W_{vh})\perp(\boldsymbol{u}_{vk},\boldsymbol{u}_{vh}%
)|\boldsymbol{\omega}_{v}$ \&\ $W_{vk}\perp W_{vh}|\boldsymbol{\omega}_{v}$\\
$\Rightarrow$ & $W_{vk}\perp(\boldsymbol{u}_{vk},\boldsymbol{u}_{vh}%
)|(W_{vh},\boldsymbol{\omega}_{v})$ \&\ $W_{vk}\perp W_{vh}|\boldsymbol{\omega
}_{v}$,
\end{tabular}
\ \ \ \
\]
which is equivalent to
\begin{equation}
W_{vk}\perp(W_{vh},\boldsymbol{u}_{vk},\boldsymbol{u}_{vh})|\boldsymbol{\omega
}_{v}. \label{eq: ind-WL}%
\end{equation}
Below, we denote by $E_{v}\left[  \cdot\right]  $ the conditional expectation
operator given $\boldsymbol{\omega}_{v}\left(  \boldsymbol{=}\left(
d_{v},e_{v}\right)  \right)  $ and $E_{v}[\cdot|B]\equiv E[\cdot
|\boldsymbol{\omega}_{v},B]$. Given the above, we have%
\begin{align*}
E[A_{vk}|\mathcal{I}_{vh}]  &  =E_{v}[f_{vk}(W_{vk},\boldsymbol{u}%
_{vk},\boldsymbol{\omega}_{v})|W_{vh},\boldsymbol{u}_{vh}]\\
&  =\int E_{v}[f_{vk}(W_{vk},\tilde{u},\boldsymbol{\omega}_{v})|W_{vh}%
,\boldsymbol{u}_{vh},\boldsymbol{u}_{vk}=\tilde{u}]dF_{\boldsymbol{u}}%
^{v}(\tilde{u}|\boldsymbol{\omega}_{v})\\
&  =\int E_{v}[f_{vk}(W_{vk},\tilde{u},\boldsymbol{\omega}_{v}%
)]dF_{\boldsymbol{u}}^{v}(\tilde{u}|\boldsymbol{\omega}_{v})\\
&  =E_{v}[f_{vk}(W_{vk},\boldsymbol{u}_{vk},\boldsymbol{\omega}_{v}%
)]=E[A_{vk}|\boldsymbol{\omega}_{v}],
\end{align*}
where the first equality uses (\ref{eq: Avk}), the second and third equalities
follow from (\ref{eq: ind-uvk}) and (\ref{eq: ind-WL}), respectively, the
fourth equality holds since $W_{vk}\perp\boldsymbol{u}_{vk}|\boldsymbol{\omega
}_{v}$, completing the proof.
\end{proof}

\begin{proof}
[Proof of Proposition \ref{prop: belief-symmetricity}]Let
\begin{equation}
\bar{\pi}_{vk}=\bar{\pi}_{vk}(d_{v},e_{v}):=E[A_{vk}|d_{v},e_{v}]\text{ \ for
}h=1,\dots,N_{v}, \label{eq: pi-bar}%
\end{equation}
where henceforth we suppress the dependence of $\bar{\pi}_{vk}$ on $\left(
d_{v},e_{v}\right)  $ for notational simplicity. By Proposition
\ref{Prop: belief-constancy} and (\ref{eq: pi-bar-average}), we have
\begin{equation}
\Pi_{vh}=\bar{\Pi}_{vh}=\tfrac{1}{N_{v}-1}%
%TCIMACRO{\dsum \nolimits_{1\leq k\leq N_{v}\text{; }k\neq h}}%
%BeginExpansion
{\displaystyle\sum\nolimits_{1\leq k\leq N_{v}\text{; }k\neq h}}
%EndExpansion
\bar{\pi}_{vk}\text{.} \label{eq: Pi-bar-av}%
\end{equation}
Given these, we can write%
\begin{equation}
\bar{\pi}_{vh}=E\left[  \left.  1\left\{
\begin{array}
[c]{c}%
U_{1}(Y_{vh}-P_{vh},\tfrac{1}{N_{v}-1}%
%TCIMACRO{\tsum \nolimits_{1\leq k\leq N_{v}\text{; }k\neq h}}%
%BeginExpansion
{\textstyle\sum\nolimits_{1\leq k\leq N_{v}\text{; }k\neq h}}
%EndExpansion
\bar{\pi}_{vk},\boldsymbol{\eta}_{vh})\\
\geq U_{0}(y,\tfrac{1}{N_{v}-1}%
%TCIMACRO{\tsum \nolimits_{1\leq k\leq N_{v}\text{; }k\neq h}}%
%BeginExpansion
{\textstyle\sum\nolimits_{1\leq k\leq N_{v}\text{; }k\neq h}}
%EndExpansion
\bar{\pi}_{vk},\boldsymbol{\eta}_{vh})
\end{array}
\right\}  \right\vert d_{v},e_{v}\right]  ,\text{ \ }h=1,\dots,N_{v}\text{.}
\label{eq: simultaneous-eq}%
\end{equation}
We can easily see that if a symmetric solution to the system of $N_{v}%
$\ equations in (\ref{eq: simultaneous-eq}) exists uniquely, then that of
(\ref{eq: solution-BigPi-IID}) in terms of $\{\bar{\Pi}_{vh}\}_{h=1}^{N_{v}}$
also exists uniquely (vice versa; note that $\bar{\pi}_{vh}=\sum_{k=1}^{N_{v}%
}\bar{\Pi}_{vk}-\left(  N_{v}-1\right)  \bar{\Pi}_{vh}$ by
(\ref{eq: Pi-bar-av})). Therefore, we investigate (\ref{eq: simultaneous-eq}).

Corresponding to (\ref{eq: simultaneous-eq}), define an $N_{v}$-dimensional
vector-valued function of $\boldsymbol{r}=(r_{1},r_{2},\dots,r_{N_{v}}%
)\in\left[  0,1\right]  ^{N_{v}}$ as
\[
\mathcal{M}_{v}(\boldsymbol{r}):=\left(  m_{v}(\tfrac{1}{N_{v}-1}%
%TCIMACRO{\tsum \nolimits_{k\neq1}}%
%BeginExpansion
{\textstyle\sum\nolimits_{k\neq1}}
%EndExpansion
r_{k}),\dots,m_{v}(\tfrac{1}{N_{v}-1}%
%TCIMACRO{\tsum \nolimits_{k\neq N_{v}}}%
%BeginExpansion
{\textstyle\sum\nolimits_{k\neq N_{v}}}
%EndExpansion
r_{k})\right)  ,
\]
where we write $%
%TCIMACRO{\tsum \nolimits_{1\leq k\leq N_{v}\text{; }k\neq h}}%
%BeginExpansion
{\textstyle\sum\nolimits_{1\leq k\leq N_{v}\text{; }k\neq h}}
%EndExpansion
=%
%TCIMACRO{\tsum \nolimits_{k\neq h}}%
%BeginExpansion
{\textstyle\sum\nolimits_{k\neq h}}
%EndExpansion
$, and the metric in the domain and range spaces of $\mathcal{M}_{v}$ is
defined as%
\[
||\boldsymbol{s}-\boldsymbol{\tilde{s}||}_{\infty}:=\max_{1\leq h\leq N_{v}%
}\left\vert s_{h}-\tilde{s}_{h}\right\vert ,
\]
for any $\boldsymbol{s}=(s_{1},\dots,s_{N_{v}})$, $\boldsymbol{\tilde{s}%
}=(\tilde{s}_{1},\dots,\tilde{s}_{N_{v}})\in\left[  0,1\right]  ^{N_{v}}$
(note that both the spaces are taken to be $\left[  0,1\right]  ^{N_{v}}$).
Given these definitions of $\mathcal{M}_{v}(\boldsymbol{r})$ and the metric,
we can easily show that the contraction property of $m_{v}(\cdot)$ carries
over to $\mathcal{M}_{v}(\cdot)$, i.e.,
\[
\left\Vert \mathcal{M}_{v}(\boldsymbol{r})-\mathcal{M}_{v}(\boldsymbol{\tilde
{r}})\right\Vert _{\infty}\leq\rho\left\Vert \boldsymbol{r-\tilde{r}%
}\right\Vert _{\infty},
\]
which implies that there exists a unique solution\ $\boldsymbol{r}^{\ast}$ to
the ($N_{v}$-dimensional) vector-valued equation:%
\begin{equation}
\boldsymbol{r}=\mathcal{M}_{v}(\boldsymbol{r}). \label{eq: vector-valued}%
\end{equation}

Now, consider the following scalar-valued equation $r=m_{v}\left(  r\right)
$. By the contraction property (\ref{eq: contraction-scalar}), it has a unique
solution. Denote this solution by $\bar{r}^{\ast}\in\left[  0,1\right]  $. By
the definition of $\mathcal{M}_{v}(\cdot)$, the vector $\boldsymbol{\bar{r}%
}^{\ast}=(\bar{r}^{\ast},\dots,\bar{r}^{\ast})\in\left[  0,1\right]  ^{N_{v}}$
must be a solution to (\ref{eq: vector-valued}). Then, by the uniqueness of
the solution to (\ref{eq: vector-valued}), this $\boldsymbol{\bar{r}}^{\ast}$
must be a unique solution, which is a set of symmetric beliefs. The proof is
therefore complete.
\end{proof}

\subsection{Proof of Consistency\label{sec: consistency-proof}}

To verify the consistency of our first probit estimator, we introduce the
following functions for each $v\in\left\{  1,\dots,\bar{v}\right\}  $:%
\begin{equation}
\hat{Q}_{v}\left(  \boldsymbol{c},\gamma_{v}\right)  :=\frac{1}{N_{v}}%
\sum\nolimits_{v=1}^{N_{v}}\mathcal{L}_{vh}\left(  \boldsymbol{c},\gamma
_{v}\right)  \text{ \ and \ }Q_{v}\left(  \boldsymbol{c},\gamma_{v}\right)
:=E_{\boldsymbol{\omega}_{v}}\left[  \mathcal{L}_{vh}\left(  \boldsymbol{c}%
,\gamma_{v}\right)  \right]  , \label{def: Qvhat-Qv}%
\end{equation}
where
\[
\mathcal{L}_{vh}\left(  \boldsymbol{c},\gamma_{v}\right)  =A_{vh}\log
\Phi\left(  W_{vh}\boldsymbol{c}^{\prime}+\gamma_{v}\right)  +\left(
1-A_{vh}\right)  \log\left(  1-\Phi\left(  W_{vh}\boldsymbol{c}^{\prime
}+\gamma_{v}\right)  \right)  .
\]
Given this definition of $\hat{Q}_{v}$, we can write the objective function of
the first probit estimator as%
\begin{equation}
\hat{Q}\left(  \boldsymbol{c},\gamma_{1},\dots.,\gamma_{\bar{v}}\right)
=\dfrac{1}{N}%
%TCIMACRO{\dsum \nolimits_{v=1}^{\bar{v}}}%
%BeginExpansion
{\displaystyle\sum\nolimits_{v=1}^{\bar{v}}}
%EndExpansion%
%TCIMACRO{\dsum \nolimits_{h=1}^{N_{v}}}%
%BeginExpansion
{\displaystyle\sum\nolimits_{h=1}^{N_{v}}}
%EndExpansion
\mathcal{L}_{vh}\left(  \boldsymbol{c},\gamma_{v}\right)  =\frac{1}{\bar{v}}%
%TCIMACRO{\dsum \nolimits_{v=1}^{\bar{v}}}%
%BeginExpansion
{\displaystyle\sum\nolimits_{v=1}^{\bar{v}}}
%EndExpansion
\dfrac{\bar{v}N_{v}}{N}\hat{Q}_{v}\left(  \boldsymbol{c},\gamma_{v}\right)  .
\label{eq: def-Qhat}%
\end{equation}
Note that each of the weights $\bar{v}N_{v}/N$ does not degenerate. This is
because, under our assumption that $N_{v}=r_{v}N_{0}$ with $\underline{r}\leq
r_{v}\leq\bar{r}$ ($0<\underline{r}\leq\bar{r}<\infty$) for any $v$ and
$N=\sum_{v=1}^{\bar{v}}r_{v}N_{0}$, we have%
\[
0<\left.  \underline{r}\right/  \bar{r}\leq\dfrac{\bar{v}N_{v}}{N}\leq\left.
\bar{r}\right/  \underline{r}<\infty.
\]
Thus, each of $\hat{Q}_{1},\dots,\hat{Q}_{\bar{v}}$ has a non-negligible
contribution to $\hat{Q}$ relative to others, so that the consistency of each
of $\hat{\gamma}_{1},\dots,\hat{\gamma}_{\bar{v}}$ can be guaranteed.

The objective function $\hat{Q}\left(  \boldsymbol{c},\gamma_{1},\dots
.,\gamma_{\bar{v}}\right)  $ as well as its limit when $N_{0}\rightarrow
\infty$ and $\bar{v}\rightarrow\infty$ is strictly concave. Thus, in practice,
there is no need to specify parameter spaces for $\left(  \boldsymbol{c}%
,\gamma_{1},\dots.,\gamma_{\bar{v}}\right)  $. However, for a rigorous
theoretical derivation of the uniform convergence of $\hat{Q}_{v}$ to $Q_{v}$
and identification of $\left(  \boldsymbol{c}^{\ast},\gamma_{v}^{\ast}\right)
$ based on the limit $Q_{v}$, it is convenient to specify the parameter
spaces. We let $\Upsilon_{\boldsymbol{1}}$ be a fixed compact set on
$\mathbb{R}^{d_{W}}$ in which the true parameter $\boldsymbol{c}^{\ast}$ lies
and $\Upsilon\left(  \bar{v}\right)  $ be a compact interval on $\mathbb{R}$
defined as%
\begin{equation}
\Upsilon\left(  \bar{v}\right)  :=\left\{  \left\vert \gamma\right\vert \leq
s\left\vert \text{ }s=\sqrt{4(\sigma_{e}^{\ast})^{2}\log[\bar{v}\left(
\log\bar{v}\right)  ^{t}]}\right.  \right\}  , \label{eq: gamma-space}%
\end{equation}
which grows when $\bar{v}\rightarrow\infty$, where $t\in\left(  1/2,1\right)
$ is an arbitrary constant, introduced in Lemma \ref{lem: realized-gamma}.
Note that we can treat $\left\{  \gamma_{v}^{\ast}\right\}  _{v=1}^{\bar{v}}$
as if these are parameters to be estimated since the law of covariates is
given conditionally on $\boldsymbol{\omega}_{v}=(d_{v},e_{v})$ in \textbf{C2}
and the law of $\left\{  e_{v}\right\}  _{v=1}^{\bar{v}}$ is independent of
the rest of variables as supposed in \textbf{C1} and (\ref{eq: xi-CR-normal}%
).\medskip

We impose the following additional conditions:

\begin{description}
\item[CR2] (i) Let $\gamma_{v}^{\ast}=c_{0}^{\ast}+\alpha^{\ast}\bar{\pi}%
_{v}+d_{v}\boldsymbol{\delta}^{\ast\prime}+e_{v}$ (the village-specific
unobservable variable as defined in (\ref{without}) and (\ref{eq: factor})).
Suppose that the village specific factors $\left\{  d_{v}\right\}  $ satisfy
$\sup_{v\geq1}\left\Vert d_{v}\right\Vert \leq C_{d}\in\left(  0,\infty
\right)  $. (ii) The observable variables $\left\{  W_{vh}\right\}  $ satisfy
$\sup_{\bar{v}\geq1,\text{ }h\geq1}\left\Vert W_{vh}\right\Vert \leq C_{W}%
\in\left(  0,\infty\right)  $. $\left\{  e_{v}\right\}  _{v=1}^{\bar{v}}$ is
independent of $\{\left(  \{W_{vh}\}_{h=1}^{N_{v}},\varepsilon_{vh}%
,d_{v}\right)  \}_{v=1}^{\bar{v}}$. (iii) $\bar{v}$ tends to $\infty$ with
$\log\bar{v}$ being at most of polynomial order of $N_{0}$ as $N_{0}%
\rightarrow\infty$ (i.e. there exists some $\bar{\kappa}>0$ such that
$\log\bar{v}\leq N_{0}^{\bar{\kappa}}$ for any large $N_{0}$), where $N_{0}$
is introduced in (\ref{eq: Nv-assumption}).
\end{description}

\textbf{(i)}-\textbf{(ii)} of \textbf{CR2} suppose the uniform boundedness of
the village specific factors $d_{v}$ and covariates $W_{vh}$. While these may
be easily relaxed at the cost of some additional conditions and notational
complexity, we maintain them for simplicity. Condition \textbf{(iii)} on the
growth rate of $\bar{v}$ is required for uniform convergence of $\hat{Q}_{v}$
to $Q_{v}$, which is a mild condition. It anyway has to be satisfied under the
rate condition (\ref{eq: rate}) for the consistency in Proposition
\ref{prop: consistency-many}.\medskip

Given these conditions, we use the following three lemmas based on which the
consistency result, i.e. Proposition \ref{prop: consistency-many}, is verified
(proofs of the lemmas are provided in the next subsection):

\begin{lemma}
\label{lem: realized-gamma}Suppose that $\left\{  e_{v}\right\}  _{v\geq1}$ is
I.I.D. with $N\left(  0,(\sigma_{e}^{\ast})^{2}\right)  $ with $\sigma
_{e}^{\ast}>0$ (as implied by \textbf{C1} and (\ref{eq: xi-CR-normal})) and
\textbf{CR2 (i) }holds. Then, for each sample point $\omega\in\bar{\Omega}$
with $\Pr\left[  \bar{\Omega}\right]  =1$, there exists some sufficiently
large $K\left(  =K\left(  \omega\right)  \in\lbrack2,\infty)\right)  $ such
that for any $\bar{v}\geq K$,
\[
\max_{1\leq v\leq\bar{v}}\left\vert \gamma_{v}^{\ast}\right\vert \leq
\sqrt{4(\sigma_{e}^{\ast})^{2}\log[\bar{v}\left(  \log\bar{v}\right)  ^{t}]},
\]
where $t>1/2$ is an arbitrary constant that is independent of $\omega$ and
$\bar{v}$.
\end{lemma}

This lemma derives the almost sure uniform bound of $\left\vert \gamma
_{\bar{v}}^{\ast}\right\vert $ as $\bar{v}\rightarrow\infty$, which can be
understood as follows: For each sample point $\omega\in\bar{\Omega}$, we have
an infinite sequence of realized real numbers $\{e_{1},e_{2},\dots
\}=\{e_{1}\left(  \omega\right)  ,e_{2}\left(  \omega\right)  ,\dots\}$ and
thus its corresponding sequence of $\{\gamma_{1}^{\ast},\gamma_{2}^{\ast
},\dots\}$. For this realized sequence, we can always find some number
$K\left(  \omega\right)  $ such that if $\bar{v}\geq K\left(  \omega\right)
$, then none of $\left\{  \left\vert \gamma_{1}^{\ast}\right\vert ,\left\vert
\gamma_{2}^{\ast}\right\vert ,\dots,\left\vert \gamma_{\bar{v}}^{\ast
}\right\vert \right\}  $ is larger than $\sqrt{4(\sigma_{e}^{\ast})^{2}%
\log[\bar{v}\left(  \log\bar{v}\right)  ^{t}]}$. This upper bound depends on
the standard error $\sigma_{e}^{\ast}$ of $e_{v}$: we have a smaller maximum
of $\left\vert \gamma_{v}^{\ast}\right\vert $ when the variation $\sigma
_{e}^{\ast}$ is smaller. While the support of $\gamma_{v}^{\ast}$ is
unbounded, which consists of a normally distributed error $e_{v}$, Lemma
\ref{lem: realized-gamma} allows us to restrict its parameter space as in
(\ref{eq: gamma-space}). The following two lemmas effectively use this
restricted-parameter-space result for $\gamma_{v}^{\ast}$:

\begin{lemma}
[Identification of $(\mathbf{c}^{\ast},\gamma_{1}^{\ast},\dots,\gamma_{\bar
{v}}^{\ast})$]\label{lem: identification-lower}Suppose that $\left\{
e_{v}\right\}  _{v\geq1}$ is I.I.D. with $N\left(  0,(\sigma_{e}^{\ast}%
)^{2}\right)  $ and \textbf{CR1 (i)-(ii)} and \textbf{CR2} hold. Let
$Q_{v}\left(  \boldsymbol{c},\gamma_{v}\right)  $ be the function defined in
(\ref{def: Qvhat-Qv}). Then, for each $\epsilon_{1}>0$, there exists some
constant $C_{Q}>0$ (independent of $\bar{v}$) such that as $\bar{v}%
\rightarrow\infty$,
\begin{equation}
\inf_{1\leq v\leq\bar{v}}\left[  Q_{v}\left(  \boldsymbol{c}^{\ast},\gamma
_{v}^{\ast}\right)  -\sup_{(\boldsymbol{c},\gamma_{v})\in \Upsilon
_{\boldsymbol{c}}\times \Upsilon\left(  \bar{v}\right)  ;\text{ }\left\Vert
(\boldsymbol{c},\gamma_{v})-(\boldsymbol{c}^{\ast},\gamma_{v}^{\ast
})\right\Vert \geq\epsilon_{1}}Q_{v}\left(  \boldsymbol{c},\gamma_{v}\right)
\right]  \geq C_{Q}[\bar{v}(\log\bar{v})]^{-2(\sigma_{e}^{\ast})^{2}}
\label{eq: identification-diff}%
\end{equation}
holds with probability $1$.
\end{lemma}

\begin{lemma}
[Uniform convergence]\label{lem: UC-double}Suppose that $\left\{
e_{v}\right\}  _{v\geq1}$ is I.I.D. with $N\left(  0,(\sigma_{e}^{\ast}%
)^{2}\right)  $ and \textbf{C2, CR1}, and \textbf{CR2} hold. Let $\hat{Q}_{v}$
and $Q_{v}$ be functions defined in (\ref{def: Qvhat-Qv}). Then, as
$N_{0}\rightarrow\infty$,%
\[
\max_{1\leq v\leq\bar{v}}\sup_{\left(  \boldsymbol{c},\gamma_{v}\right)
\in \Upsilon_{1}\times \Upsilon\left(  \bar{v}\right)  }\left\vert \hat{Q}%
_{v}\left(  \boldsymbol{c},\gamma_{v}\right)  -Q_{v}\left(  \boldsymbol{c}%
,\gamma_{v}\right)  \right\vert =O_{p}(\sqrt{\left.  \left(  \log\bar
{v}\right)  ^{3}\left(  \log N_{0}\right)  \right/  N_{0}}).
\]

\end{lemma}

Lemma \ref{lem: identification-lower} shows the identification of $\left(
\boldsymbol{c}^{\ast},\gamma_{v}^{\ast}\right)  $ through the objective
function $Q_{v}$. For each $v$, we have $Q_{v}\left(  \boldsymbol{c}^{\ast
},\gamma_{v}^{\ast}\right)  -Q_{v}\left(  \boldsymbol{c},\gamma_{v}\right)
>0$ for any $\left(  \boldsymbol{c},\gamma_{v}\right)  \neq\left(
\boldsymbol{c}^{\ast},\gamma_{v}^{\ast}\right)  $, which is a standard result
easily shown under the rank condition on $\left(  W_{vh},1\right)  $ and the
probit specification; however, if the domain of $\gamma_{v}^{\ast}$ were not
restricted, the uniform lower bound $Q_{v}\left(  \boldsymbol{c}^{\ast}%
,\gamma_{v}^{\ast}\right)  -Q_{v}\left(  \boldsymbol{c},\gamma_{v}\right)  $
over $v$ would be zero, which would not allow us to show the consistency result.

Lemma \ref{lem: UC-double} derives the uniform convergence of $\hat{Q}%
_{v}\left(  \boldsymbol{c},\gamma_{v}\right)  $. This is based on the
Bernstein exponential inequality for I.I.D. sequences. The restriction of the
parameter space $\Upsilon\left(  \bar{v}\right)  $ as in
(\ref{eq: gamma-space}), which is compact for each $\bar{v}$, can also
facilitate the verification of the uniform convergence.\medskip

Given these results, we are now ready to prove consistency:

\begin{proof}
[Proof of Proposition \ref{prop: consistency-many}]Note that $\hat{Q}\left(
\boldsymbol{c},\gamma_{1},\dots.,\gamma_{\bar{v}}\right)  $ can be written as
in (\ref{eq: def-Qhat}) and this is a weighted average of $\hat{Q}_{v}\left(
\boldsymbol{c},\gamma_{v}\right)  $ with uniformly non-degenerate weights,
$\frac{\bar{v}N_{v}}{N}\in\left[  \frac{\underline{r}}{\bar{r}},\frac{\bar{r}%
}{\underline{r}}\right]  $ for any $v$. Let $\epsilon_{1}>0$ be an arbitrary
constant and $\epsilon_{2}=\epsilon_{2}\left(  \bar{v}\right)  :=\frac{C_{Q}%
}{2}[\bar{v}(\log\bar{v})]^{-2(\sigma_{e}^{\ast})^{2}}>0$, where $C_{Q}>0$ is
a constant given in Lemma \ref{lem: identification-lower}. We look at%
\begin{align*}
&  \sup_{\boldsymbol{c}\in \Upsilon_{\boldsymbol{c}}\text{; }\forall v\text{,
}\gamma_{v}\in \Upsilon\left(  \bar{v}\right)  \text{ s.t. }\exists v\text{,
}\left\vert \left(  \boldsymbol{c},\gamma_{v}\right)  -\left(  \boldsymbol{c}%
^{\ast},\gamma_{v}^{\ast}\right)  \right\vert >\epsilon_{1}}\hat{Q}\left(
\boldsymbol{c},\gamma_{1},\dots.,\gamma_{\bar{v}}\right) \\
&  \leq\sup_{\boldsymbol{c}\in \Upsilon_{\boldsymbol{c}}\text{; }\forall
v\text{, }\gamma_{v}\in \Upsilon\left(  \bar{v}\right)  \text{ s.t. }\exists
v\text{, }\left\vert \gamma_{v}-\gamma_{v}^{\ast}\right\vert >\epsilon_{1}%
}\frac{1}{\bar{v}}\sum\nolimits_{v=1}^{\bar{v}}\dfrac{\bar{v}N_{v}}{N}%
Q_{v}\left(  \boldsymbol{c},\gamma_{v}\right)  +O_{p}(\sqrt{\left.  \left(
\log\bar{v}\right)  ^{3}\left(  \log N_{0}\right)  \right/  N_{0}})\\
&  <\frac{1}{\bar{v}}\sum\nolimits_{v=1}^{\bar{v}}\dfrac{\bar{v}N_{v}}{N}%
Q_{v}\left(  \boldsymbol{c}^{\ast},\gamma_{v}^{\ast}\right)  -C_{Q}[\bar
{v}(\log\bar{v})]^{-2(\sigma_{e}^{\ast})^{2}}+O_{p}(\sqrt{\left.  \left(
\log\bar{v}\right)  ^{3}\left(  \log N_{0}\right)  \right/  N_{0}})\\
&  \leq\frac{1}{\bar{v}}\sum\nolimits_{v=1}^{\bar{v}}\dfrac{\bar{v}N_{v}}%
{N}\hat{Q}_{v}\left(  \boldsymbol{c}^{\ast},\gamma_{v}^{\ast}\right)
-C_{Q}[\bar{v}(\log\bar{v})]^{-2(\sigma_{e}^{\ast})^{2}}+O_{p}(\sqrt{\left.
\left(  \log\bar{v}\right)  ^{3}\left(  \log N_{0}\right)  \right/  N_{0}})\\
&  \leq\frac{1}{\bar{v}}\sum\nolimits_{v=1}^{\bar{v}}\dfrac{\bar{v}N_{v}}%
{N}\hat{Q}_{v}\left(  \boldsymbol{c}^{\ast},\gamma_{v}^{\ast}\right)
-\epsilon_{2}\leq\frac{1}{\bar{v}}\sum\nolimits_{v=1}^{\bar{v}}\dfrac{\bar
{v}N_{v}}{N}\hat{Q}_{v}\left(  \boldsymbol{\hat{c}},\hat{\gamma}_{v}\right)
-\epsilon_{2},
\end{align*}
where the first and third inequalities follow from the uniform convergence of
$\hat{Q}_{v}$ to $Q_{v}$ (derived in Lemma \ref{lem: UC-double}); the second
follows from Lemma \ref{lem: identification-lower}, the fourth follows from
the definition of $\epsilon_{2}$. and the rate condition in (\ref{eq: rate})
(i.e. $\sqrt{\left.  \left(  \log\bar{v}\right)  ^{3}\left(  \log
N_{0}\right)  \right/  N_{0}}$ is much smaller than $C_{Q}[\bar{v}(\log\bar
{v})]^{-2(\sigma_{e}^{\ast})^{2}}$); and the last is due to the definition of
$\left(  \boldsymbol{\hat{c}},\hat{\gamma}_{1},\dots,\hat{\gamma}_{\bar{v}%
}\right)  $, which is the maximizer of $\hat{Q}$. Now, we have verified that
for\ any $\left(  \boldsymbol{c},\gamma_{1},\dots,\gamma_{v}\right)
\in \Upsilon_{\boldsymbol{c}}\times \Upsilon_{\gamma}^{\bar{v}}\times\dots
\times \Upsilon_{\gamma}^{\bar{v}}$ with $\left\vert \left(  \boldsymbol{c}%
,\gamma_{v}\right)  -\left(  \boldsymbol{c}^{\ast},\gamma_{v}^{\ast}\right)
\right\vert >\epsilon_{1}$ for some $v$, it holds that $\hat{Q}\left(
\boldsymbol{c},\gamma_{1},\dots.,\gamma_{\bar{v}}\right)  <\frac{1}{\bar{v}%
}\sum\nolimits_{v=1}^{\bar{v}}\dfrac{\bar{v}N_{v}}{N}\hat{Q}_{v}\left(
\boldsymbol{\hat{c}},\hat{\gamma}_{v}\right)  =\hat{Q}\left(  \boldsymbol{\hat
{c}},\hat{\gamma}_{1},\dots.,\hat{\gamma}_{\bar{v}}\right)  $ with probability
approaching $1$. This implies that $\hat{\gamma}_{v}$ has to satisfy
$\left\vert \left(  \boldsymbol{\hat{c}},\hat{\gamma}_{v}\right)  -\left(
\boldsymbol{c}^{\ast},\gamma_{v}^{\ast}\right)  \right\vert \leq\epsilon_{1}$
for any $v$ with probability approaching $1$, leading to the desired result of
the uniform convergence of $\left(  \boldsymbol{\hat{c}},\hat{\gamma}%
_{v}\right)  $, completing the proof .
\end{proof}

\subsection{Proofs of Auxiliary Lemmas for Consistency}

\begin{proof}
[Proof of Lemma \ref{lem: realized-gamma}.]We first derive a tail probability
bound of $e_{v}$: for any $y>0$,%
\[
\Pr\left[  \left\vert e_{v}\right\vert >y\right]  =\Pr\left[  \left\vert
e_{v}\right\vert /\sigma_{e}^{\ast}>y/\sigma_{e}^{\ast}\right]  \leq2\frac
{1}{\sqrt{2\pi}\left(  y/\sigma_{e}^{\ast}\right)  }\exp\left\{
-\frac{\left(  y/\sigma_{e}^{\ast}\right)  ^{2}}{2}\right\}  ,
\]
where the inequality follows from a tail bound of the standard normal CDF:
$1-\Phi\left(  x\right)  \leq\frac{1}{x}\phi\left(  x\right)  $ for any $x>0$
(p. 112 of Karatzas and Shreve, 1991). This implies that%
\[
\Pr\left[  \max\limits_{1\leq v\leq\bar{v}}\left\vert e_{v}\right\vert
>y\right]  \leq\sum\nolimits_{v=1}^{\bar{v}}\Pr\left[  \left\vert
e_{v}\right\vert >y\right]  \leq\bar{v}\frac{\sigma_{e}^{\ast}}{\sqrt{\pi/2}%
y}\exp\left\{  -\frac{\left(  y/\sigma_{e}^{\ast}\right)  ^{2}}{2}\right\}  .
\]
Letting $y=\sqrt{4(\sigma_{e}^{\ast})^{2}\log[\bar{v}\left(  \log\bar
{v}\right)  ^{t_{0}}]}$ for an arbitrary $t_{0}>1/2$ , we have for $\bar
{v}\geq2$,%
\[
\Pr\left[  \max_{1\leq v\leq\bar{v}}\left\vert e_{v}\right\vert >\sqrt
{4(\sigma_{e}^{\ast})^{2}\log[\bar{v}\left(  \log\bar{v}\right)  ^{t_{0}}%
]}\right]  \leq\frac{1}{\sqrt{\pi/2}\sqrt{4\log[\bar{v}\left(  \log\bar
{v}\right)  ^{t_{0}}]}}\frac{1}{\bar{v}\left(  \log\bar{v}\right)  ^{t_{0}}}%
\]
and thus%
\[
\sum\nolimits_{\bar{v}=2}^{\infty}\Pr\left[  \max_{1\leq v\leq\bar{v}%
}\left\vert e_{v}\right\vert >\sqrt{4(\sigma_{e}^{\ast})^{2}\log[\bar
{v}\left(  \log\bar{v}\right)  ^{t_{0}}]}\right]  <\infty,
\]
which holds since $\sum_{m=2}^{\infty}\frac{1}{m\left(  \log m\right)  ^{p}%
}<\infty$ for any $p>1$. By the Borel-Cantelli lemma, this implies that the
event "$\max\limits_{1\leq v\leq\bar{v}}\left\vert e_{v}\right\vert \geq
\sqrt{4(\sigma_{e}^{\ast})^{2}\log[\bar{v}\left(  \log\bar{v}\right)  ^{t_{0}%
}]}$" may happen at most finitely many times. Thus, for each sample point
$\omega\in\bar{\Omega}$ with $\Pr\left[  \bar{\Omega}\right]  =1$, there
exists some (sufficiently large) $K\left(  =K\left(  \omega\right)
\geq2\right)  $ such that for any $\bar{v}\geq K$,
\begin{equation}
\max_{1\leq v\leq\bar{v}}\left\vert e_{v}\right\vert \leq\sqrt{4(\sigma
_{e}^{\ast})^{2}\log[\bar{v}\left(  \log\bar{v}\right)  ^{t_{0}}]}.
\label{eq: e-max}%
\end{equation}
Define $C_{\gamma}^{0}:=\left\vert c_{0}^{\ast}\right\vert +\left\vert
\alpha^{\ast}\right\vert +C_{d}\left\Vert \boldsymbol{\delta}^{\ast
}\right\Vert $ ($C_{d}$ is the upper bound of $d_{v}$ introduced in
\textbf{CR2 (i)}). Then, we have $\left\vert \gamma_{v}^{\ast}\right\vert
=\left\vert c_{0}^{\ast}+\alpha^{\ast}\bar{\pi}_{v}+d_{v}\boldsymbol{\delta
}^{\ast\prime}+e_{v}\right\vert \leq C_{\gamma}^{0}+\left\vert e_{v}%
\right\vert $. For another arbitrary constant $t>t_{0}\left(  >1/2\right)  $,%
\[
\max_{1\leq v\leq\bar{v}}\left\vert \gamma_{v}^{\ast}\right\vert \leq
C_{\gamma}^{0}+\sqrt{4(\sigma_{e}^{\ast})^{2}\log[\bar{v}\left(  \log\bar
{v}\right)  ^{t_{0}}]}\leq\sqrt{4(\sigma_{e}^{\ast})^{2}\log[\bar{v}\left(
\log\bar{v}\right)  ^{t}]},
\]
for sufficiently large $\bar{v}$ (if necessary, we may re-define $K$
introduced for (\ref{eq: e-max}) so that (\ref{eq: e-max}) and this
inequalities hold simultaneously). Now, the proof of Lemma
\ref{lem: realized-gamma} is complete.
\end{proof}

\begin{proof}
[Proof of Lemma \ref{lem: identification-lower}]For notational simplicity, let
$\vartheta_{v}:=(\boldsymbol{c},\gamma_{v})$ and $Q_{v}\left(  \boldsymbol{c}%
,\gamma_{v}\right)  =Q_{v}\left(  \vartheta_{v}\right)  $. We also write
$\vartheta_{v}^{\ast}:=(\boldsymbol{c}^{\ast},\gamma_{v}^{\ast})$ and define
$\rho_{vh}\left(  \vartheta_{v}^{\ast},\vartheta_{v}\right)  :=W_{vh}%
(\boldsymbol{c}^{\ast})^{\prime}-W_{vh}\boldsymbol{c}^{\prime}+\gamma
_{v}^{\ast}-\gamma_{v}$. Noting the \textquotedblleft single
index\textquotedblright\ structure of the model, we can write $\mathcal{L}%
_{vh}\left(  \boldsymbol{c},\gamma_{v}\right)  =\mathcal{L}_{vh}\left(
\boldsymbol{c}^{\ast},-\rho_{vh}\left(  \boldsymbol{c}^{\ast},\boldsymbol{c}%
\right)  +\gamma_{v}^{\ast}\right)  $. Then, using the Taylor expansion,%
\begin{equation}
\mathcal{L}_{vh}\left(  \boldsymbol{c}^{\ast},\gamma_{v}^{\ast}\right)
-\mathcal{L}_{vh}\left(  \boldsymbol{c},\gamma_{v}^{\ast}\right)  =\rho
_{vh}\left(  \vartheta_{v}^{\ast},\vartheta_{v}\right)  \int_{0}^{1}%
\partial_{\gamma}\mathcal{L}_{vh}\left(  \boldsymbol{c}^{\ast},-\left(
1-\lambda\right)  \rho_{vh}\left(  \vartheta_{v}^{\ast},\vartheta_{v}\right)
+\gamma_{v}^{\ast}\right)  d\lambda.\nonumber
\end{equation}
For computing the partial derivative $\partial_{\gamma}\mathcal{L}_{vh}$, we
define a function $\kappa_{0}$ as%
\[
\kappa_{0}\left(  x\right)  :=\frac{\phi\left(  x\right)  }{\Phi\left(
x\right)  \left[  1-\Phi\left(  x\right)  \right]  }.
\]
Then,%
\begin{align}
&  Q_{v}\left(  \vartheta_{v}^{\ast}\right)  -Q_{v}\left(  \vartheta
_{v}\right)  =E_{\boldsymbol{\omega}_{v}}\left[  \rho_{vh}\left(
\boldsymbol{c}^{\ast},\boldsymbol{c}\right)  \int_{0}^{1}\partial_{\gamma
}\mathcal{L}_{vh}\left(  \boldsymbol{c}^{\ast},-\left(  1-\lambda\right)
\rho_{vh}\left(  \vartheta_{v}^{\ast},\vartheta_{v}\right)  +\gamma_{v}^{\ast
}\right)  d\lambda\right] \nonumber\\
&  =E_{\boldsymbol{\omega}_{v}}%
%TCIMACRO{\TeXButton{\Bigl[}{\Bigl[}}%
%BeginExpansion
\Bigl[%
%EndExpansion
\left\vert \rho_{vh}\left(  \vartheta_{v}^{\ast},\vartheta_{v}\right)
\right\vert \int_{0}^{1}\kappa_{0}\left(  W_{vh}(\boldsymbol{c}^{\ast
})^{\prime}-\left(  1-\lambda\right)  \rho_{vh}\left(  \vartheta_{v}^{\ast
},\vartheta_{v}\right)  +\gamma_{v}^{\ast}\right) \nonumber\\
&  \times\left\vert \Phi(W_{vh}(\boldsymbol{c}^{\ast})^{\prime}+\gamma
_{v}^{\ast})-\Phi(W_{vh}(\boldsymbol{c}^{\ast})^{\prime}-\left(
1-\lambda\right)  \rho_{vh}\left(  \vartheta_{v}^{\ast},\vartheta_{v}\right)
+\gamma_{v}^{\ast})\right\vert d\lambda%
%TCIMACRO{\TeXButton{\Bigr]}{\Bigr]}}%
%BeginExpansion
\Bigr]%
%EndExpansion
, \label{eq: U-c-difference}%
\end{align}
where the absolute value signs can be given to $\rho_{vh}\left(  \vartheta
_{v}^{\ast},\vartheta_{v}\right)  $ and the last component of the integrand
since $\kappa_{0}\left(  x\right)  >0$ for any $x$ and $\Phi\left(
\cdot\right)  $ is monotone (i.e., if $\rho_{vh}\left(  \vartheta_{v}^{\ast
},\vartheta_{v}\right)  >0$,%
\[
\Phi(W_{vh}(\boldsymbol{c}^{\ast})^{\prime}+\gamma_{v}^{\ast})-\Phi
(W_{vh}(\boldsymbol{c}^{\ast})^{\prime}-\left(  1-\lambda\right)  \rho
_{vh}\left(  \vartheta_{v}^{\ast},\vartheta_{v}\right)  +\gamma_{v}^{\ast
})\geq0\text{ \ for any }\lambda\in\left[  0,1\right]  \text{;}%
\]
and if $\rho_{vh}\left(  \vartheta_{v}^{\ast},\vartheta_{v}\right)  <0$, the
inequality is reversed).

For deriving a lower bound of (\ref{eq: U-c-difference}), we use the following
inequalities:%
\begin{gather}
\kappa_{0}\left(  x\right)  =\frac{\phi\left(  x\right)  }{\Phi\left(
x\right)  \left[  1-\Phi\left(  x\right)  \right]  }\geq\phi\left(  1\right)
\text{ \ for any }x\in\mathbb{R}\text{;}\label{eq: lower-bound-ratio}\\
\left\vert \Phi\left(  y\right)  -\Phi\left(  x\right)  \right\vert
\geq\left\vert y-x\right\vert \phi\left(  \max\left\{  \left\vert y\right\vert
,\left\vert x\right\vert \right\}  \right)  \text{ \ for any }y,x\in
\mathbb{R}\text{,} \label{eq: lower-bound-diff}%
\end{gather}
where $\phi\left(  1\right)  =(1/\sqrt{2\pi})\exp\left\{  -1/2\right\}  $,
whose proofs are provided below. Using the uniform lower bound
(\ref{eq: lower-bound-ratio}),
\begin{align}
\text{the RHS of (\ref{eq: U-c-difference})}  &  \geq E_{\boldsymbol{\omega
}_{v}}%
%TCIMACRO{\TeXButton{\Bigl[}{\Bigl[}}%
%BeginExpansion
\Bigl[%
%EndExpansion
\left\vert \rho_{vh}\left(  \vartheta_{v}^{\ast},\vartheta_{v}\right)
\right\vert \phi\left(  1\right) \nonumber\\
&  \times\int_{0}^{1/2}\left\vert \Phi\left(  W_{vh}(\boldsymbol{c}^{\ast
})^{\prime}+\gamma_{v}^{\ast}\right)  -\Phi(W_{vh}(\boldsymbol{c}^{\ast
})^{\prime}-\left(  1-\lambda\right)  \rho_{vh}\left(  \vartheta_{v}^{\ast
},\vartheta_{v}\right)  +\gamma_{v}^{\ast})\right\vert d\lambda%
%TCIMACRO{\TeXButton{\Bigr]}{\Bigr]}}%
%BeginExpansion
\Bigr]%
%EndExpansion
\nonumber\\
&  \geq E_{\boldsymbol{\omega}_{v}}%
%TCIMACRO{\TeXButton{\Bigl[}{\Bigl[}}%
%BeginExpansion
\Bigl[%
%EndExpansion
\left\vert \rho_{vh}\left(  \vartheta_{v}^{\ast},\vartheta_{v}\right)
\right\vert \phi\left(  1\right) \nonumber\\
&  \times\left\vert \Phi\left(  W_{vh}(\boldsymbol{c}^{\ast})^{\prime}%
+\gamma_{v}^{\ast}\right)  -\Phi(W_{vh}(\boldsymbol{c}^{\ast})^{\prime}%
-\tfrac{1}{2}\rho_{vh}\left(  \vartheta_{v}^{\ast},\vartheta_{v}\right)
+\gamma_{v}^{\ast})\right\vert
%TCIMACRO{\TeXButton{\Bigr]}{\Bigr]}}%
%BeginExpansion
\Bigr]%
%EndExpansion
\nonumber\\
&  \geq E_{\boldsymbol{\omega}_{v}}%
%TCIMACRO{\TeXButton{\Bigl[}{\Bigl[}}%
%BeginExpansion
\Bigl[%
%EndExpansion
\left\vert \rho_{vh}\left(  \vartheta_{v}^{\ast},\vartheta_{v}\right)
\right\vert \phi\left(  1\right) \nonumber\\
&  \times\left\vert \tfrac{1}{2}\rho_{vh}\left(  \vartheta_{v}^{\ast
},\vartheta_{v}\right)  \right\vert \phi\left(  \max\left\{  \left\vert
W_{vh}(\boldsymbol{c}^{\ast})^{\prime}+\gamma_{v}^{\ast}\right\vert
,\left\vert W_{vh}(\boldsymbol{c}^{\ast})^{\prime}-\tfrac{1}{2}\rho
_{vh}\left(  \vartheta_{v}^{\ast},\vartheta_{v}\right)  +\gamma_{v}^{\ast
}\right\vert \right\}  \right)
%TCIMACRO{\TeXButton{\Bigr]}{\Bigr]}}%
%BeginExpansion
\Bigr]%
%EndExpansion
,\nonumber\\
&  \geq\frac{\phi\left(  1\right)  }{2}E_{\boldsymbol{\omega}_{v}}\left[
\left\vert \rho_{vh}\left(  \vartheta_{v}^{\ast},\vartheta_{v}\right)
\right\vert ^{2}\right]  \phi\left(  C_{W}||\boldsymbol{c}^{\ast}%
||+\frac{\epsilon_{1}}{2}\left(  C_{W}+1\right)  +\left\vert \gamma_{v}^{\ast
}\right\vert \right)  , \label{eq: U-c-difference2}%
\end{align}
where the second inequality holds since $\Phi$ is monotone and the difference
between $W_{vh}(\boldsymbol{c}^{\ast})^{\prime}+\gamma_{v}^{\ast}$ and
$W_{vh}(\boldsymbol{c}^{\ast})^{\prime}-\left(  1-\lambda\right)  \rho
_{vh}\left(  \vartheta_{v}^{\ast},\vartheta_{v}\right)  +\gamma_{v}^{\ast}$ is
minimized at $\lambda=1/2$ over $\left[  0,1/2\right]  $; the third equality
uses $\left\vert \Phi\left(  y\right)  -\Phi\left(  x\right)  \right\vert
\geq\left\vert y-x\right\vert \phi\left(  \max\left\{  \left\vert y\right\vert
,\left\vert x\right\vert \right\}  \right)  $ in (\ref{eq: lower-bound-ratio})
with $y=W_{vh}(\boldsymbol{c}^{\ast})^{\prime}+\gamma_{v}^{\ast}$ and
$x=W_{vh}\left(  \boldsymbol{c}^{\ast}\right)  ^{\prime}-\tfrac{1}{2}\rho
_{vh}\left(  \vartheta_{v}^{\ast},\vartheta_{v}\right)  +\gamma_{v}^{\ast}$
(note that $\phi\left(  s\right)  $ is decreasing for $s\geq0$); and the last
equality holds since
\[
\max\left\{  \left\vert y\right\vert ,\left\vert x\right\vert \right\}  \leq
C_{W}||\boldsymbol{c}^{\ast}||+\frac{\epsilon_{1}}{2}\left(  C_{W}+1\right)
+\left\vert \gamma_{v}^{\ast}\right\vert .
\]
Recalling properties of quadratic forms (and noting that all vectors are
defined as row vectors), we have%
\begin{align*}
E_{\boldsymbol{\omega}_{v}}\left[  \left\vert \rho_{vh}\left(  \vartheta
_{v}^{\ast},\vartheta_{v}\right)  \right\vert ^{2}\right]   &  =\left(
\vartheta_{v}^{\ast},\vartheta_{v}\right)  E_{\boldsymbol{\omega}_{v}}\left[
(W_{vh},1)^{\prime}(W_{vh},1)\right]  \left(  \vartheta_{v}^{\ast}%
,\vartheta_{v}\right)  ^{\prime}\\
&  \geq\left(  \inf\nolimits_{v\geq1}\lambda_{v}^{\min}\right)  \left\Vert
\vartheta_{v}^{\ast}-\vartheta_{v}\right\Vert ^{2}\geq\left(  \inf
\nolimits_{v\geq1}\lambda_{v}^{\min}\right)  \epsilon_{1}^{2},
\end{align*}
where $\lambda_{v}^{\min}$ is the minimum eigenvalue of the symmetric matrix
$E_{\boldsymbol{\omega}_{v}}\left[  W_{vh}^{\prime}W_{vh}\right]  $. From
these, we can obtain
\[
Q_{v}\left(  \vartheta_{v}^{\ast}\right)  -Q_{v}\left(  \vartheta_{v}\right)
\geq\frac{\phi\left(  1\right)  }{2}\left(  \inf\limits_{v\geq1}\lambda
_{v}^{\min}\right)  \epsilon_{1}^{2}\phi\left(  C_{W}||\boldsymbol{c}^{\ast
}||+\frac{\epsilon_{1}}{2}\left(  C_{W}+1\right)  +\left\vert \gamma_{v}%
^{\ast}\right\vert \right)  .
\]
Noting that this lower bound is independent of $\vartheta_{v}=\left(
\boldsymbol{c},\gamma_{v}\right)  $ and using the almost sure bound of
$\left\vert \gamma_{v}^{\ast}\right\vert $ (Lemma \ref{lem: realized-gamma}
with $t\in\left(  1/2,1\right)  $), we can obtain%
\begin{align*}
&  \inf_{1\leq v\leq\bar{v}}\left[  Q_{v}\left(  \vartheta_{v}^{\ast}\right)
-\sup_{\vartheta_{v}\in \Upsilon_{\boldsymbol{c}}\times \Upsilon\left(  \bar
{v}\right)  ;\text{ }\left\Vert \vartheta_{v}-\vartheta_{v}^{\ast}\right\Vert
\geq\epsilon_{1}}Q_{v}\left(  \vartheta_{v}\right)  \right] \\
&  \geq\frac{\phi\left(  1\right)  }{2}\left(  \inf\limits_{v\geq1}\lambda
_{v}^{\min}\right)  \epsilon_{1}^{2}\inf_{\left\vert \gamma_{v}^{\ast
}\right\vert \leq\sqrt{4(\sigma_{e}^{\ast})^{2}\log[\bar{v}\left(  \log\bar
{v}\right)  ^{t}]}}\phi\left(  C_{W}||\boldsymbol{c}^{\ast}||+\frac
{\epsilon_{1}}{2}\left(  C_{W}+1\right)  +\left\vert \gamma_{v}^{\ast
}\right\vert \right)  ,
\end{align*}
where the inequality holds almost surely for (sufficiently) large $\bar{v}$.
We can also derive lower bound of the last component:%
\begin{align*}
&  \inf_{\left\vert \gamma_{v}^{\ast}\right\vert \leq\sqrt{4(\sigma_{e}^{\ast
})^{2}\log[\bar{v}\left(  \log\bar{v}\right)  ^{t}]}}\phi\left(
C_{W}||\boldsymbol{c}^{\ast}||+\frac{\epsilon_{1}}{2}\left(  C_{W}+1\right)
+\left\vert \gamma_{v}^{\ast}\right\vert \right) \\
&  =\frac{1}{\sqrt{2\pi}}\exp\left\{  -\frac{1}{2}\left\vert C_{W}%
||\boldsymbol{c}^{\ast}||+\frac{\epsilon_{1}}{2}\left(  C_{W}+1\right)
+\sqrt{4(\sigma_{e}^{\ast})^{2}\log[\bar{v}\left(  \log\bar{v}\right)  ^{t}%
]}\right\vert ^{2}\right\} \\
&  \geq\frac{1}{\sqrt{2\pi}}\exp\left\{  -\frac{1}{2}\left\vert \sqrt
{4(\sigma_{e}^{\ast})^{2}\log[\bar{v}\left(  \log\bar{v}\right)  ]}\right\vert
^{2}\right\}  =\frac{1}{\sqrt{2\pi}}[\bar{v}(\log\bar{v})]^{-2(\sigma
_{e}^{\ast})^{2}},
\end{align*}
where the last equality holds for sufficiently large $\bar{v}$ (since
$\log[\bar{v}(\log\bar{v})]$ is much larger than $\log[\bar{v}(\log\bar
{v})^{t}]$). From these, we can obtain the desired result:%
\[
\text{the LHS of (\ref{eq: identification-diff})}\geq C_{Q}[\bar{v}(\log
\bar{v})]^{-2(\sigma_{e}^{\ast})^{2}}%
\]
with $C_{Q}=C_{Q}(\epsilon_{1})=\frac{\phi\left(  1\right)  }{2\sqrt{2\pi}%
}\left(  \inf\nolimits_{v\geq1}\lambda_{v}^{\min}\right)  \epsilon_{1}^{2}$.

It remains to verify two inequalities (\ref{eq: lower-bound-ratio}) and
(\ref{eq: lower-bound-diff}).

\noindent\textbf{Proof of inequality (\ref{eq: lower-bound-ratio}):} Using a
bound of the standard normal CDF (p. 112 of Karatzas and Shreve, 1991):
$\int_{x}^{\infty}\phi\left(  u\right)  du\leq\frac{1}{x}\phi\left(  x\right)
$ for any $x>0$, we can derive
\[
1-\Phi\left(  x\right)  \leq\frac{1}{x}\phi\left(  x\right)  \text{ for
}x>0\text{; \ }\Phi\left(  x\right)  \leq\frac{1}{\left\vert x\right\vert
}\phi\left(  \left\vert x\right\vert \right)  \text{ for }x<0
\]
through the symmetry of normal distributions. To avoid explosive behavior of
these upper bounds when $x$ is close to $0$, we use slightly modified bounds
as follows:%
\begin{align}
1-\Phi\left(  x\right)   &  \leq1\left\{  0\leq x\leq1\right\}  +\frac{1}%
{x}\phi\left(  x\right)  1\left\{  1<x\right\}  \text{ \ for }%
x>0;\label{eq: upper-positive}\\
\Phi\left(  x\right)   &  \leq1\left\{  -1\leq x\leq0\right\}  +\frac
{1}{\left\vert x\right\vert }\phi\left(  x\right)  1\left\{  x<-1\right\}
\text{ \ for }x<0, \label{eq: upper-negative}%
\end{align}
which also covers the case with $x=0$. Since $\Phi\left(  x\right)  \in\left(
0,1\right)  $, we have
\[
\frac{\phi\left(  x\right)  }{\Phi\left(  x\right)  \left[  1-\Phi\left(
x\right)  \right]  }\geq\left\{
\begin{array}
[c]{cc}%
\phi\left(  x\right)  & \text{if }\left\vert x\right\vert \leq1,\\
\left\vert x\right\vert ^{2} & \text{if }x>1,
\end{array}
\right.
\]
which, together with (\ref{eq: upper-positive}) and (\ref{eq: upper-negative}%
), leads to%
\begin{align*}
\frac{\phi\left(  x\right)  }{\Phi\left(  x\right)  \left[  1-\Phi\left(
x\right)  \right]  }  &  \geq1\left\{  x\leq1\right\}  \phi\left(  1\right)
+1\left\{  \left\vert x\right\vert >1\right\}  x^{2}\\
&  \geq\phi\left(  1\right)  \text{ \ for any }x\in\mathbb{R}\text{,}%
\end{align*}
where the first inequality holds since $\phi\left(  x\right)  \geq\phi\left(
1\right)  =\frac{1}{\sqrt{2\pi}}\exp\{-\frac{1}{2}\}$ for any $\left\vert
x\right\vert \leq1$ and the second holds since $1>\phi\left(  1\right)
$.\medskip

\noindent\textbf{Proof of inequality (\ref{eq: lower-bound-diff}):} By the
Taylor expansion, we have%
\begin{equation}
\left\vert \Phi\left(  y\right)  -\Phi\left(  x\right)  \right\vert
=\left\vert y-x\right\vert \int_{0}^{1}\phi\left(  x+\lambda\left(
y-x\right)  \right)  d\lambda. \label{eq: lower-bound-Taylor}%
\end{equation}
Consider an interval defined as%
\[
\{x+\lambda\left(  y-x\right)  \text{ }|\text{ }\lambda\in\left[  0,1\right]
\}.
\]
This interval is a connected subset on $\mathbb{R}$; thus it may contain zero,
be in the negative region ($x,y<0$), or be in the positive region ($x,y>0$).
In each of these cases, by the shape of $\phi\left(  z\right)  $ (having a
unique peak at $z=0$, being symmetric, decreasing in the region $z>0$, and
increasing $z<0$), we have for any $\lambda\in\left[  0,1\right]  $,%
\[
\phi\left(  x+\lambda\left(  y-x\right)  \right)  \geq\phi\left(  \max\left\{
\left\vert y\right\vert ,\left\vert x\right\vert \right\}  \right)  \text{
\ for any }\lambda\in\left[  0,1\right]  \text{,}%
\]
which, together with (\ref{eq: lower-bound-Taylor}), implies the desired
result (\ref{eq: lower-bound-diff}). The proof of Lemma
\ref{lem: identification-lower} is now complete.
\end{proof}

\begin{proof}
[Proof of Lemma \ref{lem: UC-double}]Let $\bar{\Upsilon}\left(  \bar
{v}\right)  :=\Upsilon_{1}\times \Upsilon\left(  \bar{v}\right)  $. This set
$\bar{\Upsilon}\left(  \bar{v}\right)  $ is compact for each $\bar{v}$ and
thus it can be divided into $M\left(  \epsilon_{2}\right)  $ subsets,
$\bar{\Upsilon}^{1}\left(  \bar{v}\right)  ,\bar{\Upsilon}^{2}\left(  \bar
{v}\right)  ,\dots,\Upsilon^{M\left(  \epsilon_{2}\right)  }\left(  \bar
{v}\right)  $, such that $\left\Vert \left(  \boldsymbol{c},\gamma_{v}\right)
-\left(  \boldsymbol{\tilde{c}},\tilde{\gamma}_{v}\right)  \right\Vert
<\epsilon_{2}$ whenever $\left(  \boldsymbol{c},\gamma_{v}\right)  $ and
$\left(  \boldsymbol{\tilde{c}},\tilde{\gamma}_{v}\right)  $ are in the same
subset. Let $(\boldsymbol{c}^{\left(  j\right)  },\gamma_{v}^{\left(
j\right)  })$ be some point in $\bar{\Upsilon}^{j}\left(  \bar{v}\right)  $
for each $j\in\{1,2,\dots,M(\epsilon_{2})\}$. Since $\Upsilon_{1}$ is a
compact subset of $\mathbb{R}^{d_{W}}$ and $\Upsilon\left(  \bar{v}\right)  $
is a compact interval on $\mathbb{R}$ that may grow with the rate of
$\sqrt{\log[\bar{v}\left(  \log\bar{v}\right)  ^{t}]}$, the number of subsets
that cover $\bar{\Upsilon}\left(  \bar{v}\right)  $ can be bounded as follows:%
\[
M\left(  \epsilon_{2}\right)  \leq O(\epsilon_{2}^{-d_{W}})\times
O(\epsilon_{2}^{-1}\sqrt{\log[\bar{v}\left(  \log\bar{v}\right)  ^{t}%
]})=O(\epsilon_{2}^{-\left(  d_{W}+1\right)  }\sqrt{\log[\bar{v}\left(
\log\bar{v}\right)  ^{t}]}).
\]
Then,
\begin{align}
&  \max_{1\leq v\leq\bar{v}}\sup_{\left(  \boldsymbol{c},\gamma_{v}\right)
\in\bar{\Upsilon}\left(  \bar{v}\right)  }\left\vert \hat{Q}_{v}\left(
\boldsymbol{c},\gamma_{v}\right)  -Q_{v}\left(  \boldsymbol{c},\gamma
_{v}\right)  \right\vert \nonumber\\
&  \leq\max_{1\leq v\leq\bar{v}}\max_{j\in\{1,2,\dots,M\left(  \epsilon
_{2}\right)  \}}\sup_{\left(  \boldsymbol{c},\gamma_{v}\right)  \in
\bar{\Upsilon}^{j}\left(  \bar{v}\right)  }\left\vert \hat{Q}_{v}\left(
\boldsymbol{c},\gamma_{v}\right)  -Q_{v}\left(  \boldsymbol{c},\gamma
_{v}\right)  \right\vert \nonumber\\
&  \leq\max_{1\leq v\leq\bar{v}}\max_{j\in\{1,2,\dots,M\left(  \epsilon
_{2}\right)  \}}\sup_{\left(  \boldsymbol{c},\gamma_{v}\right)  \in
\bar{\Upsilon}^{j}\left(  \bar{v}\right)  }\left\vert \frac{1}{N_{v}}%
\sum\nolimits_{v=1}^{N_{v}}\left[  \mathcal{L}_{vh}\left(  \boldsymbol{c}%
,\gamma_{v}\right)  -\mathcal{L}_{vh}(\boldsymbol{c}^{\left(  j\right)
},\gamma_{v}^{\left(  j\right)  })\right]  \right\vert \nonumber\\
&  +\max_{1\leq v\leq\bar{v}}\max_{j\in\{1,2,\dots,M\left(  \epsilon
_{2}\right)  \}}\sup_{\left(  \boldsymbol{c},\gamma_{v}\right)  \in
\bar{\Upsilon}^{j}\left(  \bar{v}\right)  }\left\vert \frac{1}{N_{v}}%
\sum\nolimits_{v=1}^{N_{v}}E_{\boldsymbol{\omega}_{v}}\left[  \mathcal{L}%
_{vh}(\boldsymbol{c},\gamma_{v})-\mathcal{L}_{vh}(\boldsymbol{c}^{\left(
j\right)  },\gamma_{v}^{\left(  j\right)  })\right]  \right\vert \nonumber\\
&  +\max_{1\leq v\leq\bar{v}}\max_{j\in\{1,2,\dots,M\left(  \epsilon
_{2}\right)  \}}\left\vert \hat{Q}_{v}(\boldsymbol{c}^{\left(  j\right)
},\gamma_{v}^{\left(  j\right)  })-Q_{v}(\boldsymbol{c}^{\left(  j\right)
},\gamma_{v}^{\left(  j\right)  })\right\vert . \label{eq: UC-3terms}%
\end{align}
Each of the first two terms on the majorant side is bounded by%
\begin{equation}
\left\Vert \left(  \boldsymbol{c},\gamma_{v}\right)  -\left(
\boldsymbol{\tilde{c}},\tilde{\gamma}_{v}\right)  \right\Vert \times
C_{2}\sqrt{\log[\bar{v}\left(  \log\bar{v}\right)  ^{t}]}\leq\epsilon
_{2}\times C_{2}\sqrt{\log[\bar{v}\left(  \log\bar{v}\right)  ^{t}]},
\label{eq: first-two}%
\end{equation}
which follows from
\begin{equation}
\left\vert \mathcal{L}_{vh}\left(  \boldsymbol{c},\gamma_{v}\right)
-\mathcal{L}_{vh}\left(  \boldsymbol{\tilde{c}},\tilde{\gamma}_{v}\right)
\right\vert \leq\left\Vert \left(  \boldsymbol{c},\gamma_{v}\right)  -\left(
\boldsymbol{\tilde{c}},\tilde{\gamma}_{v}\right)  \right\Vert \times
C_{2}\sqrt{\log[\bar{v}\left(  \log\bar{v}\right)  ^{t}]}, \label{eq: Lip}%
\end{equation}
where $C_{2}>0$ is some constant that is independent of $\left(  v,h\right)
$, $\left(  \boldsymbol{c},\gamma_{v}\right)  $, and $\left(
\boldsymbol{\tilde{c}},\tilde{\gamma}_{v}\right)  $; the proof of
(\ref{eq: Lip}) is provided below. Here, by setting
\begin{equation}
\epsilon_{2}=\sqrt{\log[\bar{v}\left(  \log\bar{v}\right)  ^{t}](\log
N_{0})/N_{0}}, \label{eq: epsilon2-def}%
\end{equation}
we have the first two terms on the RHS\ of (\ref{eq: UC-3terms}) be
$O(\sqrt{\left(  \log[\bar{v}\left(  \log\bar{v}\right)  ^{t}]\right)
^{2}\left(  \log N_{0}\right)  /N_{0}})$.

We next derive the probability bound of the third term on the RHS\ of
(\ref{eq: UC-3terms}). To this end, we use the following bounds:%
\begin{equation}
|\mathcal{L}_{vh}\left(  \boldsymbol{c},\gamma_{v}\right)  |\leq
C_{\mathcal{L}}\log[\bar{v}\left(  \log\bar{v}\right)  ^{t}],
\label{eq: U-bound}%
\end{equation}
uniformly over $\left(  v,h\right)  $ and $\left(  \boldsymbol{c},\gamma
_{v}\right)  $, whose proof is provided below, as well as
\[
\mathrm{Var}[%
%TCIMACRO{\tsum \nolimits_{v=1}^{N_{v}}}%
%BeginExpansion
{\textstyle\sum\nolimits_{v=1}^{N_{v}}}
%EndExpansion
\mathcal{L}_{vh}\left(  \boldsymbol{c},\gamma_{v}\right)  ]\leq N_{v}%
\left\vert C_{\mathcal{L}}\log[\bar{v}\left(  \log\bar{v}\right)
^{t}]\right\vert ^{2},
\]
which follows from $\mathrm{Var}\left[  \mathcal{L}_{vh}\left(  \boldsymbol{c}%
,\gamma_{v}\right)  \right]  \leq E[|\mathcal{L}_{vh}\left(  \boldsymbol{c}%
,\gamma_{v}\right)  |^{2}]\leq\left\vert C_{\mathcal{L}}\log[\bar{v}\left(
\log\bar{v}\right)  ^{t}]\right\vert ^{2}$ almost surely as $\bar
{v}\rightarrow\infty$. Thus, by Bernstein's inequality for independent
variables (p. 102, van der Vaart and Wellner, 1996) with these two bounds for
$|\mathcal{L}_{vh}\left(  \boldsymbol{c},\gamma_{v}\right)  |$ and
$\mathrm{Var}[%
%TCIMACRO{\tsum \nolimits_{v=1}^{N_{v}}}%
%BeginExpansion
{\textstyle\sum\nolimits_{v=1}^{N_{v}}}
%EndExpansion
\mathcal{L}_{vh}\left(  \boldsymbol{c},\gamma_{v}\right)  ]$, we have for
$s>0$,%
\begin{align}
&  P_{\boldsymbol{\omega}_{v}}\left[  \max_{1\leq v\leq\bar{v}}\max
_{j\in\{1,2,\dots,M\left(  \epsilon_{2}\right)  \}}\left\vert \hat{Q}%
_{v}(\boldsymbol{c}^{\left(  j\right)  },\gamma_{v}^{\left(  j\right)
})-Q_{v}(\boldsymbol{c}^{\left(  j\right)  },\gamma_{v}^{\left(  j\right)
})\right\vert \geq s\right] \nonumber\\
&  \leq\bar{v}\max_{1\leq v\leq\bar{v}}M\left(  \epsilon_{2}\right)
P_{\boldsymbol{\omega}_{v}}\left[  \sum\nolimits_{v=1}^{N_{v}}\left\{
\mathcal{L}_{vh}(\boldsymbol{c}^{\left(  j\right)  },\gamma_{v}^{\left(
j\right)  })-E[\mathcal{L}_{vh}(\boldsymbol{c}^{\left(  j\right)  },\gamma
_{v}^{\left(  j\right)  })]\right\}  \geq N_{v}s\right] \nonumber\\
&  \leq\bar{v}\max_{1\leq v\leq\bar{v}}M\left(  \epsilon_{2}\right)
2\exp\left\{  -\frac{1}{2}\frac{\left(  N_{v}s\right)  ^{2}}{N_{v}\left\vert
C_{\mathcal{L}}\log[\bar{v}\left(  \log\bar{v}\right)  ^{t}]\right\vert
^{2}+\frac{1}{3}C_{\mathcal{L}}\log[\bar{v}\left(  \log\bar{v}\right)
^{t}]\left(  N_{v}s\right)  }\right\} \nonumber\\
&  =\bar{v}\max_{1\leq v\leq\bar{v}}M\left(  \epsilon_{2}\right)
2\exp\left\{  -\frac{1}{2}\frac{\left(  N_{v}/N_{0}\right)  s_{0}^{2}\left(
\log N_{0}\right)  }{\left\vert C_{\mathcal{L}}\right\vert ^{2}+\frac
{C_{\mathcal{L}}}{3}s_{0}\sqrt{\left(  \log N_{0}\right)  /N_{0}}}\right\}
\nonumber\\
&  \leq\bar{v}\times O\left(  \epsilon_{2}^{-\left(  d_{W}+1\right)  }%
\sqrt{\log[\bar{v}\left(  \log\bar{v}\right)  ^{t}]}\right)  \times
N_{0}^{-\underline{r}s_{0}^{2}\left/  2(\left\vert C_{\mathcal{L}}\right\vert
^{2}+s_{0})\right.  }, \label{eq: Exp-Pr}%
\end{align}
where the equality holds with%
\begin{equation}
s=s_{0}\sqrt{\left\vert \log[\bar{v}\left(  \log\bar{v}\right)  ^{t}%
]\right\vert ^{2}\left(  \log N_{0}\right)  /N_{0}}\text{ \ (}s_{0}>0\text{ is
a constant),} \label{eq: s0-def}%
\end{equation}
and the last inequality holds since $\left(  N_{v}/N_{0}\right)
\geq\underline{r}$ (for any $v$) and $\frac{C_{U}}{3}\sqrt{\left(  \log
N_{0}\right)  /N_{0}}\leq1$ for sufficiently large $N_{0}$. Since we have
defined $\epsilon_{2}=\sqrt{\log[\bar{v}\left(  \log\bar{v}\right)  ^{t}](\log
N_{0})/N_{0}}$ in (\ref{eq: epsilon2-def}) and $\log\bar{v}$ is at most of
polynomial order of $N_{0}$, for some sufficiently large $s_{0}>0$, the
majorant side of (\ref{eq: Exp-Pr}) tends to zero as $N_{0}\rightarrow\infty$.
That is, given (\ref{eq: s0-def}) and $N_{v}=r_{v}N_{0}$ with $\underline{r}%
\leq r_{v}\leq\bar{r}$, we have%
\[
\max_{1\leq v\leq\bar{v}}\max_{j\in\{1,2,\dots,M\left(  \epsilon_{2}\right)
\}}\left\vert \hat{Q}_{v}(\boldsymbol{c}^{\left(  j\right)  },\gamma
_{v}^{\left(  j\right)  })-Q_{v}(\boldsymbol{c}^{\left(  j\right)  }%
,\gamma_{v}^{\left(  j\right)  })\right\vert =O_{p}(\sqrt{\left(  \log[\bar
{v}\left(  \log\bar{v}\right)  ^{t}]\right)  ^{2}\left(  \log N_{0}\right)
/N_{0}}).
\]
Noting that $\left(  \log[\bar{v}\left(  \log\bar{v}\right)  ^{t}]\right)
^{2}<$ $\left(  \log\bar{v}\right)  ^{3}$, this bound, together with
(\ref{eq: UC-3terms}) and (\ref{eq: first-two}), implies the conclusion of
this Lemma \ref{lem: UC-double}.

To complete the proof of the lemma, it remains to show inequalities
(\ref{eq: Lip}) and (\ref{eq: U-bound}).

\noindent\textbf{Proof of (\ref{eq: Lip}): }We look at%
\begin{align*}
&  \left\vert \mathcal{L}_{vh}\left(  \boldsymbol{c},\gamma_{v}\right)
-\mathcal{L}_{vh}\left(  \boldsymbol{\tilde{c}},\tilde{\gamma}_{v}\right)
\right\vert \\
&  =A_{vh}\left[  \log F_{\varepsilon}\left(  W_{vh}\boldsymbol{c}^{\prime
}+\gamma_{v}\right)  -\log F_{\varepsilon}\left(  W_{vh}\boldsymbol{\tilde{c}%
}^{\prime}+\tilde{\gamma}_{v}\right)  \right] \\
&  +\left(  1-A_{vh}\right)  \left[  \log\left(  1-F_{\varepsilon}\left(
W_{vh}\boldsymbol{c}^{\prime}+\gamma_{v}\right)  \right)  -\log\left(
1-F_{\varepsilon}\left(  W_{vh}\boldsymbol{\tilde{c}}^{\prime}+\tilde{\gamma
}_{v}\right)  \right)  \right] \\
&  \leq\left\Vert \left(  \boldsymbol{c},\gamma_{v}\right)  -\left(
\boldsymbol{\tilde{c}},\tilde{\gamma}_{v}\right)  \right\Vert \left\{
\sup_{\left(  \boldsymbol{c},\gamma_{v}\right)  \in\bar{\Upsilon}\left(
\bar{v}\right)  }\frac{f_{\varepsilon}\left(  W_{vh}\boldsymbol{c}^{\prime
}+\gamma_{v}\right)  }{F_{\varepsilon}\left(  W_{vh}\boldsymbol{c}^{\prime
}+\gamma_{v}\right)  }+\sup_{\left(  \boldsymbol{c},\gamma_{v}\right)  \in
\bar{\Upsilon}\left(  \bar{v}\right)  }\frac{f_{\varepsilon}\left(
W_{vh}\boldsymbol{c}^{\prime}+\gamma_{v}\right)  }{1-F_{\varepsilon}\left(
W_{vh}\boldsymbol{c}^{\prime}+\gamma_{v}\right)  }\right\}  .
\end{align*}
Since the parameter space $\Upsilon_{1}$ in which $\boldsymbol{c}$ lies is
compact, the support of $W_{vh}$ is bounded, and $\left\vert \gamma
_{v}\right\vert \leq\sqrt{4(\sigma_{e}^{\ast})^{2}\log[\bar{v}\left(  \log
\bar{v}\right)  ^{t}]}$, we can bound possible minimum and maximum values of
$W_{vh}\boldsymbol{c}^{\prime}+\gamma_{v}$ uniformly over $\left(  v,h\right)
$. That is, by letting
\begin{equation}
I_{\bar{v}}:=C^{0}+C^{1}\sqrt{\log[\bar{v}\left(  \log\bar{v}\right)  ^{t}]}
\label{eq: regressor-bound}%
\end{equation}
with sufficiently large constants, $C^{0},C^{1}>0$, we may suppose that
$-I_{\bar{v}}\leq W_{vh}\boldsymbol{c}^{\prime}+\gamma_{v}\leq I_{\bar{v}}$.
By the inequalities in (\ref{eq: NCDF-bound}), we can find
\begin{align*}
\sup_{\left(  \boldsymbol{c},\gamma_{v}\right)  \in\bar{\Upsilon}\left(
\bar{v}\right)  }\frac{f_{\varepsilon}\left(  W_{vh}\boldsymbol{c}^{\prime
}+\gamma_{v}\right)  }{F_{\varepsilon}\left(  W_{vh}\boldsymbol{c}^{\prime
}+\gamma_{v}\right)  }  &  \leq\sup_{-I_{\bar{v}}\leq x\leq I_{\bar{v}}}%
\frac{f_{\varepsilon}\left(  x\right)  }{F_{\varepsilon}\left(  x\right)  }\\
&  \leq\sup_{-I_{\bar{v}}\leq x<-1}\frac{\phi\left(  x\right)  }%
{\frac{\left\vert x\right\vert }{1+x^{2}}\phi\left(  x\right)  }+\sup
_{x\geq-1}\frac{f_{\varepsilon}\left(  x\right)  }{F_{\varepsilon}\left(
x\right)  }\leq\sup_{-I_{\bar{v}}\leq x<-1}\left(  1+\left\vert x\right\vert
\right)  +\frac{\phi\left(  0\right)  }{\Phi\left(  -1\right)  }%
\end{align*}
and
\begin{align*}
\sup_{\left(  \boldsymbol{c},\gamma_{v}\right)  \in \Upsilon_{\boldsymbol{c}%
}\times \Upsilon_{\gamma}^{\bar{v}}}\frac{f_{\varepsilon}\left(  W_{vh}%
\boldsymbol{c}^{\prime}+\gamma_{v}\right)  }{1-F_{\varepsilon}\left(
W_{vh}\boldsymbol{c}^{\prime}+\gamma_{v}\right)  }  &  \leq\sup_{-I_{\bar{v}%
}\leq x\leq I_{\bar{v}}}\frac{f_{\varepsilon}\left(  x\right)  }%
{1-F_{\varepsilon}\left(  x\right)  }\\
&  \leq\sup_{x\leq1}\frac{f_{\varepsilon}\left(  x\right)  }{1-F_{\varepsilon
}\left(  x\right)  }+\sup_{1<x\leq I_{\bar{v}}}\frac{\phi\left(  x\right)
}{\frac{x}{1+x^{2}}\phi\left(  x\right)  }\\
&  \leq\frac{\phi\left(  0\right)  }{1-\Phi\left(  1\right)  }+\sup_{1<x\leq
I_{\bar{v}}}\left(  1+x\right)  .
\end{align*}
Therefore, given these bounds and the definition of $I_{\bar{v}}$, we can
write
\[
\left\vert \mathcal{L}_{vh}\left(  \boldsymbol{c},\gamma_{v}\right)
-\mathcal{L}_{vh}\left(  \boldsymbol{\tilde{c}},\tilde{\gamma}_{v}\right)
\right\vert \leq\left\Vert \left(  \boldsymbol{c},\gamma_{v}\right)  -\left(
\boldsymbol{\tilde{c}},\tilde{\gamma}_{v}\right)  \right\Vert \times
C_{2}\sqrt{\log[\bar{v}\left(  \log\bar{v}\right)  ^{t}]},
\]
for any large $\bar{v}$, where $C_{2}>0$ is some constant that is independent
of $\left(  v,h\right)  $, $\bar{v}$, $\left(  \boldsymbol{c},\gamma
_{v}\right)  $, and $\left(  \boldsymbol{\tilde{c}},\tilde{\gamma}_{v}\right)
$.\medskip

\noindent\textbf{Proof of (\ref{eq: U-bound}).} Note that $F_{\varepsilon
}\left(  W_{vh}\boldsymbol{c}^{\prime}+\gamma_{v}\right)  \in\left(
0,1\right)  $ and thus $\mathcal{L}_{vh}\left(  \boldsymbol{c},\gamma
_{v}\right)  \leq0$. To find a lower bound of $\mathcal{L}_{vh}\left(
\boldsymbol{c},\gamma_{v}\right)  $, we use the following results:
\begin{equation}
\frac{x}{1+x^{2}}\phi\left(  x\right)  \leq1-F_{\varepsilon}\left(  x\right)
\text{ for }x\geq0\text{; and }\frac{\left\vert x\right\vert }{1+x^{2}}%
\phi\left(  x\right)  \leq F_{\varepsilon}\left(  x\right)  \text{ \ for }x<0,
\label{eq: NCDF-bound}%
\end{equation}
which follows p. 112 of Karatzas and Shreve (1991) and the symmetry of the
standard normal distribution (noting that $1-F_{\varepsilon}\left(  x\right)
=\int_{x}^{\infty}\phi\left(  u\right)  du$), and separately look at
$\log(1-F_{\varepsilon}\left(  W_{vh}\boldsymbol{c}^{\prime}+\gamma
_{v}\right)  )$ and $\log F_{\varepsilon}\left(  W_{vh}\boldsymbol{c}^{\prime
}+\gamma_{v}\right)  $. This inequality (\ref{eq: NCDF-bound}) and the bound
of $W_{vh}\boldsymbol{c}^{\prime}+\gamma_{v}$, $I_{\bar{v}}$ (defined
in(\ref{eq: regressor-bound})), we obtain%
\begin{align*}
\log(1-F_{\varepsilon}\left(  W_{vh}\boldsymbol{c}^{\prime}+\gamma_{v}\right)
)  &  \geq\log(1-F_{\varepsilon}\left(  I_{\bar{v}}\right)  )\geq\log\left(
\frac{I_{\bar{v}}}{1+\left(  I_{\bar{v}}^{\ast}\right)  ^{2}}\phi\left(
I_{\bar{v}}\right)  \right) \\
&  =\log I^{\ast}-\log(1+\left\vert I_{\bar{v}}\right\vert ^{2})+\log
\phi(I_{\bar{v}})
\end{align*}
and analogously,
\begin{align*}
\log F_{\varepsilon}\left(  W_{vh}\boldsymbol{c}^{\prime}+\gamma_{v}\right)
&  \geq\log F_{\varepsilon}\left(  -I_{\bar{v}}\right) \\
&  \geq\log I_{\bar{v}}-\log(1+\left\vert I_{\bar{v}}\right\vert ^{2}%
)+\log\phi(-I_{\bar{v}}).
\end{align*}
Therefore, since $\phi\left(  I_{\bar{v}}\right)  =\phi(-I_{\bar{v}})$ (by the
symmetry of $\phi$), we have%
\[
0\geq\mathcal{L}_{vh}\left(  \boldsymbol{c},\gamma_{v}\right)  \geq\log
I_{\bar{v}}-\log(1+\left\vert I_{\bar{v}}\right\vert ^{2})+\log\phi(I_{\bar
{v}}).
\]
By definition of $I_{\bar{v}}$, $\log\phi\left(  I_{\bar{v}}\right)
=-\log\sqrt{2\pi}-\frac{\left\vert C^{0}+C^{1}\sqrt{\log[\bar{v}\left(
\log\bar{v}\right)  ^{t}]}\right\vert ^{2}}{2}$ and thus we can find some
constant $C_{\mathcal{L}}\in\left(  0,\infty\right)  $ that is independent of
$\left(  v,h\right)  $ and $\bar{v}$ satisfying%
\[
|\mathcal{L}_{vh}\left(  \boldsymbol{c},\gamma_{v}\right)  |\leq
C_{\mathcal{L}}\log[\bar{v}\left(  \log\bar{v}\right)  ^{t}],
\]
which is the desired result. The proof of Lemma \ref{lem: UC-double} is complete.
\end{proof}

\subsection{Simulation Exercise\label{sec: simulation}}

To see how our two-step probit performs in finite samples, we set up the
following simulation exercise. We generate data according to the model
specified in (\ref{eq: factor}), (\ref{eq: xi-CR}) (\ref{eq: xi-CR-normal})
and (\ref{eq: normal-epsilon}) above. We vary $\bar{v}$ while holding
$N_{v}=250$ in each case, to resemble our application. We choose $\dim\left(
W_{vh}\right)  =2$, and pick for $v=1,...,\bar{v}$,%
\[
d_{v}=\left[  v\times U_{1}+0.1\times U_{2}\text{, \ }v\times0.02\times
N\left(  0,1\right)  \right]  ,
\]
where $U_{1}$ and $U_{2}$ are independent uniform [0,1], I.I.D. across
villages. Then generate $P_{vh}\simeq2\times U[0,1]+v\times0.1\times U\left[
0,1\right]  $, $e_{v}\simeq N\left(  0,0.2^{2}\right)  $, $\delta=\left(
1,1\right)  $ and $\tau_{vh},\varepsilon_{vh}\simeq N\left(  0,1\right)  $,
I.I.D. across $v$ and $h$. The multiplication by $v$ in the generation of
$d_{v}$, $P_{vh}$ lead to variation in the distribution of observables across
villages, which produces variation in $\pi_{v}$ necessary to point-identify
$\alpha$. Finally, as in (\ref{eq: factor}), (\ref{eq: xi-CR}) above, we generate%

\[
W_{vh}=d_{v}+\tau_{vh}\text{ \ and \ }\xi_{v}=d_{v}\boldsymbol{\delta}%
^{\prime}+e_{v}=\bar{W}_{v}\boldsymbol{\delta}^{\prime}+e_{v}-\bar{\tau}%
_{v}\boldsymbol{\delta}^{\prime}\text{,}
\]
and the $\pi_{v}$'s by solving the fixed point problem%
\[
\arg\min_{\pi_{v}}\left\{  \pi_{v}-\frac{1}{N_{v}}\sum_{h=}^{N_{v}}1\left(
W_{vh}^{\prime}\left(
\begin{array}
[c]{c}%
1\\
2
\end{array}
\right)  -P_{vh}+2\pi_{v}+\xi_{v}+\varepsilon_{vh}>0\right)  \right\}  ^{2}
\]

Using the $\pi_{v}$s obtained from the previous step. we generate the outcome
as%
\[
A_{vh}=1\left\{  1.75+0.5\times W_{vh}^{1}+W_{vh}^{2}-2\times P_{vh}%
+2\times\pi_{v}+\xi_{v}+\varepsilon_{vh}>0\right\}  \text{,}%
\]
and compute $\sigma_{e}$, $\alpha$ and the price coefficient by the double
probit exercise described in equations (\ref{without}) and (\ref{with}%
)/(\ref{with-approx}) above. For each choice of $\bar{v}$, we repeat the
process over 100 replications. The results of the simulation exercise are
reported in Table 6, where the true values of the parameters used to generate
the data are displayed at the top.%
%TCIMACRO{\FRAME{dtbpF}{4.1187in}{3.7343in}{0pt}{}{}{Figure}%
%{\special{ language "Scientific Word";  type "GRAPHIC";
%maintain-aspect-ratio TRUE;  display "USEDEF";  valid_file "T";
%width 4.1187in;  height 3.7343in;  depth 0pt;  original-width 6.9237in;
%original-height 6.2708in;  cropleft "0";  croptop "1";  cropright "1";
%cropbottom "0";  tempfilename 'RM2JQ500.wmf';tempfile-properties "XPR";}}}%
%BeginExpansion
\begin{center}
\includegraphics[
natheight=6.270800in,
natwidth=6.923700in,
height=3.7343in,
width=4.1187in
]%
{RM2JQ500.wmf}%
\end{center}
%EndExpansion

Based on the root mean square error and quantile values of the estimates, it
is clear that for $\bar{v}=10$, the estimates are more precise than when we
have $\bar{v}=5,25$. As $\bar{v}$ rises from $5$ to $10$, the standard
deviation of the estimated $\alpha$ and the root-mean square error decrease
due to larger variation in $\pi_{v}$ which helps pin down $\alpha$. On the
other hand, the deterioration from $\bar{v}=10$ to $\bar{v}=25$ results from
the need to estimate many more village fixed effects $\gamma_{v}$ in
(\ref{without}). This represents the fundamental trade-off discussed in the
asymptotic results in the paper; a larger $\bar{v}$ helps average out the
$e_{v}$s, but also increases the number of nuisance parameters $\gamma_{v}$ to
be estimated in the first stage probit. Given that $\bar{v}=11$ in our
application, the good performance under $\bar{v}=10$ in our simulation
exercise is reassuring.

\end{document}